\pgfplotsset{width=7cm,compat=1.9}
\let\epsilon=\varepsilon %
\newcommand{\ST}{\textsc{Steiner Tree}\xspace}
\newcommand{\CNC}{\textsc{Connected Vertex Cover}\xspace}
\newcommand{\SF}{\textsc{Steiner Forest}\xspace}
\newcommand{\DST}{\textsc{Directed Steiner Tree}\xspace}
\newcommand{\setcover}{\textsc{Set Cover}\xspace}
\newcommand{\FL}{\textsc{Facility Location}\xspace}
\newcommand{\GST}{\textsc{Group Steiner Tree}\xspace}
\newcommand{\PG}{\textsc{Projection Game Conjecture}\xspace}
\newcommand{\NWST}{\textsc{Node-Weighted Steiner Tree}\xspace}
\newcommand{\NWSF}{\textsc{Node-Weighted Steiner Forest}\xspace}
\newcommand{\NWPCST}{\textsc{Node-Weighted Prize Collecting Steiner Tree}\xspace}
\def\added{\mathop{\mathcal{L}}\nolimits}
\def\opt{\mathop{\rm OPT}\nolimits}
\def\antenna{\mathop{\rm Ant}\nolimits}
\def\killer{\mathop{\rm Killer}\nolimits}
\def\expansion{\mathop{\rm Exp}\nolimits}
\def\explevel{\mathop{\rm Elevel}\nolimits}
\def\auxGraph{\mathop{\rm H_{aux}}\nolimits}
\def\fs{\mathop{F_{l}}\nolimits}
\def\np{\mathop{\rm NP}\nolimits}
\def\p{\mathop{\rm P}\nolimits}
\def\dtime{\mathop{\rm DTIME}\nolimits}
\def\polylog{\mathop{\rm polylog}\nolimits}
\def\zptime{\mathop{\rm ZPTIME}\nolimits}
\newtheorem{theorem}{Theorem} 
\newtheorem{lemma}[theorem]{Lemma} %
\newtheorem{definition}[theorem]{Definition} %
\newtheorem{remark}[theorem]{Remark} %
\newtheorem{claim}[theorem]{Claim} %
\begin{document}

\title{A Constant-Factor Approximation for Quasi-bipartite Directed Steiner Tree on Minor-Free Graphs}

\author[]{Zachary Friggstad\thanks{Supported by an NSERC Discovery Grant and NSERC Discovery Accelerator Supplement Award.} }
\author[]{Ramin Mousavi}

\affil[]{Department of Computing Science, University of Alberta, Edmonton, Canada.
\authorcr
  \{\tt zacharyf@ualberta.ca, mousavih@ualberta.ca\}}

\date{}

\maketitle

\begin{abstract}
We give the first constant-factor approximation algorithm for quasi-bipartite instances of \DST on graphs that exclude fixed minors. In particular, for $K_r$-minor-free graphs our approximation guarantee is $O(r\cdot\sqrt{\log r})$ and, further, for planar graphs our approximation guarantee is 20.

Our algorithm uses the primal-dual scheme. We employ a more involved method of determining when to buy an edge while raising dual variables since, as we show, the natural primal-dual scheme fails to raise enough dual value to pay for the purchased solution. As a consequence, we also demonstrate integrality gap upper bounds on the standard cut-based linear programming relaxation for the \DST instances we consider.
\end{abstract}

\section{Introduction}
\sloppy
In the \DST (DST) problem, we are given a directed graph $G=(V,E)$ with edge costs $c(e)\geq 0$ for all $e\in E$, a root node $r\in V$, and a collection of terminals $X\subseteq V\setminus\{r\}$. The nodes in $V\setminus(X\cup \{r\})$ are called {\em Steiner} nodes. The goal is to find a minimum cost subset $F\subseteq E$ such that there is an $r-t$ path using only edges in $F$ for every terminal $t\in X$. Note any feasible solution that is inclusion-wise minimal must be an arborescence rooted at $r$. Throughout, we let $n$ denote $|V|$. 

One key aspect of DST lies in the fact that it generalizes many other important problems, e.g. \setcover, (non-metric, multilevel) \FL, and \GST. Halperin and Krauthgamer \cite{halperin2003polylogarithmic} showed \GST cannot be approximated within $O(\log^{2-\epsilon}n)$ for any $\epsilon>0$  unless $\np\subseteq\dtime{(n^{\polylog{(n)}})}$ and therefore the same result holds for DST. 

Building on a height-reduction technique of Calinescu and Zelikovsky \cite{calinescu,zelikovsky1997series}, Charikar et al. give the best approximation for DST which is an $O(|X|^\epsilon)$-approximation for any constant $\epsilon > 0$ \cite{charikar1999approximation} and also an $O(\log^3 |X|)$-approximation in $O(n^{{\rm polylog}(k)})$ time (quasi-polynomial time).
More recently, Grandoni, Laekhanukit, and Li \cite{grandoni2019log2} obtained a quasi-polynomial time $O(\frac{\log^2|X|}{\log\log |X|})$-approximation factor for \DST which is the best possible for quasi-polynomial time algorithms, assuming both the \PG and $\np\nsubseteq \bigcap_{0<\delta<1}\zptime(2^{n^\delta})$. Ghuge and Nagarajan \cite{ghuge2020quasi} studied a variant of DST called the \textsc{Directed Tree Orienteering} problem and presented an $O(\frac{\log |X|}{\log\log|X|})$-approximation in quasi-polynomial time which yields the same approximation guarantee as in \cite{grandoni2019log2}.

Methods based on linear programming have been less successful. Zosin and Khuller \cite{zosin2002directed} showed the integrality gap of a natural flow-based LP relaxation is $\Omega(\sqrt{|X|})$ but $n$, the number of vertices, in this example is exponential in terms of $|X|$. More recently, Li and Laekhanukit \cite{li2021polynomial} provided an example showing the integrality gap of this LP is at least polynomial in $n$.
On the positive side, \cite{rothvoss2011directed} shows for $\ell$-layered instances of DST that applying $O(\ell)$ rounds of the Lasserre hierarchy to a slight variant of the natural flow-based LP relaxation yields a relaxation with integrality gap $O(\ell \cdot \log |X|)$. This was extended to the LP-based Sherali-Adams and Lov\'{a}sz-Schrijver hierarchies by \cite{friggstad2014linear}.


We consider the cut-based relaxation \eqref{lp:primal} for DST, which is equivalent to the flow-based relaxation considered in \cite{zosin2002directed,li2021polynomial}; the flow-based relaxation is an extended formulation of \eqref{lp:primal}. Let $\delta^{in}(S)$ be the set of directed edges entering a set $S \subseteq V$,
\begin{alignat}{3}
\text{minimize:} & \quad & \sum_{e\in E} c(e) \cdot x_e \tag{{\bf Primal-LP}} \label{lp:primal} \\
\text{subject to:} && x(\delta^{in}(S)) \geq \quad & 1 \quad && \forall  S\subseteq V\setminus\{r\},~S\cap X\neq \emptyset \label{primal const} \\
&& x \geq \quad & 0 \notag
\end{alignat}

It is useful to note that if $|X|=1$ (the shortest $s-t$ path problem) or $X\cup\{r\}=V$ (the minimum cost arborescence problem), the extreme points of \eqref{lp:primal} are integral, see \cite{papadimitriou1998combinatorial} and \cite{edmonds1967optimum} respectively.

The undirected variant of \ST has seen more activity\footnote{One usually does not specify the root node in \ST, the goal is simply to ensure all terminals are connected.}. A series of papers steadily improved over the simple 2-approximation \cite{zelikovsky199311, karpinski1997new, promel2000new, robins2005tighter} culminating in a $\ln{4}+\epsilon$ for any constant $\epsilon>0$ \cite{byrka2013steiner}.
Bern and Plassmann \cite{bern1989steiner} showed that unless $\p=\np$ there is no approximation factor better than $\frac{96}{95}$ for \ST. However, there is a PTAS for \ST on planar graphs \cite{borradaile2009n} and more generally \cite{bateni2011approximation} obtains a PTAS for \SF on graphs of bounded-genus.

Another well-studied restriction of \ST is to quasi-bipartite graphs. These are the instances where no two Steiner nodes are connected by an edge (i.e., $V\setminus (X\cup \{r\})$ is an independent set). Quasi-bipartite instances were first studied by Rajagopalan and Vazirani \cite{rajagopalan1999bidirected} in order to study the bidirected-cut relaxation of the \ST problem: this is exactly \eqref{lp:primal} where we regard both directions of an undirected edge as separate entities. Feldmann et al. \cite{feldmann2016equivalence} studied \ST on graphs that do not have an edge-induced claw on Steiner vertices, i.e., no Steiner vertex with three Steiner neighbours, and presented a faster $\ln(4)$-approximation than the algorithm of \cite{byrka2013steiner}. Currently, the best approximation in quasi-bipartite instances of \ST is $\frac{73}{60}$-approximation \cite{goemans2012matroids}.

Naturally, researchers have considered quasi-bipartite instances of DST. Hibi and Fujito \cite{hibi2012multi} presented an $O(\log |X|)$-approximation algorithm for this case. Assuming $\p\neq\np$, this result asymptotically matches the lower bound $(1-o(1))\cdot\ln |X|$ for any $\epsilon>0$; this lower bound comes from the hardness of \setcover \cite{feige1998threshold, dinur2014analytical} and the fact that the quasi-bipartite DST problem generalizes the \setcover problem. Friggstad, K\"{o}nemann, and Shadravan \cite{friggstad2016logarithmic} showed that the integrality gap of \eqref{lp:primal} is also $O(\log |X|)$ by a primal-dual algorithm and again this matches the lower bound on the integrality gap of this LP up to a constant. 

More recently, Chan et al. \cite{chan2019polylogarithmic} studied the $k$-connected DST problem on quasi-bipartite instances in which the goal is to find a minimum cost subgraph $H$ such that there are $k$ edge-disjoint paths (in $H$) from $r$ to each terminal in $X$. They gave an upper bound of $O(\log |X|\cdot\log k)$ on the integrality gap of the standard cut-based LP (put $k$ instead of $1$ in the RHS of the constraints in \eqref{lp:primal}) by presenting a polynomial time randomized rounding algorithm.

It is worth noting that Demaine, Hajiaghayi, and Klein \cite{demaine2014node} show that if one takes a standard flow-based
relaxation for DST in planar graphs and further constraints
the flows to be ``non-crossing'', then the solution
can be rounded to a feasible DST solution while losing only a constant factor in the cost. To date, we do not know how to compute
a low-cost, non-crossing flow in polynomial time for DST instances on planar graphs.


\subsection{Primal-Dual Approximations for Steiner Tree Problems}

Consider the \NWST (NWST) problem which is similar to undirected \ST except the weight function is on the Steiner vertices instead of edges and can also be viewed as a special case of DST. Guha et al. \cite{guha1999efficient} presented a primal-dual algorithm with guarantee of $O(\ln n)$ which is asymptotically tight since NWST also generalizes set cover. K\"onemann, Sadeghian, and Sanit\`a \cite{konemann2013lmp} give an $O(\log n)$-approximation via primal-dual framework for a generalization of NWST called \NWPCST\footnote{A key aspect of their algorithm is that it is also {\em Lagrangian multiplier preserving}.}.

Demaine, Hajiaghayi, and Klein \cite{demaine2014node} considered a generalization of NWST called \NWSF (NWSF) on planar graphs and using the generic primal-dual framework of Goemans and Williamson \cite{goemans1997primal} they showed a $6$-approximation and further they extended their result to minor-free graphs. Later Moldenhauer \cite{moldenhauer2013primal} simplified their analysis and showed an approximation guarantee of $3$ for NWSF on planar graphs. 

An interesting, non-standard use of the primal-dual scheme is in the work of Chakrabarty, Devanur, and Vazirani \cite{chakrabarty2011new} for undirected, quasi-bipartite instances of \ST. They introduce a new ``simplex-embedding'' LP relaxation and their primal-dual scheme raises dual variables with different rates. It is worth noting that although they also obtain upper bound for the integrality gap of the so-called {\em bidirected cut relaxation} (BCR) of quasi-bipartite instances of \ST, the algorithm and the simplex-embedding LP relaxation itself are valid only in the undirected setting.


\subsection{Our contributions}
We present the first concrete result for approximating DST on planar graphs beyond what was known in general graphs. Namely, we present a primal-dual algorithm for DST on quasi-bipartite, minor-free graphs. 

Generally, it is difficult to effectively utilize primal-dual algorithms in directed network design problems. This is true in our setting as well: we begin by showing a standard primal-dual algorithm (similar to the primal-dual algorithm for the minimum-cost arborescence problem) does not grow sufficiently-large dual to pay for the set of edges it purchases within any constant factor.

We overcome this difficulty by highlighting different roles for edges in connecting the terminals to the root. For some edges, we maintain two slacks: while raising dual variables these two slacks for an edge may be filled at different rates (depending on the edge's role for the various dual variables being raised) and we purchase the edge when one of its slacks is exhausted. Furthermore, unlike the analysis of standard primal-dual algorithms where the charging scheme is usually more local (i.e., charging the cost of purchased edges to the dual variables that are \say{close by}), we need to employ a more global charging scheme. Our approach also provides an $O(1)$ upper bound on the integrality gap of the natural cut-based relaxation \eqref{lp:primal} for graphs that exclude a fixed minor.





We summarize our results here.
\begin{theorem}\label{main thm minor free graphs}
There is an $O(r\cdot\sqrt{\log r})$-approximation algorithm for \DST on quasi-bipartite, $K_r$-minor free graphs. Moreover, the algorithm gives an upper bound of $O(r\cdot\sqrt{\log r})$ on the integrality gap of \eqref{lp:primal} for DST instances on such graphs.
\end{theorem}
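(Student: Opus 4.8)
The plan is to run a primal-dual algorithm that grows dual variables $y_S$ for the cut constraints \eqref{primal const}, in the style of Goemans–Williamson, but with a modified edge-buying rule tailored to quasi-bipartite directed instances. I would start by fixing the structure of a minimal feasible solution: it is an arborescence rooted at $r$, and since the graph is quasi-bipartite, every Steiner node on this arborescence has all of its in-edges coming from $X\cup\{r\}$ and all of its out-edges going to $X$ — so a Steiner node acts as a ``relay'' of depth one. The algorithm maintains a set of ``active'' components (moats) each containing at least one terminal not yet connected to $r$, raises the corresponding duals uniformly, and when a dual constraint $\sum_{S : e\in\delta^{in}(S)} y_S \le c(e)$ becomes tight, we consider buying $e$. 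The crucial departure from the textbook scheme, flagged already in the abstract, is that the naive rule (buy a tight edge entering an active moat, then merge) does not collect enough dual to pay for the bought arborescence; so I would instead buy edges in bundles — conceptually, when a moat's boundary becomes tight, buy a whole ``antenna'' consisting of a tight edge into a Steiner relay together with the edges out of that relay needed to reach the active moat — and charge all of these to the dual grown by the moats they connect. (The macros \antenna, \token, \killer, \explevel in the preamble strongly suggest the authors' bookkeeping: antennas, tokens assigned to moats, and ``killer'' edges that deactivate moats.)

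The next step is the accounting, and this is where minor-freeness enters. After the reverse-delete / clean-up pass producing the final edge set $F$, I would bound $c(F) = \sum_{e\in F}\sum_{S: e\in\delta^{in}(S)} y_S = \sum_S y_S \cdot |F \cap \delta^{in}(S)|$. In an undirected node-weighted planar setting (Demaine–Hajiaghayi–Klein, Moldenhauer) one shows the relevant contracted graph, whose faces/vertices correspond to active moats at a given moment, has bounded average degree because it is planar (or $K_r$-minor-free), giving an $O(1)$ (resp. $O(r\sqrt{\log r})$) bound on the average of $|F\cap\delta^{in}(S)|$ over active moats. I would replicate this: at each moment in the dual growth, contract each current moat to a single vertex and delete inactive parts; the resulting directed graph, once we symmetrize/forget orientations, is a minor of the original (contraction and deletion only), hence $K_r$-minor-free, hence has a vertex of degree $O(r\sqrt{\log r})$ (using the Kostochka/Thomason bound $|E|/|V| = O(r\sqrt{\log r})$ for $K_r$-minor-free graphs, and the absolute constant for planar graphs that yields $20$ after constant-chasing). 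A charging argument — each bought antenna contributes a bounded number of boundary crossings to the active moats it touches, and moats that are ``killed'' rather than merged are handled separately via the $\killer$/$\token$ mechanism — then converts this degree bound into $c(F) \le \alpha \cdot \sum_S y_S$ with $\alpha = O(r\sqrt{\log r})$.

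Finally, since $\{y_S\}$ is feasible for the dual of \eqref{lp:primal}, weak duality gives $\sum_S y_S \le \opt_{LP} \le \opt$, so $c(F)\le \alpha\cdot\opt_{LP}$, proving both the approximation guarantee and the integrality-gap bound simultaneously. For the planar refinement to the explicit constant $20$, I would track every constant through the antenna-charging and the planar average-degree argument (planar graphs on $v\ge 3$ vertices have at most $3v-6$ edges, and the relevant contracted graph may need the simple-graph assumption enforced by merging parallel moat boundaries), optimizing the split between ``how many edges an antenna buys per unit of dual'' and ``how many moats share that dual.''

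I expect the main obstacle to be precisely the step the abstract singles out: designing the edge-buying rule so that the dual actually pays for the purchase. In the undirected case a bought edge always straddles two moats and merges them, so the boundary-crossing count is naturally controlled; in the directed quasi-bipartite case a Steiner relay can have many out-edges all ``paid'' by a single in-edge's tight dual, and a moat can be deactivated (because its terminals get connected through some other moat's purchase) without ever having merged, leaving its grown dual apparently unspent. Getting the token/killer assignment to injectively charge every unit of bought cost to grown dual — while keeping the per-moat charge down to the minor-free average degree — is the technical heart, and it is also what forces the factor to be $O(r\sqrt{\log r})$ rather than $O(1)$ for general minor-free classes.
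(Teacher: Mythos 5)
Your proposal correctly identifies the framework (primal-dual with moat growth, reverse delete, a per-iteration charging argument that feeds into the $K_r$-minor-free density bound, and weak duality giving the integrality-gap statement), but it stops short of the two ideas that actually make the paper's proof work, and the substitutes you sketch would not survive the paper's own bad example. First, the algorithmic fix is not ``buy a tight edge together with an antenna bundle out of a Steiner relay.'' In the paper, an antenna edge is simply a single Steiner-to-terminal edge, and the real modification is that every non-antenna edge carries \emph{two} buckets (killer and expansion), each of size $c(e)$: a moat pays into the bucket corresponding to the role the edge plays for that moat, an edge is bought only when one bucket fills, and consequently the dual constraint may be violated by a factor $2$ (the algorithm works with the modified dual having right-hand side $2c(e)$). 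This re-payment mechanism is exactly what defeats the bad example, where $k$ simultaneously tight ``killer'' edges must not all be bought on the strength of one moat's dual; your bundling rule does not address that situation, and your accounting identity $c(F)=\sum_S y_S\,|F\cap\delta^{in}(S)|$ presupposes tightness in the \emph{original} dual, which fails for the actual algorithm (the paper's accounting is bucket-based, and the factor $2$ from the modified dual is where the final $2\alpha$ comes from).

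Second, and more importantly, the step you describe as ``replicate the planar/minor-free average-degree argument on the contracted graph'' is not available until one proves that the contracted graph at iteration $l$ has only $O(|\mathcal{A}_l|)$ vertices. The tails of final-solution edges being paid at iteration $l$ lie outside active moats, and their heads may be Steiner nodes of moats outside the SCC parts, so bounding the per-moat crossing count by average degree requires first bounding, per iteration, the number of final killer edges currently being paid by $|\mathcal{A}_l|$ (the paper's alive-terminal argument: each killer edge is injectively charged to a terminal it ``kills,'' and alive terminals are in bijection with active moats) and the number of final expansion edges currently being paid by $2|\mathcal{A}_l|$ (the paper's auxiliary graph $\auxGraph$, the modified shortest-path arborescence $T^*$ in which every expansion edge becomes ``good,'' and the two-tokens-per-active-vertex charging). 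You explicitly flag this as ``the technical heart'' but do not supply it; without these counting lemmas the minor-free density bound alone does not yield inequality of the form $\sum_{A\in\mathcal{A}_l}|\Delta^l(A)|\le\alpha|\mathcal{A}_l|$, which is precisely what the bad example exploits. So the proposal is a reasonable plan whose missing pieces coincide with the paper's main contributions rather than with routine verification.
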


\begin{remark}
The running time of our algorithm is $O(|V|^c)$ where $c$ is a fixed constant that is independent of $r$. Also, we only require that every (simple) minor of the graph has bounded average degree to establish our approximation guarantee. In particular, if every minor of the input (quasi-bipartite) graph has degree at most $d$, then the approximation factor will be $O(d)$.
\end{remark}



\begin{theorem}\label{main thm planar graphs}
There is a $20$-approximation algorithm for \DST on quasi-bipartite, planar graphs. Moreover, the algorithm gives an upper bound of $20$ on the integrality gap of \eqref{lp:primal} for \DST instances on such graphs. 
\end{theorem}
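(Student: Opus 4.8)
Every planar graph is $K_5$-minor-free, so Theorem~\ref{main thm minor free graphs} already yields \emph{some} constant-factor approximation; the content of Theorem~\ref{main thm planar graphs} is to drive the constant down to $20$ by replacing the generic density bound for $K_r$-minor-free graphs with Euler's formula, exploiting planarity and quasi-bipartiteness directly. The plan is therefore to run the same non-standard primal-dual procedure: it maintains a laminar family of active moats $S$ (with $r\notin S$ and $S\cap X\neq\emptyset$), raises their dual variables uniformly, and adds an edge to the current solution once the buying rule — the token/antenna bookkeeping designed in Section~\ref{standard primal-dual fails} to make up the dual shortfall of the naive scheme — becomes tight, interleaved with a reverse-delete clean-up. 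This produces a feasible arborescence $F$ together with a feasible dual $y$.

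The approximation guarantee follows, as usual, from a ``snapshot'' charging argument: it suffices to show that at every infinitesimal moment of dual growth, the total number of bought edges entering currently active moats, after accounting separately for the antenna edges with their own bounded multiplier, is at most $20$ times the number of active moats. To prove this local inequality I would contract each active moat and each already-paid component of $F$ to a single vertex; the edges of $F$ that cross active moats together with these super-vertices form a minor $H$ of $G$, hence a planar graph. Because the Steiner vertices form an independent set and each contracted non-trivial moat effectively plays the role of a terminal, $H$ decomposes into a ``core'' that is bipartite between terminal/component vertices and Steiner vertices, plus a bounded number of antenna edges; Euler's formula then gives $|E(\mathrm{core})|\le 2|V(\mathrm{core})|-4$. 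Since every inclusion-minimal feasible solution is an arborescence, a degree-counting argument on $H$ (most active-moat vertices have small in-degree) bounds the average number of crossing edges per active moat by a small constant, and combining this with the multiplier paid by the antenna/token mechanism — tuning the single threshold parameter of the buying rule to balance the two contributions — yields exactly $20$. Reading the resulting chain $c(F)\le 20\sum_S y_S\le 20\cdot\opt$ and recalling that $\sum_S y_S$ is a lower bound on the optimum of \eqref{lp:primal} gives both the approximation ratio and the integrality-gap bound.

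The main obstacle is this local inequality, i.e.\ the contraction/Euler step. The delicate points are: verifying that the contractions genuinely yield a planar minor even though the moats are defined by \emph{directed} in-cuts and $F$ is an arborescence; isolating the antenna edges precisely enough that the remaining core is bipartite-planar, so that the sharper $2n-4$ bound (rather than only $3n-6$) applies; and handling degenerate active moats — singletons, Steiner-only moats, or moats sharing boundary edges — without double counting. Getting the constants in all three of these simultaneously tight is exactly what separates the clean $20$ from the much weaker bound one obtains by plugging $r=5$ into Theorem~\ref{main thm minor free graphs}.
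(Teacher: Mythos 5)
Your overall route coincides with the paper's: run the bucket-based primal-dual algorithm, prove a per-iteration ("snapshot") inequality $\sum_{A\in\mathcal{A}_l}|\Delta^{l}(A)|\le\alpha\cdot|\mathcal{A}_l|$, and exploit that a suitably contracted graph is bipartite planar so that Euler's formula ($|E|\le 2|V|$, Lemma~\ref{size of bipartite planar}) replaces the $O(r\sqrt{\log r})$ density bound. But there is a genuine gap at exactly the step you wave at with "a degree-counting argument on $H$ (most active-moat vertices have small in-degree)": Euler's formula only converts a bound on the number of \emph{vertices} of the contracted graph into a bound on its edges, and you never bound that vertex count. Besides one super-vertex per active moat, the contracted graph must retain a Steiner vertex for every edge of $\overline F$ whose killer or expansion bucket is currently being paid, and a priori there could be many more such vertices than active moats --- this is precisely the failure mode of the naive primal-dual shown in Section~\ref{standard primal-dual fails}, where few moats simultaneously pay for many edges. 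The paper closes this with two counting lemmas that your proposal does not supply: the number of killer edges of $\overline F$ being paid at iteration $l$ is at most $|\mathcal{A}_l|$ (the alive/dead-terminal assignment, Lemma~\ref{bound for the number of killer edges}), and the number of expansion edges being paid is at most $2\cdot|\mathcal{A}_l|$ (Lemma~\ref{the bound on the number of expansion edges}); the latter is the technical heart of the paper, requiring the auxiliary graph $\auxGraph$, the modified shortest-path tree, the good/bad classification of expansion edges, and the two-token charging scheme. Planarity alone cannot stand in for these bounds.

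Relatedly, your constant $20$ is asserted rather than derived, and there is no "threshold parameter of the buying rule" to tune. In the paper it falls out as $2\cdot(2\cdot 4+1+1)$: each dual edge-constraint may be violated by a factor $2$ (the two buckets), $|V(G')|\le 4\cdot|\mathcal{A}_l|$ by the two lemmas above plus one super-vertex per moat (Claim~\ref{number of vertices in G'}), bipartite planarity gives $|E(G')|\le 8\cdot|\mathcal{A}_l|$, at most one further killer/expansion edge enters each contracted SCC directly, and at most one antenna edge per moat survives pruning (Lemma~\ref{number of antenna edges}). Note also that the snapshot inequality must count an edge once for \emph{every} moat paying it, since moats can share Steiner nodes; the paper handles this multiplicity by charging each shared Steiner node to the $\fs$-edges it sends into the SCC of each moat containing it (Claim~\ref{counting total indegree of active moats}), another point your sketch lists as "delicate" but leaves unresolved.
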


We also verify that \ST (and, thus, \DST) remains $\np$-hard even when restricted to quasi-bipartite, planar instances. Similar results are known, but we prove this one explicitly since we were not able to find this precise hardness statement in any previous work.

\begin{theorem}\label{hardness}
\ST instances on bipartite planar graphs where the terminals are on one side and the Steiner nodes are on the other side is $\np$-hard.
\end{theorem}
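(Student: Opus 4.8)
The plan is to reduce from a known \np-hard problem via a reduction that naturally produces a bipartite planar instance with terminals on one side and Steiner nodes on the other. The most convenient starting point is \setcover, or more precisely its restriction \textsc{Planar Vertex Cover} or \textsc{Planar Dominating Set}, since \ST on such bipartite graphs is essentially a geometric/combinatorial covering problem: a terminal $t$ is connected to the root exactly when the solution includes some Steiner node adjacent to $t$ (plus the edge from the root), so a minimal Steiner tree corresponds to choosing a set of Steiner nodes that dominates all terminals. Thus the natural target is \textsc{Dominating Set} or \textsc{Set Cover} where the incidence structure is planar. I would first recall that \textsc{Connected Vertex Cover}, or plain \textsc{Vertex Cover}, is \np-hard on planar graphs (Garey--Johnson), and then massage this into the required bipartite-planar form.

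Concretely, here is the reduction I would carry out. Start from an instance $G=(V,E)$ of \textsc{Vertex Cover} on a planar graph, with budget $k$. Build a bipartite graph $H$ as follows: introduce a root $r$, for each vertex $v\in V$ a Steiner node $s_v$, and for each edge $e\in E$ a terminal node $t_e$; connect $t_e$ to $s_u$ and $s_v$ when $e=\{u,v\}$, and connect $r$ to every $s_v$. Assign cost $1$ to every edge $r s_v$ and cost $0$ (or a tiny uniform cost, or cost $1$ as well with an appropriate budget bookkeeping) to every edge $s_v t_e$. Then a Steiner tree connecting all terminals to $r$ of cost $\le k$ exists if and only if $G$ has a vertex cover of size $\le k$: given a vertex cover $C$, buy $r s_v$ for $v\in C$ and one incident $s_v t_e$ for each edge; conversely, the set of $s_v$ reachable from $r$ in any feasible solution must cover every edge. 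The key point to verify is planarity of $H$: since $G$ is planar, its vertex--edge incidence graph (the subdivision of $G$) is planar, and adding the root $r$ connected to all the $s_v$ nodes can be done while preserving planarity only if the $s_v$ all lie on a common face — which is not automatic. This is the main obstacle.

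To handle the planarity obstruction I would avoid attaching $r$ to all vertex-gadgets at once. One clean fix: route through \textsc{Planar Set Cover} or use a "spider/path of roots" gadget, but a slicker route is to start instead from a problem whose instances already come with the needed outerplanar-like structure. For instance, reduce from \textsc{Vertex Cover on planar graphs of bounded degree} and use the fact (or a direct argument) that one may further assume a Hamiltonian-path-like ordering; alternatively, observe that \ST does not require a single root to be adjacent to everything — we can place $r$, subdivide, and use that connecting terminals only needs an $r$--$t$ path, so we can build a planar "backbone" linking the root to each $s_v$ through zero-cost Steiner-free structure. Actually the cleanest is: take the planar \textsc{Vertex Cover} instance, add a new vertex $r_0$ adjacent to all original vertices is \emph{not} planar in general, so instead reduce from \textsc{Steiner Tree on planar graphs} itself (known \np-hard) and show that the quasi-bipartite / bipartite structure can be enforced by subdividing every Steiner--Steiner edge once, placing a new terminal in the middle — but that changes feasibility. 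The honest resolution, and what I expect the paper does, is to start from a covering problem that is \np-hard with a planar incidence structure in which the universe elements and sets already admit a planar bipartite incidence graph together with an extra apex (the root): such problems exist, e.g. \textsc{Planar Dominating Set} phrased via its closed-neighbourhood set system, or one invokes the \np-hardness of \textsc{Connected Dominating Set} / \textsc{Steiner Tree} on planar \emph{bipartite} graphs and then argues the "terminals on one side" condition is met by taking the side containing the elements to be covered.

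In summary, the proof is: (1) fix a source problem that is \np-hard on planar graphs and whose solutions correspond to choosing one side of a bipartite planar incidence structure (I would use \textsc{Vertex Cover} / \textsc{Set Cover} with a planarity-preserving way to attach a root, most likely via a known \np-hardness of \textsc{Steiner Tree on planar bipartite graphs}); (2) define $H$ with a root, a Steiner node per "set", a terminal per "element", unit-cost root-to-Steiner edges and zero-cost Steiner-to-terminal edges; (3) prove the cost of an optimal Steiner tree in $H$ equals the optimal solution value in the source instance, using that reachability from $r$ picks out exactly a feasible cover; (4) check $H$ is bipartite (immediate from the construction: root and terminals on one side... wait, root must go somewhere — place the root among the terminal side, or treat it as an isolated extra vertex of the Steiner side, whichever keeps bipartiteness), planar (the crux, handled by the structure of the source instance or a small outerplanar root gadget), and has terminals and Steiner nodes on opposite sides. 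The main obstacle throughout is simultaneously achieving planarity \emph{and} the bipartite-with-root structure; everything else is routine gadget bookkeeping, so I would spend the bulk of the write-up carefully exhibiting a planar embedding of $H$.
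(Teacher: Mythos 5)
Your reduction as written has a genuine gap, and it is exactly the one you flagged yourself: planarity of the constructed graph $H$ is never established. Attaching a root $r$ to every Steiner node $s_v$ is an apex construction that is non-planar in general, and none of the alternatives you float (a Hamiltonian-like ordering, an outerplanar root gadget, subdividing Steiner--Steiner edges with new terminals -- which, as you note, changes feasibility) is actually carried out. Since the whole point of the theorem is the planar$+$bipartite structure, "the crux, handled by the structure of the source instance or a small root gadget" cannot be left as a placeholder; as it stands there is no working reduction.

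The idea you are missing is that \ST is unrooted, so no apex is needed at all: connectivity can be demanded of the \emph{cover} rather than supplied by a root. The paper reduces from \CNC on planar graphs (NP-complete by Garey and Johnson), not plain \textsc{Vertex Cover}: given planar $G=(V,E)$ and budget $k$, simply subdivide every edge $e$ by a terminal $x_e$; the resulting graph $G'$ is automatically planar and bipartite with terminals $\{x_e\}$ on one side and the original vertices as Steiner nodes on the other. Then $G'$ has a Steiner tree with $k+|E(G)|-1$ edges iff $G$ has a connected vertex cover of size $k$: a connected cover $S$ makes $G'[S\cup X]$ connected, hence spanned by a tree of that size, and conversely the Steiner vertices of any such tree form a cover whose induced subgraph is connected because any two of them are joined in the tree through alternating subdivision terminals. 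In your scheme the root both enforces connectivity and destroys planarity; switching the source problem to the \emph{connected} variant removes the need for the root and with it the entire obstruction, at the price of the (known) hardness of \CNC on planar graphs rather than plain planar \textsc{Vertex Cover}. Your step (3) correspondence is otherwise in the right spirit, but without this switch the proof does not go through.
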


The above hardness result shows DST instances on quasi-bipartite, planar graphs is $\np$-hard as well.

\subsection{Organization of the paper}
In Section \ref{preliminaries}, we state some definition and notation where we use throughout the paper. In Section \ref{standard primal-dual fails} we present an example that shows the most natural primal-dual algorithm fails to prove our approximation results, this helps the reader understand the key difficulty we need to overcome to make a primal-dual algorithm work and motivates our more refined approach. In Section \ref{the algorithm} we present our primal-dual algorithm and in Section \ref{the analysis} we present the analysis. The analysis contains three main subsections where in each section we present a charging scheme. The first two charging schemes are straightforward but the last one requires some novelty. Finally, we put all these charging schemes together in Subsection \ref{putting everything together} and prove Theorems \ref{main thm minor free graphs} \& \ref{main thm planar graphs}. Finally, in Section \ref{np hardness} we show the hardness result (Theorem \ref{hardness}).

\section{Preliminaries}\label{preliminaries}
In this paper, graphs are simple directed graphs unless stated otherwise. By simple we mean there are no parallel edges\footnote{Two edges are parallel if their endpoints are the same and have the same orientation.}. Note that we can simply keep the cheapest edge in a group of parallel edges if the input graph is not simple; the optimal value for DST problem does not change.

Throughout this paper, we fix a directed graph $G=(V,E)$, a cost function $c:E\to\mathbb{R}_{\geq 0}$, a root $r$, a set of terminals $X\subseteq V\setminus \{r\}$, and no edge between any two Steiner nodes, as the input to the DST problem. We denote the optimal value for DST instance by $\opt$.

Given a subgraph $G'$ of $G$ we define $\delta^{in}_{G'}(S)=\{e=(u,v)\in E(G'):~u\in V\setminus S,~v\in S\}$ (i.e., the set of edges in $G'$ entering $S$) we might drop the subscript if the underlying subgraph is $G$ itself. For an edge $e=(u,v)$, we call $u$ the \textbf{tail} and $v$ the \textbf{head} of $e$. By a \textbf{dipath} we mean a directed path in the graph. By \textbf{SCCs} of $F\subseteq E$ we mean the strongly connected components of $(V,F)$ that contains either the root node or at least one terminal node. So for example, if a Steiner node is a singleton strongly connected component of $(V,F)$ then we do not refer to it as an SCC of $F$. Due to the quasi-bipartite property, these are the only possible strongly connected components in the traditional sense of $(V,F)$ that we will not call SCCs. Observe $F$ is a feasible DST solution if and only if each SCC is reachable from $r$.

An {\em arborescence} $T=(V,E)$ rooted at $r\in V$ is a directed tree oriented away from the root such that every vertex in $V$ is reachable from $r$. 
By {\em height} of a vertex $u$ in $T$ we mean the number of edges between $r$ (the root) and $u$ in the dipath from $r$ to $u$ in $T$.
We let $T_u$ denotes the subtree of $T$ rooted at $u$.

Our discussions, algorithm, and the analysis rely on the concept of {\em active sets}, so we define them here.

\begin{definition}[Violated set]\label{violated set}
Given a DST instance and a subset $F\subseteq E$, we say $S\subseteq V\setminus\{r\}$ where $S\cap X\neq \emptyset$ is a {\em violated} set with respect to $F$ if $\delta^{in}_F(S)=\emptyset$.
\end{definition}

\begin{definition}[Active set]\label{active moats}
Given a DST instance and a subset $F\subseteq E$, we call a minimal violated set (no proper subset of it, is violated) an {\em active} set (or active moat) with respect to $F$.
\end{definition}

We use the following definition throughout our analysis and (implicitly) in the algorithm.

\begin{definition}[$F$-path]\label{F_i-path}
We say a dipath $P$ is a $F$-path if all the edges of $P$ belong to $F\subseteq E$. We say there is a $F$-path from a subset of vertices to another if there is a $F$-path from a vertex of the first set to a vertex of the second set.
\end{definition}

In quasi-biparitite graphs, active moat have a rather \say{simple} structure, our algorithm will leverage the following properties.
\begin{lemma}\label{remark on active moats}
Consider a subset of edges $F$ and let $A$ be an active set with respect to $F$. Then, $A$ consists of exactly one SCC $C_A$ of $F$, and any remaining in $A\setminus C_A$ are Steiner nodes. Furthermore, for every Steiner node in $A\setminus C_A$ there are edges in $F$ that are oriented from the Steiner node to $C_A$.
\end{lemma}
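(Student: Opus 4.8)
The plan is to analyze the structure of an active set $A$ with respect to $F$ by combining the minimality of $A$ with the quasi-bipartite hypothesis. First I would record the basic consequence of $A$ being violated with respect to $F$: no edge of $F$ enters $A$, so in particular no edge of $F$ enters any SCC of $F$ that is contained in $A$, and conversely any SCC of $F$ that touches $A$ must lie entirely inside $A$ (if an SCC $C$ met both $A$ and $V\setminus A$, strong connectivity would force an $F$-edge from $V\setminus A$ into $A$, contradicting $\delta^{in}_F(A)=\emptyset$). Since $A\cap X\neq\emptyset$, the terminal(s) in $A$ lie in SCCs of $F$ (a terminal is always an SCC of $F$ in the sense defined, even as a singleton), so $A$ contains at least one SCC of $F$.

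Next I would use minimality to show $A$ contains exactly one SCC. Suppose $A$ contained two distinct SCCs $C_1, C_2$ of $F$, and say $C_1 \cap X \neq \emptyset$. I claim $C_1$ together with some of the remaining Steiner nodes of $A$ already forms a violated set, contradicting minimality of $A$ unless it equals $A$. The point is to identify which vertices of $A$ are ``forced'' to be present: a Steiner node $s \in A$ can only fail to be removable if deleting it would create an incoming $F$-edge to the smaller set, i.e.\ if there is an $F$-edge from $s$ into the part we want to keep. So I would define $A'$ to be $C_1$ together with every Steiner node $s\in A\setminus C_1$ such that there is an $F$-path from $s$ to $C_1$ (equivalently, since Steiner nodes have no outgoing edges to other Steiner nodes except — wait, they do have outgoing edges, just the tails and heads among Steiner nodes are restricted; here I should use that an edge out of a Steiner node goes to $X\cup\{r\}$, so any $F$-path from $s$ reaches an SCC in one step). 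Then $\delta^{in}_F(A') = \emptyset$: any $F$-edge entering $A'$ would either enter $C_1$ (impossible, no $F$-edge enters $A$ hence none enters $C_1$ — careful, an edge from $A\setminus A'$ into $C_1$ is possible in general, so I need that such edges don't exist, which is exactly where I use that the only $F$-edges into $C_1$ come from inside $A$, and a vertex sending an $F$-edge to $C_1$ is by construction put into $A'$), or enter one of the chosen Steiner nodes (impossible, since all $F$-edges into a Steiner node come from outside $A$ as $\delta^{in}_F(A)=\emptyset$ would be violated — more precisely, edges into a Steiner node have tails in $X\cup\{r\}$, and if that tail were in $A$ it would be in some SCC of $A$; I need to rule out an $F$-edge from $C_2$ or from another Steiner-adjacent vertex of $A$ into $s$). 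This case analysis, using quasi-bipartiteness to say edges out of Steiner nodes land in $X\cup\{r\}$ and hence in SCCs, is the crux; it should yield $A' \subsetneq A$ is violated unless $A = A'$, forcing $C_2 \not\subseteq A$, a contradiction. Hence $A$ has exactly one SCC, call it $C_A$.

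Then I would show the rest of $A$ consists of Steiner nodes, each with an $F$-edge to $C_A$. Any non-Steiner vertex in $A$ is a terminal or the root; the root is excluded from $A$ by definition of violated set, and a terminal $t \in A$ is itself an SCC of $F$, so by the ``exactly one SCC'' conclusion $t \in C_A$. Thus $A \setminus C_A$ is all Steiner. Finally, for $s \in A\setminus C_A$: by minimality, $A\setminus\{s\}$ is not violated (it still contains the terminal, since $s$ is Steiner), so $\delta^{in}_F(A\setminus\{s\}) \neq \emptyset$; the new incoming edge must have head $s$ wait — no, removing $s$ from $A$ means an edge that was internal to $A$ now crosses, so there is an $F$-edge $(s, v)$ with $v \in A\setminus\{s\}$. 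But edges out of a Steiner node $s$ have head in $X \cup \{r\}$, and $r \notin A$, so $v \in X$, hence $v \in C_A$. Therefore every Steiner node of $A\setminus C_A$ has an $F$-edge into $C_A$, as claimed.

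The main obstacle I anticipate is the second step: cleanly showing that the ``closure'' set $A'$ built from one terminal-containing SCC plus its Steiner in-neighbours is genuinely violated. The bookkeeping about which $F$-edges can enter $A'$ — ruling out edges from a would-be second SCC $C_2$ and from Steiner nodes not placed in $A'$ — is where the quasi-bipartite assumption must be invoked carefully, and it is easy to conflate ``$S$ is violated'' with ``$S$ is a union of SCCs''; keeping the definitions straight (a violated set need not be closed under reachability in $F$ in general, but minimality will force the structure) is the delicate part.
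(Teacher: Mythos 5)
Your overall structure is sound, and your last step (every terminal of $A$ lies in $C_A$, and for a Steiner node $s\in A\setminus C_A$ the minimality of $A$ forces an $F$-edge from $s$ into $A\setminus\{s\}$ whose head, by quasi-bipartiteness, is a terminal and hence in $C_A$) is correct and essentially the paper's argument for the second claim. The genuine gap is in the ``exactly one SCC'' step, and it is exactly the point you flag but do not resolve. You fix an arbitrary terminal-containing SCC $C_1\subseteq A$ and set $A'=C_1\cup\{\text{Steiner }s\in A: \exists\,F\text{-path }s\to C_1\}$, but this $A'$ need not be violated: if a second SCC $C_2\subseteq A$ reaches $C_1$ in $F$ --- say a terminal of $C_2$ has an $F$-edge directly into $C_1$, or into a Steiner node $s$ that you placed in $A'$ --- then that edge lies in $\delta^{in}_F(A')$, because your closure only ever adds Steiner nodes and so never absorbs $C_2$. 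Your parenthetical claim that ``a vertex sending an $F$-edge to $C_1$ is by construction put into $A'$'' is simply false for terminal tails, and nothing in the argument rules such edges out for an arbitrary choice of $C_1$. So as written the case analysis you call the crux does not close.

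There are two ways to repair it. One is to choose $C_1$ more carefully: since any $F$-path between vertices of $A$ must stay inside $A$ (re-entering $A$ would contradict $\delta^{in}_F(A)=\emptyset$), the SCCs contained in $A$ are partially ordered by $F$-reachability, and picking $C_1$ to be a source (no other SCC of $A$ reaches it) makes your $A'$ genuinely violated, while it is a proper subset because it omits the terminals of $C_2$; this contradicts minimality. The paper takes a shorter route that avoids SCC bookkeeping altogether: take two terminals $t,t'\in A$ in different SCCs, note that at least one direction of reachability fails (wlog no $F$-path from $t'$ to $t$), and let $B=\{v\in A:\ \exists\,F\text{-path from }v\text{ to }t\}$. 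This full closure (terminals included) is automatically violated, contains $t$, and excludes $t'$, contradicting minimality of $A$. Either fix works, but your proposal as it stands is missing this directional choice, which is precisely what makes the closure both violated and proper.
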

\begin{proof}
By definition of violated sets, $A$ does not contain $r$. If $A$ contains only one terminal, then the first statement holds trivially. So consider two terminals $t$ and $t'$ in $A$. We show there is a $F$-path from $t$ to $t'$ and vice versa. Suppose not and wlog assume there is no $F$-path from $t'$ to $t$. Let $B:=\{v\in A:~\exists F-path~from~v~to~t\}$. Note that $B$ is a violated set and $B\subseteq A\setminus\{t'\}$ which violates the fact that $A$ is a minimal violated set. Therefore, exactly one SCC of $F$ is in $A$.

Next we prove the second statement. Let $s$ be a Steiner node (if exists) in $A\setminus C_A$. If there is no edge in $F$ oriented from $s$ to $C_A$, then $A\setminus\{s\}$ is a violated set, because the graph is quasi-bipartite and the fact that $A$ is a violated set itself, contradicting the fact that $A$ is a minimal violated set.
\end{proof}

Note that the above lemma limits the interaction between two active moats. More precisely, two active moats can only share Steiner nodes that lie outside of the SCCs in the moats.

\begin{definition}[The SCC part of active moats]\label{SCC part of active moats}
Given a set of edges $F$ and an active set $A$ (with respect to $F$), we denote by $C_A$ the SCC (with respect to $F$) inside $A$.
\end{definition}
We use $C_A$ rather than $C_A^F$ because the set $F$ will always be clear from the context.

Finally we recall bounds on the size of $K_r$-minor free graphs that we use at the end of our analysis.

\begin{theorem}[Thomason 2001 \cite{thomason2001extremal}]\label{size of minor free graphs}
Let $G=(V,E)$ be a $K_r$-minor free graph with no parallel edges. Then, $|E|\leq O(r\cdot\sqrt{\log r})|V|$ and this bound is asymptotically tight.
\end{theorem}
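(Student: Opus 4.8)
The statement has two halves, which I would prove separately. The inequality $|E|\le O(r\sqrt{\log r})\,|V|$ for $K_r$-minor-free graphs is the classical theorem of Kostochka and Thomason, equivalently the assertion that every graph of average degree $\Omega(r\sqrt{\log r})$ has a $K_r$ minor; I would reproduce that argument. First one reduces to a minimum-degree hypothesis: if $|E|\ge k\,|V|$, then repeatedly deleting vertices of current degree less than $k$ leaves a nonempty subgraph $G'$ with $\delta(G')\ge k$, since each deletion destroys fewer than $k$ edges and hence fewer than $k\,|V|\le|E|$ edges in total, so the process cannot exhaust $G$. Taking $k=\Theta(r\sqrt{\log r})$, it suffices to find a $K_r$ minor in any graph of minimum degree $\Omega(r\sqrt{\log r})$.

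For that, the target is $r$ pairwise-disjoint connected branch sets, every two of which are joined by an edge, produced by a randomized contraction. The scheme has two stages. In stage (i) one passes from $G$ to a minor $H$ that is still \emph{dense} — average degree $\Omega(\delta(G))$ — but has only $O(\delta(G))$ vertices and behaves like a quasi-random graph of constant edge density, so that afterwards only $\log\delta(G)=O(\log r)$ appears in place of $\log|V(G)|$. In stage (ii), inside $H$, one selects $r$ disjoint vertex groups of size $s=\Theta(\sqrt{\log r})$ at random, keeps each group connected by retaining its largest component (or by growing a small tree), and uses a union bound to show that with positive probability all $\binom r2$ pairs of groups are joined. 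It is precisely here that the $\sqrt{\log r}$ enters: a pair of groups of size $s$ fails to be joined with probability at most $(1-\Omega(1))^{s^2}$, so $s=\Theta(\sqrt{\log r})$ is exactly what is needed to beat the $\binom r2$ pairs; since the $r$ groups then occupy $\Theta(r\sqrt{\log r})$ vertices of $H$, they fit inside $H$ exactly when $\delta(G)=\Omega(r\sqrt{\log r})$. I expect stage (i) — extracting a bounded-order, nearly-uniform dense minor — to be the main obstacle; this is the genuinely clever part of the Kostochka and Thomason arguments (and Thomason's 2001 refinement, which pins down the optimal constant via an expansion analysis, makes it considerably more delicate), whereas stage (ii) is routine once $H$ is in hand.

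For tightness, fix $k=\Theta(r\sqrt{\log r})$ and take the disjoint union of arbitrarily many copies of the random graph $G(k,1/2)$. Each copy has $k$ vertices and $\Theta(k^2)$ edges, so the union has edge-to-vertex ratio $\Theta(k)=\Theta(r\sqrt{\log r})$ with $|V|$ as large as desired; and since $K_r$-minor-freeness is closed under disjoint union, it remains only to check that $G(k,1/2)$ has no $K_r$ minor for the right choice of constants. That is a first-moment computation: summing over all ways of choosing $r$ disjoint branch sets of total size at most $k$, the dominant contribution comes from branch sets of size $\Theta(\sqrt{\log k})$, for which the probability that all $\binom r2$ pairs are joined is $(1-2^{-\Theta(\log k)})^{\binom r2}$, and the expected number of valid configurations tends to $0$ once $r$ is a large enough multiple of $k/\sqrt{\log k}$ — the standard estimate for the order of the largest clique minor of $G(n,1/2)$. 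The two halves together give Theorem~\ref{size of minor free graphs}.
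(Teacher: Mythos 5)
This statement is not proved in the paper at all: it is quoted as a known result and used as a black box (Theorem \ref{size of minor free graphs} cites Thomason 2001, with Remark \ref{remark size minor free} offering Mader's weaker but explicit bound as an alternative). So there is no in-paper argument to compare against; the only question is whether your blind write-up would stand on its own as a proof of the Kostochka--Thomason bound and its tightness.

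As a proof it has genuine gaps, which you yourself flag. The reduction from average degree to minimum degree is correct and complete, and the overall architecture you describe (stage (i): pass to a dense minor of bounded order so that only $\log r$ rather than $\log |V|$ enters; stage (ii): random groups of size $\Theta(\sqrt{\log r})$, union bound over $\binom{r}{2}$ pairs; tightness via disjoint copies of $G(k,1/2)$ with $k=\Theta(r\sqrt{\log r})$, using the $\Theta(n/\sqrt{\log n})$ bound on the clique-minor order of $G(n,1/2)$) is the right one and correctly locates where $\sqrt{\log r}$ comes from. But the substantive content of the theorem lies exactly in the steps you defer: extracting the small, quasi-random dense minor in stage (i) is not routine and is not sketched beyond naming it; in stage (ii) the connectivity of the branch sets is waved at (``retaining its largest component'' does not obviously preserve the group size $s$ or the pairwise-joining analysis -- the actual arguments choose the sets and verify connectivity via the quasi-randomness established in stage (i)); and the first-moment computation bounding the Hadwiger number of $G(k,1/2)$ is asserted rather than carried out. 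Also note that Thomason's 2001 contribution is the sharp constant; the $O(r\sqrt{\log r})$ bound itself goes back to Kostochka and to Thomason's 1984 paper, so for the statement as used here the older arguments suffice. In the context of this paper, citing the literature (as the authors do) is the appropriate move; as a standalone proof, your text is an accurate outline but not yet a proof.
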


\begin{remark}\label{remark size minor free}
We are not aware of the constant suppressed by the $O(.)$ notation in Thomason's result (Theorem \ref{size of minor free graphs}). But there is another result by Mader \cite{mader1968} that gives an upper bound of $8\cdot r\cdot\log r$ which asymptotically is worse than Thomason's result but the constant in the $O(.)$ is known. If we use this result in our analysis, we have a $2\cdot(8\cdot r\cdot\log r+1)$-approximation algorithm for DST instances on quasi-bipartite $K_r$-minor free graphs.
\end{remark}

Bipartite planar graphs are $K_5$-minor free, but we know of explicit bounds sizes. The following is the consequence of Euler's formula that will be useful in our tighter analysis for quasi-bipartite, planar graphs.
\begin{lemma}\label{size of bipartite planar}
Let $G=(V,E)$ be a bipartite planar graph with no parallel edges. Then, $|E|\leq 2\cdot|V|$.
\end{lemma}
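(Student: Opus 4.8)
The plan is the standard double-counting argument from Euler's formula, with bipartiteness used exactly to force every face to be bounded by at least four edges. Since the claimed bound is additive over connected components (for a component with $n_i\ge 3$ vertices the sharper estimate $m_i\le 2n_i-4\le 2n_i$ holds, and for $n_i\le 2$ one trivially has $m_i\le 1\le 2n_i$), and the cases $|V|\le 2$ are immediate, it suffices to prove the stronger bound $|E|\le 2|V|-4$ whenever $G$ is connected and $|V|\ge 3$.

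First I would reduce to the bridgeless case by induction on $|E|$: if $G$ has a bridge $e$, deleting it splits $G$ into two connected simple bipartite planar graphs $G_1,G_2$ with $|V_1|+|V_2|=|V|$ and $|E_1|+|E_2|=|E|-1$; applying the inductive bound to each $G_i$ (using $m_i\le 2n_i-4$ when $n_i\ge 3$ and the trivial $m_i\le n_i-1$ when $n_i\le 2$) and adding back $e$ yields $|E|\le 2|V|-4$ in every subcase. So assume $G$ is $2$-edge-connected. Then $G$ has no bridge, so in a fixed plane embedding no facial walk repeats an edge, i.e.\ each facial walk is a closed trail. Being an edge-disjoint union of cycles of a bipartite graph, such a closed trail has even length, and since $G$ is simple it cannot have length $2$; hence every face has boundary length at least $4$.

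Now let $f$ be the number of faces and $\ell(F)$ the length of the facial walk of a face $F$. Counting incidences between edges and facial walks (each edge contributes $2$ to the total) gives $2|E|=\sum_{F}\ell(F)\ge 4f$, so $f\le |E|/2$. Plugging this into Euler's formula $|V|-|E|+f=2$ gives $2\le |V|-|E|/2$, i.e.\ $|E|\le 2|V|-4$, which is even stronger than the stated $|E|\le 2|V|$. The only place any care is needed is the claim that facial walks have length at least $4$: this is exactly where the reduction to the bridgeless case is doing work, since in the presence of a bridge a facial walk traverses that bridge twice and the clean ``girth $\ge 4$'' estimate breaks. Everything else is routine bookkeeping, and the slack between $2|V|-4$ and the claimed $2|V|$ comfortably absorbs the small and disconnected cases.
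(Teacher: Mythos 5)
Your proof is correct, and it follows exactly the route the paper has in mind: the paper states this lemma without proof as ``a consequence of Euler's formula,'' and your argument is the standard Euler-formula/face-counting proof of $|E|\le 2|V|-4$ for simple bipartite planar graphs, carefully extended to cover bridges, small components, and disconnectedness so that the weaker bound $|E|\le 2|V|$ holds verbatim. No discrepancies to report.
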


\section{Standard primal-dual algorithm and a bad example}\label{standard primal-dual fails}

Given a DST instance with $G=(V,E)$, $r\in V$ as the root, and $X\subseteq V-\{r\}$ as the terminal set, we define $\mathcal{S}:=\{S\subsetneq V:r\notin S,~and~S\cap X\neq\emptyset\}$. We consider the dual of \eqref{lp:primal}.

\begin{alignat}{3}\label{dual lp}
\text{maximize:} & \quad & \sum_{S\in\mathcal{S}} y_S \tag{{\bf Dual-LP}} \\
\text{subject to:} && \sum_{\substack{S\in \mathcal{S}:\\ e\in\delta^{in}(S)}}y_S \leq \quad & c(e) \quad && \forall e \in E \label{cnst:cut} \\
&& y \geq \quad & 0 \notag
\end{alignat}

As we discussed in the introduction, a standard primal-dual algorithm solves arborescence problem on any directed graph \cite{edmonds1967optimum}. Naturally, our starting point was to investigate this primal-dual algorithm for DST instances. We briefly explain this algorithm here. At the beginning we let $F:=\emptyset$. Uniformly increase the dual constraints corresponding to active moats and if a dual constraint goes tight, we add the corresponding edge to $F$. Update the active sets based on $F$ (see Definition \ref{active moats}) and repeat this procedure. At the end, we do a reverse delete, i.e., we go over the edges in $F$ in the reverse order they have been added to $F$ and remove it if the feasibility is preserved. Unfortunately, for DST instances in quasi-bipartite planar graphs, there is a bad example (see Figure \ref{bad example}), that shows the total growth of the dual variables is $2+(2\cdot k+2)\cdot\epsilon$ while the optimal solution costs $k+1+(k+2)\cdot\epsilon$ for arbitrarily large $k$. So the dual objective is not enough to pay for the cost of the edges in $F$ (i.e., we have to multiply the dual objective by $O(k)$ to be able to pay for the edges in $F$).

\textbf{What is the issue and how can we fix it?} One way to get an $O(1)$-approximation is to ensure at each iteration the number of edges (in the final solution) whose dual constraints are losing slack at this iteration is proportioned to the number of active moats. In the bad example (Figure \ref{bad example}), when the bottom moat is paying toward the downward blue edges, there are only two active moats but there are $k$ downward blue edges that are currently being paid for by the growing dual variables.

To avoid this issue, we consider the following idea: once the bottom active moat grew enough so that the dual constraints corresponding to all the downward blue edges are tight we purchase an arbitrary one of them, say $(r,z_k)$ for our discussion here. Once the top active moat reaches $z_1$ instead of skipping the payment for this edge (since the dual constraint for $(w_2,z_1)$ is tight), we let the active moat pay towards this edge again by ignoring previous payments to the edge, and then we purchase it once it goes tight. Note that now we violated the dual constraint for $(w_2,z_1)$ by a multiplicative factor of $2$. Do the same for all the other downward blue edges (except $(r,z_k)$ that was purchased by the bottom moat). Now it is easy to see that we grew enough dual objective to approximately pay for the edges that we purchased. We make this notion precise by defining different roles for downward blue edges in the next section. In general, each edge can serve up to two roles and has two ``buckets'' in which it receives payment: each moat pays towards the appropriate bucket depending on the role that edge serves for that moat. An edge is only purchased if one of its buckets is filled and some tiebreaking criteria we mention below is satisfied.

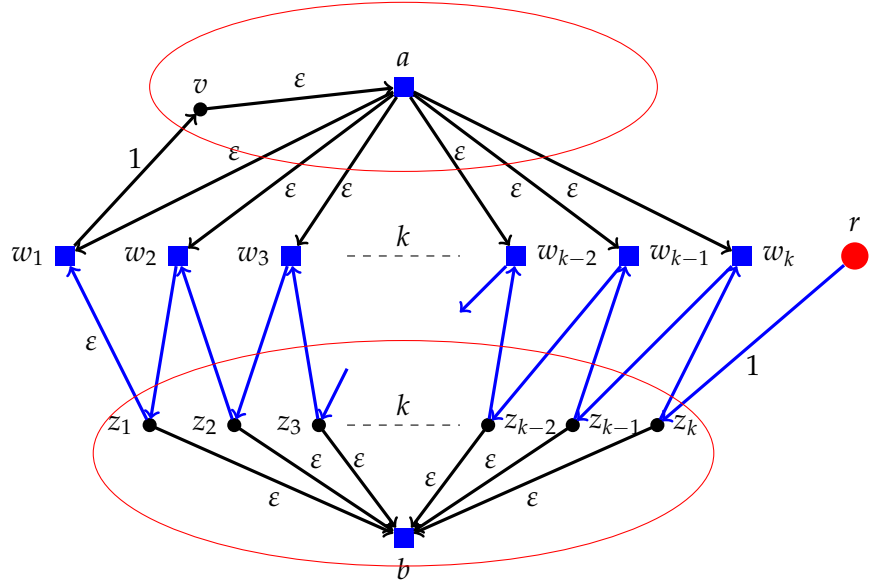
\begin{figure}
\centering
\begin{tikzpicture}[smallnode/.style={circle, draw=black, fill=black, very thick, scale =0.4},scale=.75]
\node[shape=rectangle,fill=blue,label=above:$a$] (a) at (0,0) {};
\node[shape=circle,fill=red,label=above:$r$] (r) at (8,-3) {};
\node[shape=rectangle,fill=blue,label=below:$b$] (b) at (0,-8) {};

\node[smallnode,label=above:$v$] (v8) at (-3.6,-.4) {};

\node[shape=rectangle,fill=blue, label=left:$w_1$] (u1) at (-6,-3) {};
\node[shape=rectangle,fill=blue, label=left:$w_2$] (u2) at (-4,-3) {};
\node[shape=rectangle,fill=blue, label=left:$w_3$] (u3) at (-2,-3) {};
\node[shape=rectangle,fill=blue, label=right:$w_{k-2}$] (u5) at (2,-3) {};
\node[shape=rectangle,fill=blue, label=right:$w_{k-1}$] (u6) at (4,-3) {};
\node[shape=rectangle,fill=blue, label=right:$w_k$] (u7) at (6,-3) {};

\draw[->, very thick] (a) -- (u1) node[midway, above] {$\epsilon$};
\draw[->, very thick] (a) -- (u2) node[midway, below] {$\epsilon$};
\draw[->, very thick] (a) -- (u3) node[midway, below] {$\epsilon$};
\draw[->, very thick] (a) -- (u5) node[midway, above] {$\epsilon$};
\draw[->, very thick] (a) -- (u6) node[midway, below] {$\epsilon$};
\draw[->, very thick] (a) -- (u7) node[midway, below] {$\epsilon$};
\draw[->, very thick] (u1) -- (v8) node[midway, above] {$1$};
\draw[->, very thick] (v8) -- (a) node[midway, above] {$\epsilon$};

\node[smallnode, label=left:$z_1$] (v1) at (-4.5,-6) {};
\node[smallnode, label=left:$z_2$] (v2) at (-3,-6) {};
\node[smallnode, label=left:$z_3$] (v3) at (-1.5,-6) {};
\node[smallnode, label=right:$z_{k-2}$] (v5) at (1.5,-6) {};
\node[smallnode, label=right:$z_{k-1}$] (v6) at (3,-6) {};
\node[smallnode, label=right:$z_{k}$] (v7) at (4.5,-6) {};

\draw[dashed] (-1,-3) -- (1,-3) node[midway,above] {$k$};
\draw[dashed] (-1,-6) -- (1,-6) node[midway,above] {$k$};

\draw[->, very thick,blue] (r) -- (v7) node[black, midway, below] {$1$};
\draw[->, very thick,blue] (v7) -- (u7) node[midway, left] {};
\draw[->, very thick,blue] (u7) -- (v6) node[midway, below] {};
\draw[->, very thick,blue] (v6) -- (u6) node[midway, below] {};
\draw[->, very thick,blue] (u6) -- (v5) node[midway, left] {};
\draw[->, very thick,blue] (v5) -- (u5) node[midway, below] {};
\draw[->, very thick,blue] (u5) -- (1,-4) {};
\draw[->, very thick,blue] (-1,-5) -- (v3) {};
\draw[->, very thick,blue] (v3) -- (u3) node[midway, left] {};
\draw[->, very thick,blue] (u3) -- (v2) node[midway, below] {};
\draw[->, very thick,blue] (v2) -- (u2) node[midway, left] {};
\draw[->, very thick,blue] (u2) -- (v1) node[midway, below] {};
\draw[->, very thick,blue] (v1) -- (u1) node[black,midway, left] {$\epsilon$};

\draw[->, very thick] (v1) -- (b) node[midway, below] {$\epsilon$};
\draw[->, very thick] (v2) -- (b) node[midway, above] {$\epsilon$};
\draw[->, very thick] (v3) -- (b) node[midway, above] {$\epsilon$};
\draw[->, very thick] (v5) -- (b) node[midway, left] {$\epsilon$};
\draw[->, very thick] (v6) -- (b) node[midway, above] {$\epsilon$};
\draw[->, very thick] (v7) -- (b) node[midway, below] {$\epsilon$};

\draw[red] (a) ellipse (4.5 and 1.5);
\draw[red] (0,-6.5) ellipse (5.5 and 2);
\end{tikzpicture}
\caption{This is an example to show why a standard primal-dual algorithm fails. The square vertices are terminals. The downward blue edges (i.e., $(w_{i},z_{i-1})$'s for $2\leq i\leq k$) have cost $1$, the upward blue edges (i.e., $(z_{i},w_{i})$'s for $1\leq i\leq k$) have cost $\epsilon$. The cost of the black edges are shown in the picture. Note any feasible solution contains all the blue edges and the cost of an optimal solution is $k+1+(k+2)\cdot\epsilon$. However, it is easy to see the total dual variables that are grown using a standard primal-dual algorithm is $2+(2\cdot k+2)\cdot\epsilon$.}
\label{bad example}
\end{figure}

\section{Our primal-dual algorithm}\label{the algorithm}
As we discussed in the last section, we let the algorithm violate the dual constraint corresponding to an edge by a factor of $2$ and hence we work with the following modified Dual-LP:

\begin{alignat}{3}\label{dual const modified}
\text{maximize:} & \quad & \sum_{S\in\mathcal{S}} y_S \tag{{\bf Dual-LP-Modified}}  \\
\text{subject to:} && \sum_{\substack{S\in \mathcal{S}:\\ e\in\delta^{in}(S)}}y_S \leq \quad & 2\cdot c(e) \quad && \forall e \in E \label{cnst:cut modified} \\
&& y \geq \quad & 0 \notag
\end{alignat}

Note that the optimal value of \eqref{dual const modified} is at most twice the optimal value of \eqref{dual lp} because consider a feasible solution $y$ for the former LP then $\frac{y}{2}$ is feasible for the latter LP.

Let us define the different buckets for each edge that are required for our algorithm.

\noindent
\textbf{Antenna, expansion and killer buckets:}\\
We say edge $e=(u,v)$ is an {\em antenna} edge if $u\notin X\cup \{r\}$ and $v\in X$, in other words, if the tail of $e$ is a Steiner node and the head of $e$ is a terminal. For every antenna edge we associate an antenna bucket with size $c(e)$. For every non-antenna edge $e$, we associate two buckets, namely {\em expansion} and {\em killer} buckets, each of size $c(e)$. The semantics of these labels will be introduced below.

Now we, informally, describe our algorithm, see Algorithm \ref{our alg} for the detailed description. Recall the definition of active moats (Definition \ref{active moats}).

\noindent
\textbf{Growth phase:}
At the beginning of the algorithm we set $F:=\emptyset$ and every singleton terminal is an active moat. As long as there is an active moat with respect to $F$ do the following: uniformly increase the dual variables corresponding to the active moats. Let $e\notin F$ be an antenna edge with its head in an active moat, then the active moat pays towards the antenna bucket of $e$. Now suppose $e=(u,v)\notin F$ is a non-antenna edge, so $u\in X\cup\{r\}$. For every active moat $A$ that contains $v$, if $C_A$\footnote{See Definition \ref{SCC part of active moats}} is a subset of an active set $A'$ with respect to $F\cup\{e\}$, then $A$ pays toward the expansion bucket of $e$ and otherwise $A$ pays towards the killer bucket of $e$.

Uniformly increase the dual variables corresponding to active moats until a bucket for an edge $e$ becomes full (antenna bucket in case $e$ is an antenna edge, and expansion or killer bucket if $e$ is a non-antenna edge), add $e$ to $F$. Update the set of active moats $\mathcal{A}$ according to set $F$.

\noindent
\textbf{Pruning:}
Finally, we do the standard reverse delete meaning we go over the edges in $F$ in the reverse order they have been added and if the resulting subgraph after removing an edge is still feasible for the DST instance, remove the edge and continue.

The following formalizes the different roles of a non-antenna edge that we discussed above.

\begin{definition}[Relation between non-antenna edges and active moats]\label{relation between edges and active moats}
Given a subset of edges $F\subseteq E$, let $\mathcal{A}$ be the set of all active moats with respect to $F$. Consider a non-antenna edge $e=(u,v)$ (so $u\in X\cup\{r\}$). Suppose $v\in A$ where $A\in\mathcal{A}$. Then,
\begin{itemize}
    \item we say $e$ is an expansion edge with respect to $A$ under $F$ if there is a subset of vertices $A'$ that is active with respect to $F\cup\{e\}$ such that $C_A\subsetneq A'$,
    \item otherwise we say $e$ is a killer edge with respect to $A$.
\end{itemize}
\end{definition}
For example, all exiting edges from $r$ that are not in $F$ is a killer edge with respect to any active moat (under $F$) it enters. See Figure \ref{pic for relation between edges and moats} for an illustration of the above definition.

{\bf Intuition behind this definition:} When $e=(u,v)$ is a killer edge with respect to an active moat $A$, then there is a dipath in $F\cup\{e\}$ from $r$ or $C_{A'}$ to $C_A$ where $A'\neq A$ is an active moat with respect to $F$. Note that adding $e$ to $F$ will make the dual variable corresponding to $A$ stop growing and that is why we call $e$ a killer edge with respect to $A$. For example, in Figure \ref{pic for relation between edges and moats}, both $e$ and $e'$ are killer edges with respect to $A'$. On the other hand, if $e=(u,v)$ is an expansion edge with respect to $A$, then $C_A$ will be a part of a \say{bigger} active moat with respect to $F\cup\{e\}$ and hence the name expansion edge for $e$. For example, in Figure \ref{pic for relation between edges and moats}, $e$ is an expansion edge with respect to $A$ because in $F\cup\{e\}$, $A\cup B\cup\{s\}$ is an active moat whose SCC contains $C_A$.

\begin{figure}
\centering
\begin{tikzpicture}[smallnode/.style={circle, draw=black, fill=black, very thick, scale =0.4},
bignode/.style={circle, draw=black, very thick, scale =1.5},scale=.75]

\node[bignode, red, label=left:$B$] (b1) at (0,.5){};
\node[bignode, blue] (ca1) at (-1,-2){};
\node[bignode, blue] (ca2) at (1,-2){};
\node[bignode, red, label=left:$B'$] (b2) at (0.5,-4){};
\node[bignode, blue] (ca3) at (3.5,-3){};
\node[smallnode, label=right:$s$] (s) at (0,-1){};
\node[smallnode, label=below:$s'$] (s') at (2,-3){};

\draw[->, blue, very thick] (b1) -- (s) node[midway, right] {$e$};
\draw[->, very thick] (s) -- (ca1);
\draw[->, very thick] (s) -- (ca2);
\draw[->, blue, very thick] (ca2) to [out=90,in=35] node[pos=.5, right] {$e''$} (s);
\draw[->, very thick] (s') -- (ca2);
\draw[->, very thick] (s') -- (ca3);
\draw[->, blue, very thick] (b2) -- (s') node[pos=.4, above] {$e'$};
\draw[->, decorate, decoration=zigzag] (ca2) -- (b2) node [midway, left] {$\fs$};
\draw[->, decorate, decoration=zigzag] (ca1) -- (b1) node [midway, left] {$\fs$};
\draw[->, decorate, decoration=zigzag] (ca1) -- (b2) node [midway, left] {$\fs$};

\draw[blue, rotate=-45] (.7, -2) ellipse (.5 and 1.5);
\draw[blue, rotate=45] (-.7, -2) ellipse (.5 and 1.8);
\draw[blue] (3, -3) ellipse (1.4 and .5);

\node[label=left:$A$] at (-1.5,-2) {}; 
\node[label=right:$A'$] at (1.5,-1.7) {}; 
\node[label=right:$A''$] at (4.3,-3) {}; 

\end{tikzpicture}
\caption{Above is a part of a graph at the beginning of iteration $l$ in the algorithm. $\fs$ denotes the set $F$ at this iteration. The circles are SCCs in $(V,\fs)$. Blue circles are inside some active moats shown with ellipses. The black dots $s$ and $s'$ are Steiner nodes. The black edges and the zigzag paths are in $\fs$. The edges $e,e'$, and $e''$ have not been purchased yet (i.e., $e,e',e''\notin \fs$). Since $C_A$ is a subset of an active moat namely $A\cup B\cup\{s\}$ with respect to $\fs\cup\{e\}$, $e$ is an expansion edge with respect to $A$. However, $e$ is a killer edge with respect to $A'$ and $e''$ is a killer edge with respect to $A$. Finally, $e'$ is a killer edge with respect to $A'$ (and $A''$) because there is a $\fs\cup\{e'\}$-path from $C_A$ to $C_{A'}$ (and $C_{A''}$), therefore $C_{A'}$ (and $C_{A''}$) cannot be inside an active moat with respect to $\fs\cup\{e'\}$.}
\label{pic for relation between edges and moats}
\end{figure}
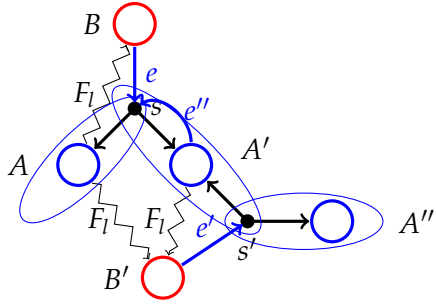

Now we can state our algorithm in detail, see Algorithm \ref{our alg}. Note that the purchased edge $e_l$ at iteration $l$ enters some active moat at iteration $l$. 

After the algorithm finishes, then we label non-antenna edges by expansion/killer as determined by the following rule:
\begin{definition}[Killer and expansion edges]\label{def killer and expansion edge}
Consider iteration $l$ of the algorithm where we added a non-antenna edge $e_l$ to $F$. We label $e_l$ as expansion (killer) if the expansion (killer) bucket of $e$ becomes full at iteration $l$, break ties arbitrarily. 
\end{definition}
Following remark helps to understand the above definition better.
\begin{remark}
It is possible that one bucket becomes full for an edge yet we do not purchase the edge with that bucket label (killer or expansion) due to tiebreaking when multiple buckets become full. For example, this would happen in our bad example for the downward blue edges: their killer buckets are full yet all but one are purchased as expansion edges.
\end{remark}

Let us explain the growth phase of Algorithm \ref{our alg} on the bad example in Figure \ref{bad example}. Since the early iterations of the algorithm on this example are straightforward, we start our explanation from the iteration where the active moats are $A=\{b,z_1,z_2,...,z_k\}$ and $A'=\{a,v\}$. 

Every $(w_i,z_{i-1})$ for $2\leq i\leq k$ is a killer edge with respect to $A$ so $A$ pays toward the killer buckets of these edges. At the same iteration, $(w_1,v)$ is an expansion edge with respect to $A'$ so $A'$ pays toward the expansion bucket of this edge. Now the respected buckets for all mentioned edges are full. Arbitrarily, we pick one of these edges, let us say $(w_k,z_{k-1})$, and add it to $F$. Then, $A$ stops growing. In the next iteration, we only have one active moat $A'$. Since $(w_1,v)$ is still expansion edge with respect to $A'$ and its (expansion) bucket is full, in this iteration we add $(w_1,v)$ to $F$ and after updating the active moats, again we only have one active moat $\{a,v,w_1\}$ which by abuse of notation we denote it by $A'$. Next iteration we buy the antenna edge $(w_1,z_1)$ and the active moat now is $A'=\{a,v,w_1,z_1\}$. In the next iteration, the crucial observation is that the killer bucket of $(w_2,z_1)$ is full (recall the $A$ payed toward the killer bucket of $(w_2,z_1)$); however, $(w_2,z_1)$ is an expansion edge with respect to $A'$ so $A'$ will pay towards its expansion bucket and then purchases it. Similarly, the algorithm buys $(w_i,z_{i-1})$'s except $(w_k,z_k)$ because this edge is in $F$ already (recall we bought this edge with $A$). Finally, $(r,z_k)$ is a killer edge with respect to the active moat in the last iteration and we purchase it.

\begin{algorithm}
\caption{Primal-Dual Algorithm for DST on Quasi-Bipartite Graphs}
{\bf Input}: Directed quasi-bipartite graph $G = (V,E)$ with edge costs $c(e) \geq 0$ for $e \in E$, a set of terminal $X\subseteq V\setminus \emptyset$, and a root vertex $r$. \\
{\bf Output}: An arborescence $\overline F$ rooted at $r$ such that each\\ terminal is reachable from $r$ in $\overline F$.
\begin{algorithmic}
\STATE $\mathcal{A} \leftarrow \{\{v\} : v \in X\}$. \COMMENT{The active moats each iteration, initially all singleton terminal set.}
\STATE $y^* \gets 0$. \COMMENT{The dual solution}
\STATE $ F \gets \emptyset$. \COMMENT{The edges purchased}
\STATE $l \gets 0$. \COMMENT{The iteration counter}
\STATE $b^{\antenna}_e\gets 0$, $b^{\expansion}_e\gets 0$ and $b^{\killer}_e\gets 0$. \COMMENT{The buckets}
\STATE \textbf{Growing phase:}
\WHILE{until $\mathcal{A}\neq\emptyset$}
\STATE Find the maximum value $\epsilon\geq 0$ such that the following holds:
\STATE ~~~(a) for every antenna edge $e$ we have $b^{\antenna}_e+\sum\limits_{\substack{A\in\mathcal{A}:\\e\in\delta^{in}(A)}}\epsilon\leq c(e)$.
\STATE ~~~(b) for every non-antenna edge $e$ we have $b^{\expansion}_e+\sum\limits_{\substack{A\in \mathcal{A}:\\e~is~expansion\\with~resp.~to~A}}\epsilon\leq c(e)$.
\STATE ~~~(c) for every non-antenna edge $e$ we have $b^{\killer}_e+\sum\limits_{\substack{A\in \mathcal{A}:\\e~is~killer~with\\resp.~to~A}}\epsilon\leq c(e)$.

\STATE Increase the dual variables $y^*$ corresponding to each active moat by $\epsilon$.
\FOR{every antenna edge $e$} 
\STATE $b^{\antenna}_e\gets b^{\antenna}_e+\sum\limits_{\substack{A\in\mathcal{A}:\\e\in\delta^{in}(A)}}\epsilon$.
\ENDFOR
\FOR{every non-antenna edge $e$} 
\STATE $b^{\expansion}_e\gets b^{\expansion}_e+\sum\limits_{\substack{A\in \mathcal{A}:\\e~is~expansion\\with~resp.~to~A}}\epsilon$.
\STATE $b^{\killer}_e\gets b^{\killer}_e+\sum\limits_{\substack{A\in \mathcal{A}:\\e~is~killer~with\\resp.~to~A}}\epsilon$.
\ENDFOR
\STATE pick any single edge $e_l \in \cup_{A \in \mathcal{A}} \delta^{in}(A)$ with one of (a)-(c) being tight (break ties arbitrarily).
\STATE $F\gets F\cup \{e_l\}$.
\STATE update $\mathcal{A}$ based on the minimal violated sets with respect to $F$.
\STATE $l\gets l+1$. 
\ENDWHILE
\STATE \textbf{Deletion phase:}
\STATE $\overline F\gets F$.
\FOR{$i$ from $l$ to $0$}
\IF{$\overline F\setminus\{e_i\}$ is a feasible solution for the DST instance}
\STATE $\overline F\gets \overline F\setminus\{e_i\}$.
\ENDIF
\ENDFOR
\RETURN $\overline F$
\end{algorithmic}
\label{our alg}
\end{algorithm}

\section{The analysis}\label{the analysis}
The general framework for analyzing primal-dual algorithms is to use the dual constraints to relate the cost of purchased edges and the dual variables. However, here we do not use the dual constraints and rather we use the buckets we created for each edge. Recall $\overline F$ is the solution output by Algorithm \ref{our alg}. We define $\overline{F}_{\killer}$ to be the set of edges in $\overline F$ that was purchased as killer edge\footnote{See Definition \ref{def killer and expansion edge}.}. Similarly define $\overline{F}_{\expansion}$ and $\overline F_{\antenna}$. For each iteration $l$, we denote by $\fs$ the set $F$ at this iteration, $\mathcal{A}_l$ denotes the set of active moats with respect to $\fs$, and $\epsilon_l$ is the amount we increased the dual variables (corresponding to active moats) with at iteration $l$. Finally, Let $y^*$ be the dual solution for \eqref{dual const modified} constructed in the course of the algorithm. We use the following notation throughout the analysis.

\begin{definition}
Fix an iteration $l$. For any $A \in \mathcal{A}_l$, let
\[
\Delta^{l}_{\killer}(A):=\{e\in\overline F_{\killer}:~e~is~killer~with~respect~to~A~under~\fs\},
\]
in other words, $\Delta^{l}_{\killer}(A)$ is the set of all killer edges in $\overline F$ such that they are killer edge with respect to $A$ at iteration $l$. Similarly define $\Delta^{l}_{\expansion}(A)$.

Let $\Delta^{l}_{\antenna}(A):=\{e\in\overline F_{\antenna}:~e\in\delta^{in}(A)\}$. Finally, we define
\[
\Delta^{l}(A):=\Delta^{l}_{\killer}(A)\cup\Delta^{l}_{\expansion}(A)\cup\Delta^{l}_{\antenna}(A).
\]
Note $\Delta^{l}_{\killer}(A)$, $\Delta^{l}_{\expansion}(A)$, and $\Delta^{l}_{\antenna}(A)$ are pairwise disjoint for any $A\in\mathcal{A}_l$.
\end{definition}

Suppose we want to show that the performance guarantee of Algorithm \ref{our alg} is $2\cdot\alpha$ for some $\alpha\geq 1$, it suffices to show the following: for any iteration $l$ we have

\begin{equation}\label{ineqaulity primal-dual analysis}
    \sum\limits_{S\in\mathcal{A}_l}|\Delta^{l}(S)|\leq\alpha\cdot|\mathcal{A}_l|.
\end{equation}

Once we have (\ref{ineqaulity primal-dual analysis}), then the $(2\cdot\alpha)$-approximation follows easily:
\begin{align}
        \sum\limits_{e\in\overline F}c(e)&= \sum\limits_{e\in\overline F_{\killer}}\sum\limits_{l}\sum\limits_{\substack{S\in\mathcal{A}_l:\\e\in\Delta^{l}_{\killer}(S)}}\epsilon_l+\sum\limits_{e\in\overline F_{\expansion}}\sum\limits_{l}\sum\limits_{\substack{S\in\mathcal{A}_l:\\e\in\Delta^{l}_{\expansion}(S)}}\epsilon_l+\sum\limits_{e\in\overline F_{\antenna}}\sum\limits_{l}\sum\limits_{\substack{S\in\mathcal{A}_l:\\e\in\Delta^{l}_{\antenna}(S)}}\epsilon_l\\
        &=\sum\limits_{l}\epsilon_l\cdot\sum\limits_{S\in\mathcal{A}_l}|\Delta^{l}(S)|\label{5.4}\\
        &\leq\alpha\cdot\sum\limits_{l}|\mathcal{A}_l|\epsilon_l \\
        &=\alpha\cdot\sum\limits_{S\subseteq V\setminus\{r\}}y^*_S \label{5.5}\\
        &\leq \alpha\cdot\big(2\cdot\opt(\ref{dual lp})\big) \label{5.6}\\
        &=2\cdot\alpha\cdot\opt(\ref{lp:primal}) \label{5.7}\\
        &\leq 2\cdot\alpha\cdot\opt \label{4.7},
\end{align}
where the first equality follows from the algorithm, the second equality is just an algebraic manipulation, (\ref{5.4}) follows from \eqref{ineqaulity primal-dual analysis}. Equality (\ref{5.5}) follows from the fact we uniformly increased the dual variables corresponding to active moats by $\epsilon_l$ at iteration $l$, (\ref{5.6}) follows from feasibility of $\frac{y^*}{2}$ for (\ref{dual lp}), and (\ref{5.7}) follows from strong duality theorem for linear programming. 

It remains to show (\ref{ineqaulity primal-dual analysis}) holds. Consider iteration $l$. Using the bound on the total degree of nodes in $G$ (using minor-free properties) to show (\ref{ineqaulity primal-dual analysis}), it suffices to bound the number of edges in $\bar F_{\antenna}\cup\bar F_{\killer}\cup \bar F_{\expansion}$ that are being paid by some active moat at iteration $l$, by $O(|\mathcal{A}_l|)$. We provide charging schemes for each type of edges, separately. Since $G$ is quasi-bipartite, it is easy to show that for each active moat $A\in\mathcal{A}_l$, there is at most one antenna edge in $\bar F$ that enters $A$, this is proved in Section \ref{section bound on antenna edges}. The charging scheme for killer edges is also simple as one can charge a killer edge to an active moat that it kills; this will be formalized in Section \ref{section bound on killer edges}. However, the charging scheme for expansion edges requires more care and novelty. The difficulty comes from the case that an expansion edge is not pruned because it would disconnect some terminals that are not part of any active moat that $e$ is entering this iteration.

Our charging scheme for expansion edges is more global. In a two-stage process, we construct an auxiliary tree that encodes some information about which nodes can be reached from SCCs using edges in $F_l$ (which is the information we used in the definition of expansion edge). Then using a token argument, we leverage properties of our construction to show the number of expansion edges is at most twice the number of active moats in any iteration. These details are presented in \ref{section bound on expansion edges}. Finally, in Section \ref{putting everything together} we put all the bounds we obtained together and derive our approximation factors.

\subsection{Counting the number of antenna edges in an iteration}\label{section bound on antenna edges}
Fix an iteration $l$. Recall $\fs$ denotes the set $F$ at iteration $l$, and $\mathcal{A}_l$ denotes the set of active moats with respect to $\fs$. It is easy to bound the number of antenna edges in $\overline F$ against $|\mathcal{A}_l|$. We do this in the next lemma.

\begin{lemma}\label{number of antenna edges}
At the beginning of each iteration $l$, we have $\sum\limits_{A\in\mathcal{A}_l}|\Delta^{l}_{\antenna}(A)|\leq |\mathcal{A}_l|$.
\end{lemma}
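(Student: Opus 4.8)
The plan is to show that for every active moat $A \in \mathcal{A}_l$ there is at most one antenna edge of $\overline F$ entering $A$, so that $\sum_{A}|\Delta^l_{\antenna}(A)| \leq |\mathcal{A}_l|$. Recall an antenna edge $e=(u,v)$ has a Steiner tail $u$ and a terminal head $v$, and by Definition \ref{active moats} and Lemma \ref{remark on active moats} an active moat $A$ consists of a single SCC $C_A$ of $\fs$ together with possibly some Steiner nodes hanging off $C_A$. The key observation will be: if $A$ has two distinct antenna edges $e_1=(u_1,v_1)$ and $e_2=(u_2,v_2)$ of $\overline F$ entering it, then $v_1,v_2\in A$; since $A$ is violated with respect to $\fs$ it contains no edge of $\fs$ entering it, and in particular the SCC $C_A$ is reachable from $r$ in $\overline F$ only via edges entering $A$. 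I would argue that one of $e_1,e_2$ is then redundant and should have been removed in the reverse-delete (Deletion phase), contradicting $e_1,e_2\in\overline F$.

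More concretely, here are the steps in order. First I would fix iteration $l$ and an active moat $A\in\mathcal{A}_l$, and suppose for contradiction that $e_1=(u_1,v_1)$ and $e_2=(u_2,v_2)$ are two distinct antenna edges of $\overline F$ with $v_1,v_2\in A$. Second, using the quasi-bipartite structure and Lemma \ref{remark on active moats}, I would note that the terminals $v_1,v_2$ both lie in the single SCC $C_A$ (antenna edges have terminal heads, and terminals in $A$ lie in $C_A$), hence $v_1$ and $v_2$ are mutually reachable from each other using edges of $\fs\subseteq\overline F$. Third, I would observe that since $A$ is violated with respect to $\fs$, no edge of $\fs$ enters $A$; moreover, by the standard property of the reverse-delete pruning, every edge surviving in $\overline F$ is needed to keep some terminal connected to $r$. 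Consider the later of $e_1,e_2$ in the order they were added to $F$, say $e_2$; at the moment reverse-delete examines $e_2$, every edge added after $e_2$ has already been processed, and I claim removing $e_2$ keeps the solution feasible because any terminal whose $r$-connectivity used $e_2$ could instead route through $v_1$ via $e_1$ and the $\fs$-paths inside $C_A$ — contradicting the fact that $e_2$ was kept. This yields the bound $|\Delta^l_{\antenna}(A)|\le 1$, and summing over $A\in\mathcal{A}_l$ gives the lemma.

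The main obstacle I anticipate is making the redundancy/reverse-delete argument fully rigorous, in particular handling the interaction with edges that enter $A$ at $C_A$ versus edges added after $e_2$: one must be careful that rerouting through $e_1$ does not itself create a dependency on an edge that was later deleted. The cleanest way around this is probably to avoid reasoning about the dynamic pruning order altogether and instead argue directly about the final graph $\overline F$: since $v_1,v_2$ are in a common SCC $C_A$ of $\fs\subseteq\overline F$, in $(V,\overline F)$ there is a dipath from $v_1$ to $v_2$ and back using only $\fs$-edges, so if both $e_1,e_2$ entered $v_1,v_2$ respectively, the arborescence-minimality of $\overline F$ (any inclusion-wise minimal feasible solution is an arborescence rooted at $r$, as noted in the introduction, and $\overline F$ is minimal after pruning) forces each vertex to have in-degree at most one — but then $e_1$ and a $\fs$-edge entering $v_2$ (which exists since $v_2\in C_A$, an SCC of size contributing to reachability) would both enter inside $A$, and one of them must be $e_2$ itself or be redundant. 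I would present whichever of these two framings turns out to need fewer case distinctions; the degree-one-in-an-arborescence framing is likely the shorter one.
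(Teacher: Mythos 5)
Your first framing is essentially the paper's proof: antenna edges have terminal heads, all terminals of an active moat $A$ lie in its SCC $C_A$ (Lemma \ref{remark on active moats}), so two antenna edges of $\overline F$ entering $A$ both enter $C_A$; since every edge of $C_A$ was bought before either of them, when the later one is examined in the reverse delete the earlier antenna edge and all of $C_A$'s internal edges are still present, and the later edge would have been deleted --- contradicting that both lie in $\overline F$. The paper states this exactly as tersely as you do, so on this route you match it; the rerouting subtlety you flag (reaching the other head without using the edge being deleted) is left implicit in the paper as well.

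However, the ``cleaner'' second framing you say you would likely present instead does not work. It rests on the claim $\fs\subseteq\overline F$, which is false: the deletion phase may remove edges of $\fs$, and in particular the internal edges of $C_A$ need not survive, so in $(V,\overline F)$ the heads $v_1,v_2$ need not be mutually reachable. Moreover, when $v_1\neq v_2$ the in-degree-at-most-one property of the pruned solution yields no contradiction by itself: $e_1$ entering $v_1$ and $e_2$ entering $v_2$ are perfectly compatible with an arborescence, and what pruning removes to enforce in-degree one may well be the $C_A$-internal edges rather than $e_1$ or $e_2$. The fact that makes the argument go through is precisely the reverse-delete timing --- the $C_A$-edges were bought earlier, hence are examined later and are all still present when the later antenna edge is considered --- so you should keep your first framing (the paper's) and not replace it with the in-degree argument.
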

\begin{proof}
Suppose an active moat $A\in\mathcal{A}_l$ is paying toward at least two antenna edges $e=(u,v)$ and $f=(u',v')$ that are in $\overline F$. Let $C_A$ be the SCC part of $A$. Note that since $e$ and $f$ are antenna edges, $u$ and $u'$ are Steiner nodes. Together with the fact that the graph is quasi-bipartite, the heads $v$ and $v'$ are terminals and therefore contained in $C_A$. Since all the edges in $C_A$ are bought before $e$ and $f$, one of $e$ or $f$ should have been pruned in the deletion phase, a contradiction. Hence, $|\Delta^{l}_{\antenna}(A)|\leq 1$ which implies the desired bound.
\end{proof}

\subsection{Counting the number of killer edges in an iteration}\label{section bound on killer edges}
We introduce a notion called {\em alive} terminal which helps us to bound the number of killer edges at a fixed iteration against the number of active moats in that iteration. Also this notion explains the name killer edge. Throughout the algorithm, we show every active moat contains exactly one alive terminal and every alive terminal is in an active moat. 

We consider how terminals can be ``killed'' in the algorithm by associating active moats with terminals that have not yet been part of a moat that was killed.
At the beginning of the algorithm, we mark every terminal alive, note that every singleton terminal set is initially an active moat as well. Let $e_l=(u,v)$ be the edge that was added to $\fs$ at iteration $l$. If $e_l=(u,v)$ is a non-antenna edge, then for every active set $A$ such that $e_l$ is a killer edge with respect to $A$ under $\fs$, mark the alive terminal in $A$ as {\em dead}\footnote{It is possible, $e_l$ is bought as an expansion edge but kills some alive terminals. For example, in Figure \ref{pic for relation between edges and moats} suppose $e$ is being added to $\fs$ at iteration $l$ as an expansion edge (note that $A$ pays toward the expansion bucket of $e$). Then, we mark the alive terminal in $A'$ as dead because $e$ is a killer edge with respect to $A'$ under $\fs$.}. If $e_l=(u,v)$ is an antenna edge, then for every active moat $A$ such that $e_l\in\delta^{in}(A)$ and $C_A$ is {\bf not} in any active moat with respect to $\fs\cup\{e_l\}$, then mark the alive terminal in $A$ as dead\footnote{For example, suppose the antenna edge $e_l=(u,v)\in\delta^{in}(A)$ is being added to $\fs$ and $u$ is in $C_{A'}$ for some active moat $A'$. Then, after adding $e_l$ to $\fs$, we mark the alive terminal in $A$ as dead.}.

The important observation here is that by definition, if $e_l$ is a killer edge, then there must be an active set that satisfies the above condition, hence there is at least one alive terminal that will be marked dead because of $e_l$. In the case that $e_l$ is bought as killer edge, arbitrarily pick an alive terminal $t_{e_l}$ that dies because of $e_l$ and assign $e_l$ to $t_{e_l}$. Note that $t_{e_l}$ was alive until $e_l$ was added to $\fs$. 

\begin{definition}\label{important killer edges}
Fix an iteration $l$. We define
\[
\overline F^l_{\killer}:=\{e\in\overline F_{\killer}:~\exists A\in\mathcal{A}_l~s.t.~e\in\Delta^{l}_{\killer}(A)\},
\]
in other words, $\overline F^l_{\killer}$ is the set of all killer edges in $\overline F$ such that some active moat(s) is paying toward their killer bucket at iteration $l$.
\end{definition}

Now we can state the main lemma of this section.
\begin{lemma}\label{bound for the number of killer edges}
At the beginning of each iteration $l$, we have $|F^l_{\killer}|\leq |\mathcal{A}_l|$.
\end{lemma}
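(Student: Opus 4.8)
\textbf{Proof proposal for Lemma \ref{bound for the number of killer edges}.}

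The plan is to set up an injection from $\overline F^l_{\killer}$ into $\mathcal{A}_l$ using the ``alive terminal'' bookkeeping introduced just before the statement. Recall that every killer edge $e \in \overline F_{\killer}$ was assigned, at the iteration $l_e$ it was purchased, a terminal $t_e$ that was alive immediately before $l_e$ and died because of $e$. I would first argue that the map $e \mapsto t_e$ is injective on all of $\overline F_{\killer}$: if $t_e = t_f$ with $l_e < l_f$, then $t_e$ was marked dead at iteration $l_e$ and could not be reassigned at the later iteration $l_f$, since only \emph{alive} terminals get assigned to a newly bought killer edge. So distinct killer edges receive distinct terminals.

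Next I would show that for each $e \in \overline F^l_{\killer}$, the assigned terminal $t_e$ is the (unique) alive terminal of some active moat in $\mathcal{A}_l$; combined with injectivity of $e \mapsto t_e$ and the invariant that each active moat contains exactly one alive terminal, this immediately gives $|\overline F^l_{\killer}| \le |\mathcal{A}_l|$. To see this, fix $e \in \overline F^l_{\killer}$, so there is some $A \in \mathcal{A}_l$ with $e \in \Delta^l_{\killer}(A)$, i.e.\ $e$ is a killer edge with respect to $A$ under $\fs$ and $A$ is still active at the start of iteration $l$. In particular $e \notin \fs$, so $e$ was purchased at some iteration $l_e \ge l$. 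The key claim is that $t_e$, the alive terminal assigned to $e$ at iteration $l_e$, actually lies in $A$ (and is $A$'s alive terminal at iteration $l$). The intuition: being a killer edge with respect to $A$ means $e$ ``merges'' $C_A$ into a structure reachable from a terminal or the root that blocks $C_A$ from ever again sitting strictly inside an active set; I would argue this property, once it holds under $\fs$, persists as $F$ only grows, so when $e$ is finally bought it is still killer with respect to the moat that descended from $A$ and contains $t_e$ — hence $t_e$ dies at $l_e$ and $t_e \in A$. The monotonicity step is the crux: I need that if $e = (u,v)$ is killer with respect to $A$ under $\fs$ (meaning $C_A$ is not strictly contained in any active set of $\fs \cup \{e\}$), then for any $F' \supseteq \fs$ with $A$'s alive terminal $t$ still in an active moat $A''$ of $F'$, $e$ remains killer with respect to $A''$ — because a $\fs \cup \{e\}$-path certifying $C_A$ is reachable from another terminal/root is still present in $F' \cup \{e\}$, and Lemma \ref{remark on active moats} constrains how active moats can contain $C_{A''}$.

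I expect the monotonicity argument in the previous paragraph to be the main obstacle: one has to carefully track how active moats evolve across iterations and confirm that the ``killer'' certificate for $e$ with respect to $A$ is not destroyed by purchasing other edges, and that the alive terminal of $A$ genuinely stays alive until $e$ itself is bought (which is exactly what makes it eligible to be assigned to $e$). The remaining pieces — injectivity of $e\mapsto t_e$ and the one-alive-terminal-per-moat invariant — are routine inductions on the iteration counter, using the definition of how terminals are marked dead. Once these are in place, the counting $|\overline F^l_{\killer}| = |\{t_e : e \in \overline F^l_{\killer}\}| \le |\{\text{alive terminals at iteration } l\}| = |\mathcal{A}_l|$ finishes the proof. (Note the statement is phrased with $F^l_{\killer}$; I read this as $\overline F^l_{\killer}$ from Definition \ref{important killer edges}.)
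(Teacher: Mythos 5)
Your bookkeeping skeleton (assign to each purchased killer edge a terminal that dies because of it, argue injectivity, and count alive terminals against active moats) is exactly the paper's, but the step you single out as the crux is both left unproven in your write-up and not what the lemma needs --- and as stated it is doubtful. You want to show that for $e\in\overline F^l_{\killer}$ the terminal $t_e$ assigned at the purchase iteration $l_e$ lies in the specific moat $A$ witnessing $e\in\Delta^{l}_{\killer}(A)$, via a monotonicity claim that ``killer with respect to $A$ under $\fs$'' persists for every $F'\supseteq\fs$ and the descendant moat of $A$. Nothing of the sort is established by the algorithm's definitions: between iterations $l$ and $l_e$ moats merge and die, the set of moats with respect to which $e$ is killer can change (the bad example of Section \ref{standard primal-dual fails} is built precisely around edges whose killer buckets fill but which later act as expansion edges for the surviving moat), and at iteration $l_e$ the terminal assigned to $e$ is chosen arbitrarily among the possibly several moats being killed at that moment, none of which need contain $A$'s alive terminal from iteration $l$. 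So the localization $t_e\in A$ may simply fail, and the argument as you propose it does not go through.

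The gap is easily repaired, and the repair is the paper's proof: you never need to know where $t_e$ lives, only that it is still alive at the beginning of iteration $l$. Membership $e\in\overline F^l_{\killer}$ means some moat is paying toward $e$'s killer bucket at iteration $l$, so $e\notin\fs$ and $e$ is purchased at some iteration $l_e\geq l$; by construction $t_e$ is alive until $e$ is added to $F$, hence alive at iteration $l$. Your injectivity of $e\mapsto t_e$ together with the invariant that alive terminals are in one-to-one correspondence with active moats then yields $|\overline F^l_{\killer}|\leq|\mathcal{A}_l|$ immediately; the paper in fact bounds the larger set $\overline F_{\killer}\setminus\fs$ in exactly this way and notes $\overline F^l_{\killer}\subseteq\overline F_{\killer}\setminus\fs$.
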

\begin{proof}
As shown above, every killer edge $e$ is assigned to a terminal $t_e$ that was alive until $e$ was added to $F$. Thus, at iteration $l$ all the edges in $\overline F_{\killer}\setminus\fs$ correspond to a terminal that is alive at this iteration. Since there is a one-to-one correspondence between alive terminals and active sets, the number of edges in $\overline F_{\killer}\setminus \fs$ is at most $|\mathcal{A}_l|$. The lemma follows by noticing that $\overline F^l_{\killer}\subseteq \overline F_{\killer}\setminus\fs$.
\end{proof}

Note that the above lemma does not readily bound $\sum\limits_{A\in\mathcal{A}_l}|\Delta^l_{\killer}(A)|$ against $|\mathcal{A}_l|$ which is required to prove inequality (\ref{ineqaulity primal-dual analysis}). We need the properties of minor-free graphs to do so. In the next section we prove a similar result for expansion edges and then using the properties of the underlying graph, we demonstrate our approximation guarantee

\subsection{Counting the number of expansion edges in an iteration}\label{section bound on expansion edges}
The high level idea to bound the number of expansion edges is to look at the graph $\overline F\cup\fs$ and contract all SCCs\footnote{Recall that we do NOT call a Steiner node that is a singleton strongly connected component of $(V,\fs)$ an SCC. So every SCC in $(V,\fs)$ is either $\{r\}$ or contains at least one terminal node.} of $(V,\fs)$. Then, we construct an auxiliary tree that highlights the role of expansion edges to the connectivity of active moats. Then, using this tree we provide our charging scheme and show the number of edges in $\overline F_{\expansion}$ that are being paid by some active moats at iteration $l$ is at most twice the number of active moats.



We fix an iteration $l$ for this section. First let us recall some notation and definition that we use extensively in this section.

\begin{itemize}
    \item $\overline F$ is the output solution of the algorithm.
    \item $\fs\subseteq E$ is the set of purchased edges in the growing phase up to the beginning of iteration $l$ (i.e., set $F$ in the algorithm at iteration $l$).
    \item $\mathcal{A}_l$ is the set of active moats with respect to $\fs$ (see Definition \ref{active moats}). Recall each $A\in\mathcal{A}_l$ is consist of an SCC (with respect to edges in $\fs$) and bunch of Steiner nodes. Denote by $C_A$ the SCC part of $A$.
\end{itemize}

We define an analogue of Definition \ref{important killer edges} for expansion edges.
\begin{definition}\label{expansion edge that matter}
Fix an iteration $l$. Then, we define
\[
\overline F^l_{\expansion}:=\{e\in\overline F_{\expansion}:~\exists A\in\mathcal{A}_l~s.t.~e\in\Delta^{l}_{\expansion}(A)\},
\]
in other words, $\overline F^l_{\expansion}$ is the set of all expansion edges in $\overline F\setminus\fs$ such that some active moat(s) is paying toward their expansion bucket at iteration $l$.
\end{definition}

This section is devoted to prove the following inequality.
\begin{lemma}\label{the bound on the number of expansion edges}
At the beginning of each iteration $l$ of the algorithm, we have $|\overline F^l_{\expansion}|\leq 2\cdot|\mathcal{A}_l|$.
\end{lemma}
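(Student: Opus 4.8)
\textbf{Proof proposal for Lemma \ref{the bound on the number of expansion edges}.}

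The plan is to build, at the fixed iteration $l$, an auxiliary tree $\auxGraph$ that records the reachability structure among the SCCs of $(V,\fs)$ via the edges of $\overline F\cup \fs$, and then charge the expansion edges in $\overline F^l_{\expansion}$ to the active moats using a token argument on $\auxGraph$. First I would form the graph obtained from $(V,\overline F\cup\fs)$ by contracting every SCC of $(V,\fs)$ (each such SCC is either $\{r\}$ or contains at least one terminal, and by Lemma \ref{remark on active moats} each active moat $A$ consists of exactly one such SCC $C_A$ together with some Steiner nodes pointing into $C_A$). Since $\overline F$ is the pruned (inclusion-wise minimal) output, the contracted graph is essentially an arborescence rooted at the contracted $\{r\}$, possibly with a few extra Steiner nodes hanging off; this near-acyclicity is what lets us speak of a tree. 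I would root this structure at $r$ and let $\auxGraph$ be the tree on the ``SCC-nodes'' (one node per contracted SCC of $\fs$), where the parent of an SCC-node $C$ is the SCC-node closest to the root from which $C$ is reached in $\overline F\cup\fs$ via a path whose last non-Steiner hop is the relevant edge. The key point of the construction is that every edge $e\in\overline F^l_{\expansion}$ should correspond to a specific edge of $\auxGraph$, namely the one ``responsible'' for pulling its target SCC (or a newly merged active-set SCC) toward the root; the definition of expansion edge (Definition \ref{relation between edges and active moats}) — that adding $e$ to $\fs$ creates an active set strictly containing $C_A$ — is exactly the statement that $e$ genuinely extends connectivity, so $e$ is not redundant in $\auxGraph$.

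Next I would assign tokens. Give each active moat $A\in\mathcal{A}_l$ two tokens (this is where the factor $2$ comes from). The goal is a token distribution in which every edge of $\auxGraph$ that some expansion edge in $\overline F^l_{\expansion}$ maps to receives at least one token, with no token used twice; summing gives $|\overline F^l_{\expansion}|\le 2|\mathcal{A}_l|$. The natural scheme is: route one token of $A$ ``up'' along $\auxGraph$ from $C_A$ toward the root and one token ``down''/to a sibling, so that each tree edge collects a token from the moat on each of its two sides. To make this rigorous I would argue that along any root-to-leaf path in $\auxGraph$, the SCC-nodes that are $C_A$ for some active moat $A$ are interleaved densely enough with the expansion-edge-images: concretely, every maximal path of $\auxGraph$-edges all of which are images of $\overline F^l_{\expansion}$-edges must have an active moat at or just below its bottom end (because an expansion edge, by definition, is paid by an active moat $A$ with the relevant $C_A$ strictly inside the newly formed active set at iteration $l$, so $C_A$ sits at the lower end of that edge in $\auxGraph$), and the components of $\auxGraph$ after deleting non-image edges are paid for by a local accounting of the moats they contain. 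A clean way to present this is a bottom-up traversal of $\auxGraph$: process SCC-nodes in reverse topological (leaf-first) order, maintain for each node a count of ``unpaid'' expansion-edge-images in its subtree, observe that each active moat encountered can discharge at most two of these, and that the invariant ``unpaid count $\le 2\cdot(\text{active moats not yet used in this subtree})$'' is preserved.

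The main obstacle — flagged already in the text as the reason expansion edges need novelty — is the case where an expansion edge $e\in\overline F$ survives pruning not because of the active moats it is currently paying into, but because it is needed to keep some terminal $t$ connected that lies in no active moat at iteration $l$ (so $t$ is already ``served'' by $\fs$ from below but its only route to $r$ in $\overline F$ passes through $e$). In that situation there may be no active moat sitting immediately below the $\auxGraph$-edge of $e$, so the naive ``charge to the moat just below'' fails. The way I would handle this is to show that such an edge $e$ must lie on the $\auxGraph$-path between the root and some SCC-node $C_A$ with $A\in\mathcal{A}_l$ — indeed $e$ being an expansion edge at iteration $l$ means some active moat $A$ pays toward its expansion bucket, and that $A$'s SCC $C_A$ is a strict subset of the active set created by adding $e$, hence $C_A$ is reachable ``through'' $e$ — and then let the single upward token from that $A$ pay for the entire run of such edges between $C_A$ and the next lower active-moat SCC-node or branching point, using the fact (from the structure of $\overline F\cup\fs$) that along such a run the tree does not branch, so one token suffices per run and the second token is free to be passed to the parent run. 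Verifying that these runs partition correctly, that no token is double-counted across branches, and that the interaction of two active moats (which by Lemma \ref{remark on active moats} can only share Steiner nodes outside their SCCs) does not create an SCC-node claimed by two moats, is the technical heart; once the run-decomposition of $\auxGraph$ is set up, the count $|\overline F^l_{\expansion}|\le 2|\mathcal{A}_l|$ follows by summing the two-token budget over $\mathcal{A}_l$.
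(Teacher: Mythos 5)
Your outline shares the paper's skeleton (contract the SCCs of $(V,\fs)$, work with a tree rooted at $r$, give each active moat two tokens, and use the $\fs$-reachability from $C_A$ to the tail of an expansion edge to rescue the problematic case), but the steps that carry the actual weight are either missing or false as stated. First, the contracted graph built from $\overline F\cup\fs$ is not ``essentially an arborescence'': $\fs$ contains edges that were later pruned, and mixed cycles through contracted SCCs are perfectly possible, so you cannot ``speak of a tree'' without doing real work. The paper instead forms an auxiliary graph $\auxGraph$ by adding, for every expansion edge $(u,v)$ and every active vertex $w$ with an $\fs$-path to $u$, an extra edge $(w,u)$, and then takes a shortest-path tree $\overline T$ of $\auxGraph$; your construction (``parent is the SCC-node closest to the root from which $C$ is reached\dots'') is not well defined and is never shown to be acyclic. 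Second, your central structural claims are unproven and in general wrong: a maximal run of expansion edges need not have an active moat ``at or just below its bottom end'' (the head of an expansion edge may be a Steiner node of $A$ whose $\fs$-edge into $C_A$ was pruned, so the nearest active vertex in the subtree can sit arbitrarily many expansion edges deeper -- exactly the paper's ``bad'' expansion edges), the tree can certainly branch along such a run, and the direction of reachability is the reverse of what you assert: the definition gives an $\fs$-path \emph{from} $C_A$ \emph{to} the tail $u$, so after the paper's reparenting $C_A$ becomes an ancestor of $u$; it does not follow that $e$ lies on the root-to-$C_A$ path of your tree. Third, letting ``the single upward token from $A$ pay for the entire run'' does not bound $|\overline F^l_{\expansion}|$: if a run contains several expansion edges and receives one token, the count $|\overline F^l_{\expansion}|\le 2|\mathcal{A}_l|$ simply does not follow.

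What the paper actually has to do to close these gaps is the technical heart that your proposal defers: prove that the tail of each edge in $\overline F^l_{\expansion}$ is an inactive terminal SCC reachable by an $\fs$-path from $C_A$ (Claim \ref{properties of the endpoints of expansion edge}); show via the pruning step that every arborescence of $\auxGraph$ contains all of $\overline F^l_{\expansion}$ and has an active vertex below every expansion edge (Lemma \ref{expansion edges present in arborescence}); classify expansion edges as good or bad using the expansion level $\explevel$; reroute every bad edge by adding an auxiliary edge $(w,u)$ from the active vertex $w$ corresponding to the paying moat (Lemma \ref{while loop works}); prove the rerouted graph $\overline T\cup\added$ is a DAG using monotonicity of expansion levels along alternating paths (Lemma \ref{reachability from r}), extract an arborescence $T^*$ (Claim \ref{valid cut and paste}) and check goodness is preserved (Lemma \ref{good exp edge remains good}); and finally run a bottom-up inductive token-discharging argument on $T^*$ in which each expansion edge individually receives its own unused token. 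Without the good/bad dichotomy, the rerouting, and the acyclicity argument, the two-tokens-per-moat charging cannot be made to work, so the proposal as written has a genuine gap rather than an alternative proof.
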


\noindent{\bf Sketch of the proof:}
We start by giving a sketch of the proof of Lemma \ref{the bound on the number of expansion edges}. Consider the subgraph $\fs\cup \overline F$ of $G$. Contract every SCC of $(V,\fs)$ and denote the resulting subgraph by $H$ (keeping all copies of parallel edges that may result). For every non-root, non-Steiner node $v\in V(H)$, we call $v$ active if it is a contraction of an SCC that is a subset of an active moat in $\mathcal{A}_l$, otherwise we call $v$ inactive. Note that $r$ is a singleton SCC in $(V,\fs)$ and therefore $r\in V(H)$. We call an edge in $E(H)$ an expansion edge, if its corresponding edge is in $\overline F^l_{\expansion}$. Note that every non root vertex in $V(H)$ is either labeled active/inactive, or it is a Steiner node. Lemma \ref{the bound on the number of expansion edges} follows if we show the number of expansion edges in $H$ is at most twice the number of active vertices in $H$. As we stated at the beginning of this section, we use an arborescence that highlights the role of expansion edges to the connectivity of active vertices in $H$. A bit more formally, we show if every expansion edge is \say{good} with respect to the arborescence, which is formalized below, then every expansion edge is \say{close} to an active vertex in $H$ and we use this in our charging scheme.

Given an arborescence $T$, define $\explevel_T(v)$ to be the expansion level of $v$ with respect to $T$, i.e., the number of expansion edges on the dipath from $r$ to $v$ in $T$.

\begin{definition}\label{def, good/bad exp}
Given an arborescence $T$ and an expansion edge $e=(u,v)$, we say $e$ is a {\em good} expansion edge with respect to $T$ if one of the following cases happens:
\begin{itemize}
    \item Type 1: If $u$ has an active ancestor $w$ such that $\explevel_T(w)=\explevel_T(u)$.
    \item Type 2: If $e$ is not of type 1 and the subtree rooted at $u$ has an active vertex $w$ such that $\explevel_T(w)\leq \explevel_T(u)+1$.
\end{itemize}
Every expansion edge that is not of type 1 or type 2, is called a {\em bad} expansion edge with respect to $T$.
\end{definition}

A starting point for an arborescence that every expansions edge is good, is a shortest path arborescence rooted at $r$ in $H$ where each expansion edge has cost $1$ and the rest of the edges have cost $0$. However, as Figure \ref{pic highlevel idea expansion} shows, there could be some {\em bad} expansion edges in this arborescence. For example, $e$ is a {\em bad} expansion edge with respect to the arborescence in Figure \ref{pic highlevel idea expansion} (b). Since $B_2$, the tail of $e$, is an inactive vertex, there must be an active vertex, namely $A_3$, that has a dipath from $A_3$ to $B_2$ in $F_\ell$ (see Claim \ref{properties of the endpoints of expansion edge}). Then, we can \say{cut} the subtree rooted at $B_2$ and \say{paste} it under $A_3$ as shown in Figure \ref{pic highlevel idea expansion}(c). It is easy to verify that now every expansion edge is good with respect to the arborescence in Figure \ref{pic highlevel idea expansion}(c). We formalize this \say{cut and paste} procedures in Algorithm \ref{modifying bfs alg} and prove the output of the algorithm is an arborescence with the property that every expansion edge is good. At the end of this section, given an arborescence that every expansion edge is good, we show there is a rather natural charging scheme that proves Lemma \ref{the bound on the number of expansion edges}.

\begin{figure}[t]
\centering
\begin{minipage}{0.3\textwidth}
\centering
\begin{tikzpicture}[smallnode/.style={circle, draw=black, fill=black, very thick, scale =0.6},
bignode/.style={circle, draw=black, very thick, scale =.5},scale=.4]

\draw[blue, dashed] (-3, -2) ellipse (.7 and 2);

\draw[blue, dashed] (-2, -8) ellipse (2 and .7);

\draw[blue, dashed] (3.5, -1) ellipse (.7 and 2.2);

\draw[blue, dashed, rotate around={-15:(0.2,-3.7)}] (0.2, -3.7) ellipse (.6 and 1.7);

\draw[blue, dashed, rotate around={30:(1,-3.7)}] (1, -3.7) ellipse (.8 and 1.9);

\draw[blue, dashed] (2, -2.8) ellipse (2.4 and .7);

\draw[blue, dashed] (1.5, -6) ellipse (.7 and 1.7);

\node[smallnode, red, label=above:$r$] (r) at (0,0) {};
\node[bignode, blue] (A1) at (-3,-1){$A_1$};
\node[bignode, blue] (A2) at (3.5,0){$A_2$};
\node[bignode, blue] (A3) at (0.5,-3){$A_3$};
\node[bignode, blue] (A4) at (1.5,-7){$A_4$};
\node[bignode, blue] (A5) at (-3,-8){$A_5$};

\node[bignode, red] (B1) at (-1.5,-2.5){$B_1$};
\node[bignode, red] (B2) at (-3,-5){$B_2$};
\node[bignode, red] (B3) at (-1.5,-6){$B_3$};
\node[bignode, red] (B4) at (1,-1){$B_4$};
\node[bignode, red] (B5) at (3.5,-4.5){$B_5$};

\node[smallnode] (s1) at (-3,-3){};
\node[smallnode] (s2) at (0,-5){};
\node[smallnode] (s3) at (-1,-8){};

\node[smallnode] (s4) at (3.5,-2.7){};
\node[smallnode] (s5) at (1.5,-5){};

\draw[->, blue, very thick] (r) -- (A1) {};
\draw[->, very thick] (A1) -- (B1) {};
\draw[->, green, very thick] (B1) -- (s1) {};
\draw[->, very thick] (s1) -- (A1) {};

\draw[->, very thick] (s4) -- (A3) {};
\draw[->, very thick] (A3) -- (B4) {};
\draw[->, very thick] (A3) -- (B5) {};
\draw[->, very thick] (A3) -- (B2) {};
\draw[->, very thick] (s1) -- (B2) {};
\draw[->, green, very thick] (B2) -- (s2) node[black, midway, above] {$e$};
\draw[->, very thick] (s2) -- (A3) {};

\draw[->, very thick] (A5) -- (B3) {};
\draw[->, very thick] (s2) -- (B3) {};
\draw[->, green, very thick] (B3) -- (s3) {};
\draw[->, very thick] (s3) -- (A5) {};

\draw[->, blue, very thick] (r) -- (A2) {};
\draw[->, very thick] (A2) -- (B4) {};
\draw[->, green, very thick] (B4) -- (s4) {};
\draw[->, very thick] (s4) -- (A2) {};

\draw[->, very thick] (A4) -- (B5) {};
\draw[->, very thick] (s4) -- (B5) {};
\draw[->, green, very thick] (B5) -- (s5) {};
\draw[->, very thick] (s5) -- (A4) {};
\draw[->, very thick] (s5) -- (A3) {};
\end{tikzpicture}
\subcaption{}
\end{minipage}
\begin{minipage}{0.3\textwidth}
\centering
\begin{tikzpicture}[smallnode/.style={circle, draw=black, fill=black, very thick, scale =0.4},
bignode/.style={circle, draw=black, very thick, scale =1.5},scale=.4]

\node[smallnode, red, label=above:$r$] (r) at (0,0) {};
\node[smallnode, blue] (a1) at (-1,-1){};
\node[smallnode, blue] (a2) at (1,-1){};
\node[smallnode, red] (b1) at (-1,-2){};
\node[smallnode] (s1) at (-1,-3){};
\node[smallnode, red, label=left:$B_2$] (b2) at (-1,-4){};
\node[smallnode] (s2) at (-1,-6){};
\node[smallnode, red] (b3) at (-1,-7){};
\node[smallnode] (s3) at (-1,-8){};
\node[smallnode, blue] (a5) at (-1,-9){};

\draw[->, very thick, blue] (r) -- (a1) {};
\draw[->, very thick] (a1) -- (b1) {};
\draw[->, very thick, green] (b1) -- (s1) {};
\draw[->, very thick] (s1) -- (b2) {};
\draw[->, very thick, green] (b2) -- (s2) node[midway, left, black] {$e$};
\draw[->, very thick] (s2) -- (b3) {};
\draw[->, very thick, green] (b3) -- (s3) {};
\draw[->, very thick] (s3) -- (a5) {};

\node[smallnode, blue] (a2) at (1,-1){};
\node[smallnode, red] (b4) at (1,-2){};
\node[smallnode] (s4) at (1,-3){};
\node[smallnode, blue,label=right:$A_3$] (a3) at (1,-4){};
\node[smallnode, red] (b5) at (1,-5){};
\node[smallnode] (s5) at (1,-6){};
\node[smallnode, blue] (a4) at (1,-7){};

\draw[->, very thick, blue] (r) -- (a2) {};
\draw[->, very thick] (a2) -- (b4) {};
\draw[->, very thick, green] (b4) -- (s4) {};
\draw[->, very thick] (s4) -- (a3) {};
\draw[->, very thick] (a3) -- (b5) {};
\draw[->, very thick, green] (b5) -- (s5) {};
\draw[->, very thick] (s5) -- (a4) {};
\end{tikzpicture}
\subcaption{}
\end{minipage}
\begin{minipage}{0.3\textwidth}
\centering
\begin{tikzpicture}[smallnode/.style={circle, draw=black, fill=black, very thick, scale =0.4},
bignode/.style={circle, draw=black, very thick, scale =1.5},scale=.4]

\node[smallnode, red, label=above:$r$] (r) at (0,0) {};
\node[smallnode, blue] (a1) at (-1,-1){};
\node[smallnode, blue] (a2) at (1,-1){};
\node[smallnode, red] (b1) at (-1,-2){};
\node[smallnode] (s1) at (-1,-3){};
\node[smallnode, red, label=left:$B_2$] (b2) at (-1,-4){};
\node[smallnode] (s2) at (-1,-6){};
\node[smallnode, red] (b3) at (-1,-7){};
\node[smallnode] (s3) at (-1,-8){};
\node[smallnode, blue] (a5) at (-1,-9){};

\draw[->, very thick, blue] (r) -- (a1) {};
\draw[->, very thick] (a1) -- (b1) {};
\draw[->, very thick, green] (b1) -- (s1) {};
\draw[->, very thick, green] (b2) -- (s2) node[midway, left, black] {$e$};
\draw[->, very thick] (s2) -- (b3) {};
\draw[->, very thick, green] (b3) -- (s3) {};
\draw[->, very thick] (s3) -- (a5) {};

\node[smallnode, blue] (a2) at (1,-1){};
\node[smallnode, red] (b4) at (1,-2){};
\node[smallnode] (s4) at (1,-3){};
\node[smallnode, blue,label=right:$A_3$] (a3) at (1,-4){};
\node[smallnode, red] (b5) at (1,-5){};
\node[smallnode] (s5) at (1,-6){};
\node[smallnode, blue] (a4) at (1,-7){};

\draw[->, very thick] (a3) -- (b2) {};
\draw[->, very thick, blue] (r) -- (a2) {};
\draw[->, very thick] (a2) -- (b4) {};
\draw[->, very thick, green] (b4) -- (s4) {};
\draw[->, very thick] (s4) -- (a3) {};
\draw[->, very thick] (a3) -- (b5) {};
\draw[->, very thick, green] (b5) -- (s5) {};
\draw[->, very thick] (s5) -- (a4) {};
\end{tikzpicture}
\subcaption{}
\end{minipage}
\caption{(a) shows part of the subgraph $\fs\cup\bar F$ of $G$, in particular, the SCCs of $(V,\fs)$ are shown with circles but the nodes inside SCCs are not shown for simplicity. The blue SCCs are inside some active moats shown with dashed ellipses. Contracting all the SCCs result in the graph $H$ discussed before. Black edges are in $\fs$, blue edges are in $\bar F\setminus\fs$, and green edges are in $\bar F^l_{\expansion}$. In (b), we have a shortest path arborescence rooted at $r$ where the cost of edges is one if it is green and zero otherwise. Note that $e$ is a bad expansion edge with respect to this arborescence. In (c), we show how to construct an arborescence using cut-and-paste procedure so that every expansion edge is a good expansion edge in the resulting arborescence.}
\label{pic highlevel idea expansion}
\end{figure}
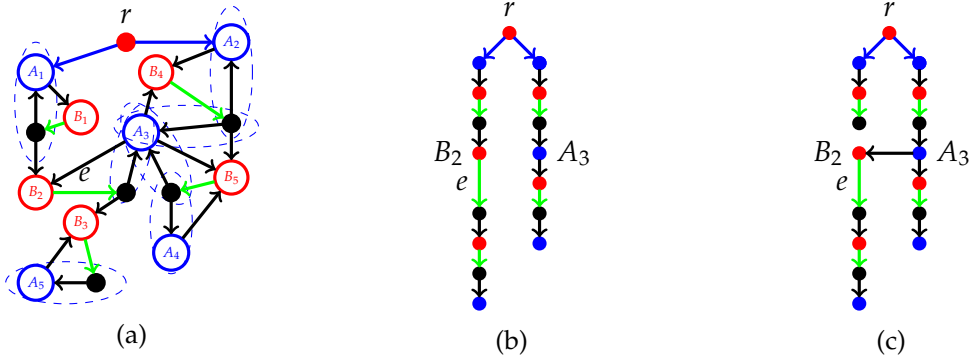

\noindent{\bf Detailed proof:} Our arguments use the following observations about edges being paid as expansion edges in this iteration.
\begin{claim}\label{properties of the endpoints of expansion edge}
Let $e=(u,v)\in\overline F^l_{\expansion}$, then $u\in X$, $v\in A$ for some $A\in\mathcal{A}_l$, and there is a $\fs$-path from $C_A$ to $u$. Furthermore, the SCC of $(V,\fs)$ that contains $u$ is not contained in any active moat in $\mathcal{A}_l$.
\end{claim}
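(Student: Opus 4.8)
\textbf{Proof proposal for Claim~\ref{properties of the endpoints of expansion edge}.}
The plan is to unwind Definition~\ref{relation between edges and active moats} together with the structural description of active moats in quasi-bipartite graphs (Lemma~\ref{remark on active moats}) and the labeling rule of Definition~\ref{def killer and expansion edge}. First I would fix $e=(u,v)\in\overline F^l_{\expansion}$. By Definition~\ref{expansion edge that matter} there is an active moat $A\in\mathcal{A}_l$ with $e\in\Delta^{l}_{\expansion}(A)$, so $e$ is paid as an expansion edge by $A$ under $\fs$. By the definition of expansion edge, $e$ is non-antenna, so its tail satisfies $u\in X\cup\{r\}$; I then rule out $u=r$ because (as the paper notes after Definition~\ref{relation between edges and active moats}) every edge leaving $r$ is a killer edge with respect to any active moat it enters, which would contradict $e$ being expansion with respect to $A$. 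Hence $u\in X$. Also by Definition~\ref{relation between edges and active moats}, $v\in A$ for some $A\in\mathcal{A}_l$, which is the second claim.

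Next I would establish the $\fs$-path from $C_A$ to $u$. Since $e$ is expansion with respect to $A$ under $\fs$, there is a set $A'$ that is active with respect to $\fs\cup\{e\}$ with $C_A\subsetneq A'$. The SCC of $(V,\fs\cup\{e\})$ inside $A'$ (which exists by Lemma~\ref{remark on active moats} applied to $\fs\cup\{e\}$) strictly contains $C_A$ and must be a strongly connected subgraph using the single new edge $e=(u,v)$ with $v\in C_A$; therefore it is formed by closing a cycle through $e$, which forces an $\fs$-path from $v$ (equivalently from $C_A$, since $v\in C_A$ by Lemma~\ref{remark on active moats} as $v\in A$ and $e$ enters $A$ — if $v$ were a Steiner node in $A\setminus C_A$ then by Lemma~\ref{remark on active moats} $v$ already has an $\fs$-edge into $C_A$, so $v$ lies in the SCC of $\fs\cup\{e\}$ only together with a path back, and in all cases there is an $\fs$-path from $C_A$ to $v$) back to $u$. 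Combining, there is an $\fs$-path from $C_A$ to $u$.

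Finally, for the ``furthermore'' part I would argue by contradiction: suppose the SCC $C^\ast$ of $(V,\fs)$ containing $u$ is contained in some active moat $A^\ast\in\mathcal{A}_l$. Using the $\fs$-path from $C_A$ to $u\in C^\ast$ just constructed, together with the edge $e=(u,v)$ and the fact that $v\in A$ gives an $\fs$-path from $v$ into $C_A$ (again via Lemma~\ref{remark on active moats}), we would obtain, in $(V,\fs\cup\{e\})$, that $C_A$ and $C^\ast$ lie in a common strongly connected set; but then $\delta^{in}_{\fs}(\cdot)$ considerations show $A^\ast$ could not have been active with respect to $\fs$ (its SCC $C^\ast$ is reachable from $C_A$ using only $\fs$-edges, so some edge of $\fs$ enters $C^\ast$ from outside, contradicting that $A^\ast\supseteq C^\ast$ is violated) — more carefully, the $\fs$-path from $C_A$ to $u$ enters $C^\ast$ using an edge of $\fs$, and since $C_A\cap A^\ast=\emptyset$ (distinct active moats share only Steiner nodes outside their SCCs, by the remark after Lemma~\ref{remark on active moats}), that edge witnesses $\delta^{in}_{\fs}(A^\ast)\neq\emptyset$, contradicting $A^\ast\in\mathcal{A}_l$. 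Hence no such $A^\ast$ exists.

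The main obstacle I anticipate is the bookkeeping in the middle step: carefully justifying that the new active set $A'$ with $C_A\subsetneq A'$ forces an $\fs$-path all the way from $C_A$ to the tail $u$, rather than merely some weaker reachability statement, and handling the case distinction on whether $v$ lies in $C_A$ or is a Steiner node of $A$. This requires a precise use of Lemma~\ref{remark on active moats} for both $\fs$ and $\fs\cup\{e\}$, and care that the edge $e$ is the unique ``new'' edge available to create the larger SCC. Once that reachability fact is pinned down, the ``furthermore'' clause follows quickly, since it is essentially the contrapositive of $A^\ast$ being violated.
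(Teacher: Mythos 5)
Your overall plan follows the paper's proof (unwind Definition~\ref{relation between edges and active moats}, use Lemma~\ref{remark on active moats} for $\fs$ and for $\fs\cup\{e\}$, then read off the ``furthermore'' clause from the $\fs$-path), but the middle step --- producing the $\fs$-path from $C_A$ to $u$ --- has a genuine gap, and it is exactly the step you flagged. First, you assert that the SCC $C_{A'}$ of $(V,\fs\cup\{e\})$ inside $A'$ \emph{strictly} contains $C_A$, but Definition~\ref{relation between edges and active moats} only gives $C_A\subsetneq A'$; a priori the extra vertices of $A'$ could all be Steiner nodes with $C_{A'}=C_A$, so strictness is not free. The missing idea (and how the paper proceeds) is to show $u$ itself lies in $A'$: there is an $\fs\cup\{e\}$-path from $u$ into $C_A\subseteq A'$ (the edge $e$, followed, when $v\in A\setminus C_A$, by the $\fs$-edge from $v$ into $C_A$ guaranteed by Lemma~\ref{remark on active moats}); if $u\notin A'$, the first edge of this path whose head is in $A'$ would be an edge of $\fs\cup\{e\}$ entering $A'$, contradicting that $A'$ is violated with respect to $\fs\cup\{e\}$. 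Hence $u\in A'$, and since $u$ is a terminal, Lemma~\ref{remark on active moats} puts $u\in C_{A'}$. Second, your patch for the case that $v$ is a Steiner node --- ``in all cases there is an $\fs$-path from $C_A$ to $v$'' --- is false in general: Lemma~\ref{remark on active moats} gives an $\fs$-edge from $v$ \emph{into} $C_A$, not the reverse, and $v$ may have no incoming $\fs$-edge at all. The correct finish does not route through $v$: once $u\in C_{A'}$ and $C_A\subseteq C_{A'}$, strong connectivity of $C_{A'}$ in $\fs\cup\{e\}$ gives a dipath from $C_A$ to $u$, and such a path cannot use $e$ (its prefix up to its first visit of $u$ already suffices), so it is an $\fs$-path --- this is precisely the paper's argument.

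The remaining parts are essentially right and match the paper in spirit: $u\in X\cup\{r\}$ because $e$ is non-antenna, and $u\neq r$ (your appeal to the observation that edges leaving $r$ are killer with respect to any moat they enter is a valid alternative to the paper's route, which instead notes $u\in A'$ and $A'$, being active, avoids $r$); $v\in A$ is immediate from Definition~\ref{expansion edge that matter}; and the ``furthermore'' clause does follow from the $\fs$-path as you argue, with one small repair: take the first edge of the path whose head lies in $A^\ast$ (its tail is outside $A^\ast$ since the path starts in $C_A$ and $C_A\cap A^\ast=\emptyset$), rather than an edge entering $C^\ast$, since an edge entering $C^\ast$ could have its tail at a Steiner node of $A^\ast\setminus C^\ast$ and then would not witness $\delta^{in}_{\fs}(A^\ast)\neq\emptyset$.
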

\begin{proof}
Since $e$ is an expansion edge with respect to $A$ by Definition \ref{relation between edges and active moats} there exists $A'\subsetneq V$ that is active with respect to $\fs\cup\{e\}$. Since $e$ is a non-antenna edge, $u$ must be a terminal. Furthermore, $u\neq r$ because $A'$ is active so $u\in X$. By Lemma \ref{remark on active moats}, the SCC part $C_{A'}$ of $A'$ contains both $u$ and all vertices in $C_A$, hence there is a dipath in $\fs\cup\{e\}$ from $C_A$ to $u$. However, notice that this dipath cannot contain $e$, thus the path is actually a $\fs$-path. Finally, since there is a $\fs$-path from $C_A$ to $u$, the SCC $B$ of $\fs$ that contains $u$ is not a violated set and therefore no active moat in $\mathcal{A}_l$ contains $B$.
\end{proof}

Recall the definition of graph $H$. We state couple of facts about this graph which will be useful later.

\begin{claim}\label{inactive vertices are reachable}
For every inactive vertex $v$ in $H$, there is a $\fs$-path from either $r$ or an active vertex to $v$.
\end{claim}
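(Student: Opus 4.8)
\textbf{Proof proposal for Claim~\ref{inactive vertices are reachable}.}

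The plan is to trace back through the history of the algorithm to the iteration at which the SCC underlying the inactive vertex $v$ was formed, and argue that at that moment its constituent terminals were either already reachable from $r$ or were (parts of) active moats that have since been absorbed. First I would set up notation: let $B \subseteq V$ be the SCC of $(V,\fs)$ that was contracted to form the inactive vertex $v$ of $H$; by the definition of SCC used in the paper (Lemma~\ref{remark on active moats} and the surrounding discussion), $B$ contains at least one terminal, so fix a terminal $t \in B$. Since $v$ is inactive, $B$ is not contained in any active moat in $\mathcal{A}_l$; in particular $B$ is not a violated set with respect to $\fs$, so $\delta^{in}_{\fs}(B) \neq \emptyset$. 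Let $e = (p,q)$ be any edge of $\fs$ entering $B$, with $q \in B$ and $p \notin B$. Since the graph is quasi-bipartite and $q$ might be a Steiner or terminal node, I would note that $p \in X \cup \{r\}$ whenever $q$ is Steiner, and handle both cases; in any event $p$ lies in some SCC $B'$ of $(V,\fs)$ (possibly $B' = \{r\}$), and there is a $\fs$-path from $B'$ to $t$ through $e$.

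Next I would argue by induction on the natural ``reachability order'' of SCCs — more precisely, on the iteration at which each SCC became fully formed, or equivalently by induction along the DAG obtained from $(V,\fs)$ by contracting all SCCs, which is acyclic. If $B' = \{r\}$ we are done: there is a $\fs$-path from $r$ to $t$, hence (after contraction) a $\fs$-path from $r$ to $v$ in $H$. If $B'$ is (contained in) an active moat of $\mathcal{A}_l$, then $B'$ contracts to an active vertex $u$ of $H$, and the $\fs$-path from $B'$ to $B$ gives a $\fs$-path from $u$ to $v$, as required. Otherwise $B'$ is the SCC of some inactive vertex $v'$ strictly earlier in the DAG order, and by the inductive hypothesis there is a $\fs$-path from $r$ or from an active vertex to $v'$; concatenating with the $\fs$-path from $v'$ to $v$ finishes the induction. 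The base case is exactly the situation $B' = \{r\}$ or $B'$ active, and acyclicity of the contracted graph guarantees the induction terminates.

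The main obstacle I anticipate is not the reachability bookkeeping itself but making sure that ``inactive'' is used with exactly the right meaning: an inactive vertex is a contraction of an SCC that is \emph{not} a subset of any active moat, but it need not be the case that such an SCC was ever an active moat in a prior iteration — it could, a priori, be an SCC that simply grew without ever being the SCC-part of a moat that got ``killed.'' So I would want to double-check that the dichotomy ``either reachable from $r$ in $\fs$, or contained in a current active moat, or (inductively) downstream of one of these'' is genuinely exhaustive for SCCs of $(V,\fs)$. This should follow because any SCC $B$ of $\fs$ that is not violated with respect to $\fs$ has an entering $\fs$-edge, and that edge originates in a strictly earlier SCC in the DAG; pushing this back far enough, one must eventually reach either $r$ or a violated set, and a minimal violated set is an active moat. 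I would make this explicit as the heart of the argument, since it is precisely what ties the static structure of $H$ back to the semantics of active moats that the rest of Section~\ref{section bound on expansion edges} relies on.
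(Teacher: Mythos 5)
Your overall strategy (trace reachability backwards through the DAG of SCCs) is in the same spirit as the paper's argument, but there is a genuine gap in the induction step. You pick an arbitrary edge $e=(p,q)\in\fs$ entering $B$ and assert that ``in any event $p$ lies in some SCC $B'$ of $(V,\fs)$.'' That is false in general: by the paper's convention a Steiner node whose strongly connected component is a singleton is \emph{not} an SCC, and the edge you picked may be an antenna edge of $\fs$ whose tail $p$ is exactly such a Steiner node (your case split ``$p\in X\cup\{r\}$ whenever $q$ is Steiner'' covers only the other case, where $q$ is Steiner). Such a node $p$ need not have any entering edge of $\fs$ at all --- the algorithm buys antenna edges out of Steiner nodes that are not yet reachable from anything --- so your backward trace can simply stall at $p$. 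And because you fixed the entering edge of $B$ arbitrarily, stalling on one trace proves nothing: the claim may hold only via a different edge of $\delta^{in}_{\fs}(B)$, and your local, one-edge-at-a-time argument gives no way to find that edge or to rule out the bad case.

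The paper sidesteps this by arguing globally rather than edge-by-edge: take a source SCC $C$ among all SCCs that can reach $B$ via $\fs$-paths, and consider the set $S$ of all vertices having an $\fs$-path to $C$ (this closure absorbs the problematic Steiner tails). By construction $\delta^{in}_{\fs}(S)=\emptyset$ and $S$ contains a terminal, so either $r\in S$ (done) or $S$ is violated and hence contains a minimal violated set, i.e.\ an active moat $A$; by Lemma \ref{remark on active moats} its SCC part $C_A$ is an SCC contained in $S$, so $C_A$ reaches $B$, and it is an active vertex of $H$ (it cannot be $B$ itself since $v$ is inactive). This is precisely the missing mechanism behind your closing sentence ``pushing this back far enough, one must eventually reach either $r$ or a violated set'': the push-back must be performed on the whole reachability closure, not along a single arbitrarily chosen edge between SCCs. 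Your opening reduction (inactive $\Rightarrow$ $B$ not violated $\Rightarrow$ $\delta^{in}_{\fs}(B)\neq\emptyset$) is correct, though the first implication itself silently uses Lemma \ref{remark on active moats}: a minimal violated subset of the SCC $B$ has SCC part equal to $B$, so $B$ would lie inside an active moat, contradicting inactivity.
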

\begin{proof}
Let $v$ be the contraction of SCC $B$. Consider all SCCs in $(V,\fs)$ that $B$ is reachable from via a $\fs$-path and pick such SCC $C$ that is not reachable from any other SCCs of $(V,\fs)$, it is easy to see that either $C=\{r\}$ or $C$ is inside an active moat and therefore, $v$ is reachable from the active vertex that is the contraction of $C$.
\end{proof}

\begin{claim}\label{expansion endpoints}
For every expansion edge $e=(u,v)\in E(H)$, $u$ must be inactive and $v$ is either active or a Steiner node.
\end{claim}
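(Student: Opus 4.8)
For every expansion edge $e=(u,v)\in E(H)$, $u$ must be inactive and $v$ is either active or a Steiner node.

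Let me think about what's being claimed here.The plan is to obtain Claim~\ref{expansion endpoints} essentially for free from the already-established Claim~\ref{properties of the endpoints of expansion edge}, together with the structural description of active moats in Lemma~\ref{remark on active moats}. The only real work is translating statements about vertices of $G$ into statements about vertices of the contracted graph $H$.

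First I would fix an expansion edge $e=(u,v)\in E(H)$ and let $e_0=(u_0,v_0)$ be its preimage in $G$, so by definition $e_0\in\overline F^l_{\expansion}$. Recall that $H$ is obtained from $\fs\cup\overline F$ by contracting every SCC of $(V,\fs)$; since every terminal lies in some SCC of $(V,\fs)$ (possibly a singleton that is still called an SCC), the vertex $u$ is exactly the contraction of the SCC containing $u_0$, and $v$ is the contraction of the SCC containing $v_0$, or $v_0$ itself if $v_0$ is a Steiner node lying in no SCC.

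For the tail $u$: Claim~\ref{properties of the endpoints of expansion edge} gives $u_0\in X$ and tells us the SCC $B$ of $(V,\fs)$ containing $u_0$ is not contained in any active moat in $\mathcal{A}_l$. Since $B$ contains the terminal $u_0$, the contracted vertex $u$ is neither $r$ nor a Steiner node, so it is subject to the active/inactive labeling; and because $B$ is not a subset of any active moat, $u$ is inactive by definition. For the head $v$: Claim~\ref{properties of the endpoints of expansion edge} puts $v_0$ in some active moat $A\in\mathcal{A}_l$, and by Lemma~\ref{remark on active moats} such an $A$ is the disjoint union of its SCC $C_A$ and some Steiner nodes outside $C_A$. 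Hence either $v_0\in C_A$, in which case $v$ is the contraction of $C_A\subseteq A$ and is therefore active, or $v_0\in A\setminus C_A$ is a Steiner node and $v=v_0$ remains a Steiner node in $H$. Either way $v$ is active or a Steiner node, which finishes the argument.

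I do not anticipate a genuine obstacle here; the one point that needs to be stated carefully is the bookkeeping about which SCCs become vertices of $H$ and which labels they receive — specifically that any contracted SCC containing a terminal is always one of the active/inactive vertices (it is neither $\{r\}$ nor a singleton Steiner node), and that for such a vertex "inactive" means precisely "the underlying SCC is not a subset of any active moat in $\mathcal{A}_l$." Once that correspondence is spelled out, Claim~\ref{expansion endpoints} is an immediate consequence of Claim~\ref{properties of the endpoints of expansion edge} and Lemma~\ref{remark on active moats}.
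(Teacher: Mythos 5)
Your proof is correct and follows essentially the same route as the paper: both deduce from Claim~\ref{properties of the endpoints of expansion edge} that the tail's SCC contains a terminal but lies in no active moat (hence its contraction is labeled inactive), and that the head lies in an active moat $A$, so it is either in $C_A$ (contracted to an active vertex) or a Steiner node of $A\setminus C_A$. Your explicit appeal to Lemma~\ref{remark on active moats} and the remark on the labeling of terminal-containing SCCs just spells out bookkeeping the paper leaves implicit.
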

\begin{proof}
Let $e'=(u',v')\in\overline F^l_{\expansion}$ be the corresponding edge to $e$. By Claim \ref{properties of the endpoints of expansion edge}, $u'\in X$ and the SCC $B$ in $(V,\fs)$ that contains $u'$ is not a subset of any active moat in $\mathcal{A}_l$. Therefore, $u$ is the contraction of such SCC $B$ and so it is labeled inactive. Again by Claim \ref{properties of the endpoints of expansion edge}, $v'\in A$ for some $A\in\mathcal{A}_l$. If $v'$ is not a Steiner node (and therefore $v$ is not a Steiner node) then $v'\in C_A$ and $v$ is the contraction of $C_A$ and so it is labeled active.
\end{proof}

To simplify the exposition, we use the following auxiliary graph instead of $H$ in proving the main lemma of this section. With abuse of notation, we say a dipath in $H$ is a $\fs$-path if its corresponding dipath in $\fs\cup\overline F$ is a $\fs$-path.
\begin{definition}[Auxiliary graph $\auxGraph$]\label{auxiliary graph}
For every expansion edge $e=(u,v)$ in $H$ and every active vertex $w$ in $H$ such that there is a $\fs$-path from $w$ to $u$, add an auxiliary edge $(w,u)$\footnote{We might create parallel edges but since at the end we work with arborescence, the parallel edges do not matter.}. Set the cost of each expansion edge to $1$ and the rest of the edges (including the auxiliary edges) have cost $0$. Denote this graph by $\auxGraph$.
\end{definition}

\begin{figure}
\centering
\begin{minipage}{0.4\textwidth}
\begin{tikzpicture}[smallnode/.style={circle, draw=black, fill=black, very thick, scale =0.4},
bignode/.style={circle, draw=black, very thick, scale =1.5},scale=.75]

\node[smallnode, red, label=above:$r$] (r) at (0,0) {};
\node[smallnode] (s1) at (0,-1){};
\node[bignode, red] (b1) at (1,-2){};
\node[bignode, blue] (a1) at (-1,-2){};
\node[bignode, blue] (a2) at (1,-3.5){};
\node[smallnode] (s2) at (0.3,-4.7){};
\node[smallnode] (s3) at (1.65,-4.9){};
\node[bignode, red] (b2) at (-1.5,-5){};
\node[bignode, red] (b3) at (0,-6){};
\node[smallnode] (s4) at (-1,-7){};
\node[bignode, blue] (a3) at (1,-7){};
\node[bignode, red] (b4) at (4,-5){};
\node[bignode, blue] (a4) at (3,-6.5){};
\node[smallnode] (h) at (-.7,-3.5){};

\draw[blue, dashed] (1, -4) circle (1.3);
\draw[blue, dashed] (-1, -2) circle (1);
\draw[blue, dashed, rotate=45] (-2.4, -6) ellipse (.7 and 1.7);
\draw[blue, dashed] (0, -7) ellipse (1.5 and .5);

\draw[->, blue, very thick] (r) -- (s1) {};
\draw[->, blue, very thick] (s1) -- (a1) {};
\draw[->, very thick] (s1) -- (b1) {};
\draw[->, green, very thick] (b1) -- (a2) {};
\draw[->, very thick] (s2) -- (a2) {};
\draw[->, very thick] (s3) -- (a2) {};
\draw[->, green, very thick] (b2) -- (s2) {};
\draw[->, green, very thick] (b4) -- (s3) {};
\draw[->, very thick] (s3) -- (a4) {};
\draw[->, very thick] (s2) -- (b3) {};
\draw[->, green, very thick] (b3) -- (s4) {};
\draw[->, very thick] (s4) -- (a3) {};
\draw[->, very thick] (a2) -- (h) {};

\draw[->, very thick, decorate, decoration=zigzag] (h) -- (b1);
\draw[->, very thick, decorate, decoration=zigzag] (h) -- (b2);
\draw[->, very thick, decorate, decoration=zigzag] (a4) -- (b4);
\draw[->, very thick, decorate, decoration=zigzag] (a3) -- (b3);
\end{tikzpicture}
\end{minipage}
\begin{minipage}{0.4\textwidth}
\begin{tikzpicture}[smallnode/.style={circle, draw=black, fill=black, very thick, scale =0.4},
bignode/.style={circle, draw=black, very thick, scale =1.5},scale=.75]

\node[smallnode, red, label=above:$r$] (r) at (0,0) {};
\node[smallnode] (s1) at (0,-1){};
\node[smallnode, red] (b1) at (1,-2){};
\node[smallnode, blue] (a1) at (-1,-2){};
\node[smallnode, blue] (a2) at (1,-3.5){};
\node[smallnode] (s2) at (0.3,-4.7){};
\node[smallnode] (s3) at (1.65,-4.9){};
\node[smallnode, red] (b2) at (-1.5,-5){};
\node[smallnode, red] (b3) at (0,-6){};
\node[smallnode] (s4) at (-1,-7){};
\node[smallnode, blue] (a3) at (1,-7){};
\node[smallnode, red] (b4) at (4,-5){};
\node[smallnode, blue] (a4) at (3,-6.5){};
\node[smallnode] (h) at (-.7,-3.5){};

\draw[->, blue, very thick] (r) -- (s1) {};
\draw[->, blue, very thick] (s1) -- (a1) {};
\draw[->, very thick] (s1) -- (b1) {};
\draw[->, green, very thick] (b1) -- (a2) {};
\draw[->, very thick] (s2) -- (a2) {};
\draw[->, very thick] (s3) -- (a2) {};
\draw[->, green, very thick] (b2) -- (s2) {};
\draw[->, green, very thick] (b4) -- (s3) {};
\draw[->, very thick] (s3) -- (a4) {};
\draw[->, very thick] (s2) -- (b3) {};
\draw[->, green, very thick] (b3) -- (s4) {};
\draw[->, very thick] (s4) -- (a3) {};
\draw[->, very thick] (a2) -- (h) {};

\draw[->, red, very thick] (a2) to [out=45,in=-40] (b1);
\draw[->, red, very thick] (a2) -- (b2);
\draw[->, red, very thick] (a4) to [out=-45,in=-45]  (b4);
\draw[->, red, very thick] (a3) to [out=45,in=40]  (b3);

\draw[->, very thick, decorate, decoration=zigzag] (h) -- (b1);
\draw[->, very thick, decorate, decoration=zigzag] (h) -- (b2);
\draw[->, very thick, decorate, decoration=zigzag] (a4) -- (b4);
\draw[->, very thick, decorate, decoration=zigzag] (a3) -- (b3);
\end{tikzpicture}
\end{minipage}
\caption{The left picture shows the subgraph $\fs\cup\overline F$. The SCCs of $(V,\fs)$ is shown with circles and the blue ones are inside some active moats shown with dashed ellipses at iteration $l$. Zigzag paths and black edges are in $\fs$, blue edges are in $\overline F\setminus\fs$, and green edges are in $\overline F^l_{\expansion}$. The right picture shows $\auxGraph$ constructed from $\fs\cup\overline F$. The red edges are the auxiliary edges.}
\label{pic for auxiliary graph}
\end{figure}

See Figure \ref{pic for auxiliary graph} for an illustration of $\auxGraph$. Given a subset $T\subseteq E(\auxGraph)$, we say $e\in\overline F^l_{\expansion}$ is in $T$ if its corresponding expansion edge in $E(\auxGraph)$ is in $T$. For the rest of this section, when we talk about arborescence we mean an arborescence rooted at $r$ that is a subgraph of $\auxGraph$ and every active/inactive vertices in $V(\auxGraph)$ is reachable from $r$ in this arborescence. Following are two properties of arborescences that will be useful.
\begin{lemma}\label{expansion edges present in arborescence}
Let $T$ be an arborescence rooted at $r$ in $\auxGraph$. Then, we have
\begin{itemize}
    \item[a.] Every edge in $\overline F^l_{\expansion}$ is in $T$ as well. And
    \item[b.] For every expansion edge $e=(u,v)$ in $T$, either $v$ is active or the subtree $T_v$ of $T$ rooted at $v$ contains an active vertex.
\end{itemize}
\end{lemma}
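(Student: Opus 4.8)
\textbf{Proof plan for Lemma \ref{expansion edges present in arborescence}.}

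For part (a), the plan is to recall the definition of the auxiliary graph $\auxGraph$ together with Claim \ref{expansion endpoints}. Let $e=(u,v)\in\overline F^l_{\expansion}$, so in $H$ the tail $u$ is inactive and the head $v$ is either active or a Steiner node. The point is that $e$ is the \emph{unique} edge of $\auxGraph$ entering $u$ that matters: by Claim \ref{inactive vertices are reachable}, the inactive vertex $u$ is reachable in $\fs$ from $r$ or from some active vertex $w$, so the auxiliary edge $(w,u)$ was added to $\auxGraph$ and $u$ is not a source. I would then argue that in fact $u$ has no incoming edge in $\auxGraph$ other than auxiliary edges and expansion edges whose head is $u$ --- but wait, the cleaner route is different: since $T$ is an arborescence spanning all active and inactive vertices, $u$ (being inactive) has exactly one incoming edge in $T$. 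I want to show that this incoming edge \emph{must} be $e$. The natural way is to observe that every edge of $\auxGraph$ entering the inactive vertex $u$ is either $e$ itself, or an auxiliary edge $(w,u)$, or another expansion edge $e''=(u'',u)$; auxiliary edges have a \emph{different} structural role. Actually the most robust argument: I should show each expansion edge $e=(u,v)$ has the property that $u$ has in-degree exactly one among non-auxiliary edges in $\fs\cup\overline F$ after contraction --- because $u'$ (the original tail) lies in an SCC $B$ of $\fs$, and the only way $\overline F\setminus\fs$ edges enter $B$'s contraction is through edges purchased entering active moats, which by the reverse-delete pruning cannot be redundant. The honest statement of the obstacle: I need to pin down exactly why no arborescence can "avoid" an expansion edge, and I expect this hinges on the fact that the corresponding edge survived pruning precisely because removing it disconnects a terminal, hence in the contracted picture $u$ is only reachable through $v$ via that edge --- so every $r$-to-$u$ dipath in $\fs\cup\overline F$ uses $e$, and therefore so does every arborescence in $\auxGraph$.

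For part (b), the plan is a reachability/necessity argument using the pruning (reverse-delete) step. Suppose $e=(u,v)$ is an expansion edge in $T$ with $v$ not active and with $T_v$ containing no active vertex. Since $e$ survived the deletion phase, there is some terminal $t$ whose only path from $r$ in $\overline F$ passes through $e$; equivalently, in $\fs\cup\overline F$ the edge $e$ is a bridge separating $r$ from some SCC that must be connected. I would translate this into $\auxGraph$: the set $V(T_v)$ of vertices in the subtree below $e$ would then be a "violated-like" set with no active vertex, which I would like to contradict. The cleaner contradiction: if $T_v$ has no active vertex, then all vertices in $T_v$ are inactive or Steiner, hence by Claim \ref{inactive vertices are reachable} each inactive vertex in $T_v$ is reachable in $\fs$ from $r$ or from an active vertex; combined with the fact that $v$ itself is reachable in $\fs$ from such a source (this is where I would invoke Claim \ref{properties of the endpoints of expansion edge}, which gives a $\fs$-path from $C_A$ to $u$, and the auxiliary edge structure), I get that all of $T_v$ is reachable from outside $T_v$ using only $\fs$-edges plus auxiliary edges --- which means $e$ itself was prunable, contradicting that it lies in $\overline F$ after reverse-delete. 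So the main obstacle in (b) is making precise the bridge/pruning characterization in the \emph{contracted} graph $H$ and then in $\auxGraph$: I must be careful that contracting SCCs and adding zero-cost auxiliary edges does not spuriously create alternative routes, and that "expansion edge" (defined via active moats at iteration $l$) interacts correctly with "survived pruning" (a global property of $\overline F$).

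Overall, I expect part (b) to be the harder of the two, because it is the place where the global pruning argument meets the iteration-$l$-local notion of expansion edge, and where the whole point of the lemma --- that expansion edges genuinely "serve" active vertices and cannot be wasted --- has to be cashed out. Part (a) should follow from the observation that inactive vertices are non-roots of in-degree one in any spanning arborescence and that the only edge of $\auxGraph$ that can legitimately be their tree-parent edge (in a way consistent with the $\fs$-reachability encoded by auxiliary edges) is the expansion edge itself. If the direct argument for (a) proves fussy, a fallback is to build a specific arborescence containing all of $\overline F^l_{\expansion}$ by greedily extending $\fs$-reachability, and then note that \emph{any} arborescence must contain these edges because each such $e$ is the unique $\auxGraph$-edge whose removal disconnects its inactive tail $u$ from $r$.
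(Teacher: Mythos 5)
Your part (b) is essentially the paper's proof: assuming $v$ is a Steiner node and $T_v$ contains no active vertex, you reroute every inactive vertex of $T_v$ through an $\fs$-path from $r$ or from an active vertex (Claim \ref{inactive vertices are reachable}), note that all active vertices lie outside $T_v$ and hence are reached by $T$ without using $e$, un-contract the SCCs, and conclude that $(\fs\cup\overline F)\setminus\{e\}$ is feasible, contradicting that $e$ survived the reverse delete. The worry you raise about auxiliary edges ``spuriously creating routes'' is settled by the one observation you leave implicit: an auxiliary edge stands for an $\fs$-path and $e\in\overline F\setminus\fs$, so expanding auxiliary edges can never reintroduce $e$; also Steiner vertices (such as $v$ itself) need not be reachable for feasibility, so you do not need the unjustified step where you invoke Claim \ref{properties of the endpoints of expansion edge} to make $v$ reachable.

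Part (a), however, contains a concrete error: you repeatedly attach the necessity of $e=(u,v)$ to its \emph{tail}. The claims that the unique incoming tree edge of the inactive vertex $u$ ``must be $e$'', that ``every $r$-to-$u$ dipath in $\fs\cup\overline F$ uses $e$'', and the fallback that $e$ is ``the unique $\auxGraph$-edge whose removal disconnects its inactive tail $u$ from $r$'' cannot be right, because $e$ leaves $u$: it is not an incoming edge of $u$, and deleting it cannot affect the reachability of $u$ at all. What survival of reverse delete actually gives is that some terminal --- hence some active or inactive vertex of $H$ lying \emph{downstream} of $e$, not $u$ --- is unreachable from $r$ in $(\fs\cup\overline F)\setminus\{e\}$; nothing identifies that vertex with $u$ or $v$, and no uniqueness or in-degree statement is needed. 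The paper argues directly: if an arborescence $T$ of $\auxGraph$ omitted $e$, replace each auxiliary edge of $T$ by its corresponding $\fs$-path (which avoids $e$ since $e\notin\fs$) and un-contract the SCCs; the result is a subgraph of $(\fs\cup\overline F)\setminus\{e\}$ in which every terminal is reachable from $r$, so $e$, having been bought after all of $\fs$, would have been pruned --- a contradiction. Your ``honest obstacle'' sentence (pruning survival forces necessity of $e$) is exactly the right mechanism, so the repair is to drop the in-degree/tail claims and run this feasibility argument on the expanded, un-contracted copy of $T$, as in part (b).
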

\begin{proof}
Proof of part (a): note that all edges in $\overline F^l_{\expansion}$ are present in $\auxGraph$. Suppose $e\in\overline F^l_{\expansion}$ that is not in $T$. Replace every auxiliary edge with its corresponding $\fs$-path in $T$. Note that the resulting subgraph $H'$ is a subgraph of $H$ (recall $H$ is the contracted graph obtained from $\fs\cup \overline F$) and every active/inactive vertex is still reachable from $r$ in $H'$. Replace every active/inactive vertex in $H'$ by its corresponding contracted SCC, this is a subgraph of $(\overline F\cup\fs)\setminus\{e\}$ and every terminal is reachable from $r$. Therefore, $(\fs\cup \overline F)\setminus\{e\}$ is a feasible solution for the DST instance. Since $e$ was added to $F$ after all edges in $\fs$, in the deletion phase we should have pruned $e$, a contradiction with the fact that $e\in\overline F$.

Proof of part (b): suppose not. Then $v$ is a Steiner node and every vertex in the subtree rooted at $v$ is either inactive or a Steiner node. If it is inactive then by Claim \ref{inactive vertices are reachable} there must be a $\fs$-path from either an active vertex or $r$ to it. Add these $\fs$-path for all inactive vertices in $T_v$. With the same argument as in part (a) we conclude that $(u,v)$ (i.e., its corresponding edge in $\overline F^l_{\expansion}$) should have been pruned, a contradiction.
\end{proof}

Denote by $\overline T$ the shortest path tree in $\auxGraph$ rooted at $r$. In the following we show how to turn $\overline T$ to an arborescence such that every expansion edge is a good expansion edge (with respect to the resulting arborescence). Once we have that, we can provide a charging argument that proves the main lemma of this section (i.e., Lemma \ref{the bound on the number of expansion edges}). We use the following algorithm for modifying $\overline T$, note that this is for the analysis and our primal-dual algorithm does not use this.

\begin{algorithm}[H]
\caption{Modifying $\overline T$}
{\bf Input}: A shortest path tree $\overline T$ of $\auxGraph$.\\
{\bf Output}: A tree $T^*$ rooted at $r$ such that every active/inactive vertex of $\auxGraph$ is reachable from $r$ in $T^*$ and every expansion edge is a good expansion edge.
\begin{algorithmic}
\STATE $ \added \gets \emptyset$. \COMMENT{This is the set of edges to be added to $\overline T$ at the end.}
\STATE Let $\Gamma$ be the set of all \textbf{bad expansion edges} with respect to $\overline T$ (cf. Definition \ref{def, good/bad exp}).
\WHILE{$\Gamma\neq \emptyset$}
\STATE pick an arbitrary edge $e=(u,v)\in \Gamma$. Let $w$ be an active vertex such that $(w,u),(v,w)\in E(\auxGraph)$ cf. Lemma \ref{while loop works}. Then
\STATE $\added\gets \added\cup \{(w,u)\}$.
\STATE update $\Gamma$ by removing all the expansion edges incident to $u$ from $\Gamma$. \COMMENT{this makes sure that we add only one edge to $\added$ whose head is $u$}
\ENDWHILE
\STATE $\overline T\cup \added$ is a DAG (cf. Lemma \ref{reachability from r}), so by Claim \ref{valid cut and paste} there exists a subset of edges of $E(\overline T)$ such that its removal makes $\overline T\cup \added$ an arborescence rooted at $r$. Call the resulting arborescence $T^*$.
\RETURN $T^*$.
\end{algorithmic}
\label{modifying bfs alg}
\end{algorithm}

We show Algorithm \ref{modifying bfs alg} works correctly by a series of lemmas.

\begin{lemma}\label{while loop works}
For every bad expansion edge $(u,v)\in E(\overline T)$, there exists an active vertex $w$ such that $(w,u),(v,w)\in E(\auxGraph)$.
\end{lemma}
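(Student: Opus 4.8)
The plan is to unpack what it means for an expansion edge $(u,v)$ to be bad with respect to $\overline T$ (the shortest-path tree in $\auxGraph$), and then to build the desired active vertex $w$ out of the structural facts we already have about $H$, $\overline F^l_{\expansion}$, and $\fs$-paths. First I would recall, via Claim \ref{expansion endpoints}, that $u$ is an inactive vertex and $v$ is either active or a Steiner node. By Claim \ref{inactive vertices are reachable}, there is a $\fs$-path from either $r$ or an active vertex to $u$; the edge $(w,u)$ I want is precisely an auxiliary edge, so I need to rule out the case that $u$ is only reachable (via $\fs$-paths in $H$) from $r$ and not from any active vertex. Here I would invoke Claim \ref{properties of the endpoints of expansion edge}: for the corresponding edge $e'=(u',v')\in\overline F^l_{\expansion}$, we have $v'\in A$ for some active moat $A\in\mathcal{A}_l$ and there is an $\fs$-path from $C_A$ to $u'$. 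Translating back to $H$, this says there is an $\fs$-path in $H$ from the active vertex $a$ (the contraction of $C_A$) to $u$, which is exactly the condition in Definition \ref{auxiliary graph} for the auxiliary edge $(a,u)$ to exist. So existence of \emph{some} active $w$ with $(w,u)\in E(\auxGraph)$ is essentially free.

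The real content is getting an active $w$ that simultaneously satisfies $(w,u)\in E(\auxGraph)$ \emph{and} $(v,w)\in E(\auxGraph)$ — i.e., $w$ is reachable from $v$. This is where the badness of $(u,v)$ must be used: if $(u,v)$ is neither type 1 nor type 2, then in particular it is not type 2, so the subtree $\overline T_v$ does \emph{not} contain an active vertex at expansion level $\le \explevel_{\overline T}(u)+1$. But by Lemma \ref{expansion edges present in arborescence}(b), $\overline T_v$ \emph{does} contain an active vertex, call it $w$; since $(u,v)$ is an expansion edge, every such $w$ in $\overline T_v$ has $\explevel_{\overline T}(w)\ge \explevel_{\overline T}(u)+2$ (using that it is not type 2) and also $w$ is not an ancestor of $u$ at the same expansion level (not type 1). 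I would then argue that because $w$ lies in $\overline T_v$, the tree path from $v$ down to $w$ gives a path in $H$; I need to promote this to the claim that the auxiliary edge $(v,w)$ exists, which requires an $\fs$-path from $v$ to $w$ in $H$ — but $v$ itself is either active (contraction of $C_A$, whose reachability set I control) or a Steiner node lying in the active moat $A$ with edges oriented into $C_A$, so in either case I can route from $v$ into $C_A = $ the active vertex $a$, and then I would want to use the auxiliary/expansion structure connecting $a$'s component down to $w$.

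The main obstacle I anticipate is exactly this last routing step: showing that an active descendant $w$ of $v$ in $\overline T$ is in fact reachable from $v$ by a \emph{$\fs$-path in $H$} (so that the auxiliary edge $(v,w)$ genuinely exists), rather than merely by a path in $\overline T$ that alternates $\fs$-paths with expansion edges. The resolution should lean on the badness hypothesis to control the expansion edges on the $v$-to-$w$ path, combined with the definition of the auxiliary edges (each expansion edge $(x,y)$ on that path comes with auxiliary edges from active vertices $\fs$-reaching $x$), and on the fact that $v$ can be collapsed into the active vertex carrying $C_A$; I would set up an induction on $\explevel_{\overline T}(w)-\explevel_{\overline T}(v)$, using at each step that a strictly-lower-level active vertex would have witnessed type 1 or type 2 and hence contradicted badness. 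I expect the write-up to need a careful case split on whether $v$ is active or a Steiner node, and to cite Lemma \ref{remark on active moats} for the Steiner-node case.
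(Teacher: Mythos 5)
There is a genuine gap, and it comes from choosing the wrong witness $w$ for the second half of the statement. You correctly identify (via Claim \ref{properties of the endpoints of expansion edge}) that the contraction $a$ of $C_A$, where $A\in\mathcal{A}_l$ is the active moat containing the head of the corresponding edge $(u',v)\in\overline F^l_{\expansion}$, gives an auxiliary edge $(a,u)\in E(\auxGraph)$. But then you abandon $a$ and go hunting for a different active vertex $w$ inside the subtree $\overline T_v$ (via Lemma \ref{expansion edges present in arborescence}(b)), which forces you to prove that $w$ is reachable from $v$ in a way that produces an edge $(v,w)$ of $\auxGraph$. That step does not go through: the tree path from $v$ to $w$ in $\overline T$ may use expansion edges, auxiliary edges, and other $\overline F$-edges, and there is no mechanism in Definition \ref{auxiliary graph} that converts reachability of $w$ from $v$ into an edge $(v,w)$ --- auxiliary edges are only ever added with an \emph{active} tail and with head equal to the tail of an expansion edge, so an edge from a Steiner node $v$ to an active vertex can only be an actual edge of $H$. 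Your proposed induction on $\explevel_{\overline T}(w)-\explevel_{\overline T}(v)$ is therefore attacking a claim that is both unnecessary and, in general, not available.

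The resolution the paper uses is much shorter and you in fact brush against it in passing: badness of $(u,v)$ is used only to conclude that $v$ is a Steiner node (if $v$ were active it would lie in $\overline T_u$ at expansion level $\explevel_{\overline T}(u)+1$, making the edge good of type 2). Then $v\in A\setminus C_A$ for the active moat $A$ given by Claim \ref{properties of the endpoints of expansion edge}, and Lemma \ref{remark on active moats} provides an edge of $\fs$ oriented from $v$ into $C_A$; after contraction this is an honest edge $(v,a)$ of $H\subseteq\auxGraph$. So the single vertex $w=a$ (the contraction of $C_A$) satisfies both $(w,u)\in E(\auxGraph)$ (auxiliary edge, from the $\fs$-path from $C_A$ to $u'$) and $(v,w)\in E(\auxGraph)$ (an $H$-edge). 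No search inside $\overline T_v$, no reachability induction, and no further use of badness is needed.
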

\begin{proof}
Note that $v$ is a Steiner node, otherwise $(u,v)$ is a good expansion edge with respect to $\overline T$. Let $(u',v)$ be the corresponding edge to $(u,v)$ in $\fs\cup\overline F$. Claim \ref{properties of the endpoints of expansion edge} implies $u'\in B$ for some SCC $B$ of $(V,\fs)$, $v\in A\setminus C_A$ for some $A\in\mathcal{A}_l$, and there is a $\fs$-path from $C_A$ to $B$. Let $w$ be the contraction of $C_A$ in $H$. Note that $u$ is the contraction of $B$ in $H$. Therefore, there is a $\fs$-path from $w$ to $u$ and therefore there is an auxiliary edge $(w,u)$ in $\auxGraph$. The claim follows by noting that there is an edge whose tail is $v$ and enters $C_A$; hence $(v,w)$ is in $\auxGraph$ as well.
\end{proof}

The above claim proves that the while loop in Algorithm \ref{modifying bfs alg} works correctly. Before we prove $\overline T\cup \added$ is a DAG, we need two helper claims.

\begin{claim}\label{added edges are from l to l or l+1 to l}
For any edge $(w,u)\in \added$, we have
\[
\explevel_{\overline T}(u)\leq \explevel_{\overline T}(w)\leq \explevel_{\overline T}(u)+1.
\]
\end{claim}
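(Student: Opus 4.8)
\textbf{Proof plan for Claim~\ref{added edges are from l to l or l+1 to l}.}
The plan is to recall exactly how the edge $(w,u)$ was placed into $\added$ and then read off the two inequalities from the defining properties of a bad expansion edge. By construction in Algorithm~\ref{modifying bfs alg}, the edge $(w,u)$ was added because some bad expansion edge $e=(u,v)\in E(\overline T)$ was picked from $\Gamma$, and $w$ is the active vertex supplied by Lemma~\ref{while loop works}, so in particular $(w,u)\in E(\auxGraph)$ is an auxiliary edge, meaning there is an $\fs$-path from $w$ to $u$ in $H$.

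First I would prove the lower bound $\explevel_{\overline T}(u)\leq \explevel_{\overline T}(w)$. Since $(w,u)$ is an auxiliary edge arising from a $\fs$-path $Q$ from the active vertex $w$ to $u$, and auxiliary/$\fs$-edges carry cost $0$ while expansion edges carry cost $1$, the existence of $Q$ shows $w$ can reach $u$ using no expansion edges. Because $\overline T$ is a \emph{shortest path tree} in $\auxGraph$ with this cost function, the distance (number of expansion edges on the root path) is monotone along reachable pairs joined by a zero-cost path: concatenating the root-to-$w$ path in $\overline T$ with $Q$ gives a walk from $r$ to $u$ using exactly $\explevel_{\overline T}(w)$ expansion edges, so $\explevel_{\overline T}(u)\le\explevel_{\overline T}(w)$ by minimality of $\overline T$.

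Next I would prove the upper bound $\explevel_{\overline T}(w)\leq \explevel_{\overline T}(u)+1$. Here I use that $e=(u,v)$ is \emph{bad} with respect to $\overline T$, so it is neither of type~1 nor of type~2. Lemma~\ref{while loop works} (and Claim~\ref{properties of the endpoints of expansion edge}) identify $w$ as the contraction of $C_A$, where $v\in A\setminus C_A$; thus in $\auxGraph$ we have the edge $(v,w)$ of cost $0$, so $w$ is reachable from $v$ with no expansion edges, giving $\explevel_{\overline T}(w)\le\explevel_{\overline T}(v)$ again by the shortest-path property. Since $e=(u,v)\in E(\overline T)$ is an expansion edge, $\explevel_{\overline T}(v)=\explevel_{\overline T}(u)+1$, and combining yields $\explevel_{\overline T}(w)\le\explevel_{\overline T}(u)+1$. (The badness of $e$ is what guarantees $v$ is a Steiner node and hence that the auxiliary edge $(v,w)$ exists at all, via Lemma~\ref{while loop works}; without it $v$ could be active and the relevant edges would be absent.)

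The main obstacle, and the only place that needs care, is making the "monotonicity along zero-cost paths" argument rigorous: one must be precise that $\explevel_{\overline T}(\cdot)$ equals the shortest-path distance in $\auxGraph$ under the $\{0,1\}$ cost function, that every active/inactive vertex is reachable from $r$ in $\auxGraph$ (so these distances are finite — this is exactly why $\auxGraph$ includes the underlying $\fs$-structure), and that appending a zero-cost walk cannot increase the number of expansion edges used. Once that bookkeeping is in place, both inequalities are immediate from the existence of the auxiliary edges $(w,u)$ and $(v,w)$ together with the equality $\explevel_{\overline T}(v)=\explevel_{\overline T}(u)+1$ for the tree edge $e$.
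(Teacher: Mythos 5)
Your proof is correct and takes essentially the same route as the paper's: the paper likewise deduces both inequalities from the existence of the zero-cost edges $(w,u)$ and $(v,w)$ in $\auxGraph$ guaranteed by Lemma~\ref{while loop works}, combined with the fact that $\overline T$ is a shortest path tree under the $\{0,1\}$ costs. You simply spell out the shortest-path monotonicity bookkeeping (that $\explevel_{\overline T}(\cdot)$ equals the $0/1$ shortest-path distance and that zero-cost edges cannot increase it) which the paper leaves implicit.
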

\begin{proof}
Let $(u,v)$ be the bad expansion edge that caused us to add $(w,u)$ to $\added$ in Algorithm \ref{modifying bfs alg}. By Claim \ref{while loop works} we have $(w,u),(v,w)\in E(\auxGraph)$. Also considering that $\overline T$ is a shortest path tree finishes the proof.
\end{proof}

For the next claim we use the following notation. Note that edges in $\added$ will not form a dipath of length greater than $1$ because the the edges in $\added$ are oriented from an active vertex to an inactive vertex. So for any dipath $P$ in $\overline T\cup\added$ beginning with an edge in $\added$, we write $P = v_1,\added,v_2,\overline T,v_3,\added,...,\overline T,v_k$
where $(v_i,v_{i+1}) \in \added$ for odd $i$ and the subpath $P_{v_i,v_{i+1}}$ uses only edges in $\overline T$ for even $i$.


\begin{claim}\label{explevel of the endpoin in a path}
Let $k\geq 3$ be odd and let $P=v_1,\added,v_2,\overline T,v_3,\added,...,\overline T,v_k$ be a dipath such that $v_i$ is active for odd $i$ and inactive for even $i$. Then $\explevel_{\overline T}(v_k)\geq \explevel_{\overline T}(v_1)+\frac{k-1}{2}$.
\end{claim}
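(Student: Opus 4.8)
\textbf{Proof plan for Claim~\ref{explevel of the endpoin in a path}.}
The plan is to induct on the odd index $k$. For the base case $k=3$, the path is $v_1,\added,v_2,\overline T,v_3$. By Claim~\ref{added edges are from l to l or l+1 to l} applied to the edge $(v_1,v_2)\in\added$ we have $\explevel_{\overline T}(v_2)\geq\explevel_{\overline T}(v_1)$. The key point is to gain a full unit from the segment between $v_2$ and $v_3$: the edge $(v_2,v_3)$ belongs to the $\overline T$-subpath $P_{v_2,v_3}$, and I will argue that this subpath must contain at least one expansion edge. Indeed, $v_2$ is inactive and $v_3$ is active; if the $\overline T$-path from $v_2$ to $v_3$ contained no expansion edge, then... actually $v_3$ lies in the subtree $\overline T_{v_2}$, so its expansion level would equal that of $v_2$, and $(v_2',v_2)$ — the bad expansion edge that triggered adding $(v_1,v_2)$ — would have an active vertex ($v_3$) in its subtree at expansion level $\explevel_{\overline T}(v_2)=\explevel_{\overline T}(v_2')+1\leq\explevel_{\overline T}(v_2')+1$, i.e. it would be a good expansion edge of type~2, contradicting badness. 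Hence the $\overline T$-path $P_{v_2,v_3}$ carries at least one expansion edge, so $\explevel_{\overline T}(v_3)\geq\explevel_{\overline T}(v_2)+1\geq\explevel_{\overline T}(v_1)+1=\explevel_{\overline T}(v_1)+\frac{3-1}{2}$.

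For the inductive step, suppose the statement holds for $k-2$ (so $k\geq 5$). Apply the induction hypothesis to the subpath $v_1,\added,\dots,\overline T,v_{k-2}$ to get $\explevel_{\overline T}(v_{k-2})\geq\explevel_{\overline T}(v_1)+\frac{k-3}{2}$. Then handle the tail segment $v_{k-2},\added,v_{k-1},\overline T,v_k$ exactly as in the base case: Claim~\ref{added edges are from l to l or l+1 to l} gives $\explevel_{\overline T}(v_{k-1})\geq\explevel_{\overline T}(v_{k-2})$, and the same badness argument on the bad expansion edge responsible for the $\added$-edge into $v_{k-1}$ forces an expansion edge on the $\overline T$-path from $v_{k-1}$ to $v_k$ (using that $v_{k-1}$ is inactive and $v_k$ is active and lies in $\overline T_{v_{k-1}}$). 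This yields $\explevel_{\overline T}(v_k)\geq\explevel_{\overline T}(v_{k-1})+1\geq\explevel_{\overline T}(v_{k-2})+1\geq\explevel_{\overline T}(v_1)+\frac{k-3}{2}+1=\explevel_{\overline T}(v_1)+\frac{k-1}{2}$, completing the induction.

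I should be careful about one subtlety in the "forces an expansion edge" step: I need that $v_k$ (resp.\ $v_3$) is genuinely a descendant of $v_{k-1}$ (resp.\ $v_2$) in $\overline T$, so that the $\overline T$-path between them is the downward tree path and expansion levels are monotone along it. This is exactly the structure imposed by the decomposition $P=v_1,\added,v_2,\overline T,v_3,\dots$, where the even-index subpaths use only $\overline T$-edges oriented away from $r$; since $\overline T$ is a tree, such a subpath from $v_i$ to $v_{i+1}$ exists only if $v_{i+1}$ is a descendant of $v_i$. I will also invoke that $\explevel_{\overline T}$ is non-decreasing along any root-directed path in $\overline T$ and strictly increases across each expansion edge, which is immediate from the definition of $\explevel_T$.

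\textbf{Main obstacle.} The delicate part is the type-2/badness contradiction: I must pin down precisely the active vertex $w$ (here $v_3$, resp.\ $v_k$) whose existence in the subtree below the inactive endpoint of the triggering bad expansion edge contradicts that edge's badness, and verify the expansion-level bookkeeping ($\explevel_{\overline T}(v_{k-1})=\explevel_{\overline T}(u')+1$ for the triggering edge's tail $u'$, via Claim~\ref{added edges are from l to l or l+1 to l}) lines up with the $\leq\explevel_T(u)+1$ threshold in the definition of a type-2 good edge. Everything else is routine induction and monotonicity of $\explevel$.
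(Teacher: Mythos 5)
There is a genuine gap, and it sits in the two quantitative steps of your base case (and hence of the inductive tail as well). First, you misquote Claim~\ref{added edges are from l to l or l+1 to l}: for $(v_1,v_2)\in\added$ that claim says $\explevel_{\overline T}(v_2)\leq\explevel_{\overline T}(v_1)\leq\explevel_{\overline T}(v_2)+1$, so the only lower bound available is $\explevel_{\overline T}(v_2)\geq\explevel_{\overline T}(v_1)-1$, not $\explevel_{\overline T}(v_2)\geq\explevel_{\overline T}(v_1)$ as you assert (equality can fail: the active vertex $v_1$ chosen in Lemma~\ref{while loop works} may sit one expansion level above $v_2$ in the shortest path tree, and this is exactly the slack the claim permits). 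Second, you misorient the triggering bad expansion edge: by the construction in Algorithm~\ref{modifying bfs alg}, the edge $(v_1,v_2)$ is added to $\added$ because there is a bad expansion edge whose \emph{tail} is $v_2$, not one of the form $(v_2',v_2)$ with head $v_2$. The correct use of badness is then much stronger than what you extract: since that edge is neither type 1 nor type 2, \emph{every} active vertex in $\overline T_{v_2}$ has expansion level at least $\explevel_{\overline T}(v_2)+2$; in particular $\explevel_{\overline T}(v_3)\geq\explevel_{\overline T}(v_2)+2$. Your argument only forces one expansion edge on the $\overline T$-path from $v_2$ to $v_3$, i.e.\ a gain of $+1$.

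These two errors cancel in your bookkeeping, which is why your totals come out right, but the proof does not stand once the claim is applied correctly: with $\explevel_{\overline T}(v_2)\geq\explevel_{\overline T}(v_1)-1$ and only a $+1$ gain you get $\explevel_{\overline T}(v_3)\geq\explevel_{\overline T}(v_1)$, which is short of the required $\explevel_{\overline T}(v_1)+1$. The fix is the paper's accounting: use the failure of type 2 at the tail $v_2$ to get $\explevel_{\overline T}(v_3)\geq\explevel_{\overline T}(v_2)+2$, then give back the one unit allowed by Claim~\ref{added edges are from l to l or l+1 to l}, for a net $+1$ per $\added$/$\overline T$ pair; the same correction is needed in your inductive step for the segment $v_{k-2},\added,v_{k-1},\overline T,v_k$. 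Your overall induction scheme and the descendant/monotonicity observations are otherwise the same as the paper's.
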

\begin{proof}
We prove it by induction. Let $k=3$ (i.e., $P=v_1,\added,v_2,\overline T,v_3$). Since $(v_1,v_2)\in \added$ it must be the case that there is a bad expansion edge (with respect to $\overline T$) whose tail is $v_2$; together with the fact that $v_3$ is active and it is in the subtree rooted at $v_2$, we have
\begin{align*}
    \explevel_{\overline T}(v_{3})&\geq \explevel_{\overline T}(v_2)+2\\
    &\geq \explevel_{\overline T}(v_1)+1,
\end{align*}
where the last inequality follows by applying Claim \ref{added edges are from l to l or l+1 to l} to $(v_1,v_2)\in \added$. 

Now suppose the claim holds for $k$ and we prove it for $k+2$. So the dipath is $P=v_1,\added,v_2,\overline T,...,\overline T,v_k,\added,v_{k+1},\overline T,v_{k+2}$.
\begin{align*}
    \explevel_{\overline T}(v_{k+2})&\geq \explevel_{\overline T}(v_{k+1})+2\\
    &\geq (\explevel_{\overline T}(v_k)-1)+2\\
    &\geq \explevel_{\overline T}(v_1)+\frac{k-1}{2}+1\\
    &=\explevel_{\overline T}(v_1)+\frac{k+1}{2},
\end{align*}
where the first inequality follows because there is a bad expansion edge whose tail is $v_{k+1}$ and $v_{k+2}$ is active and it is in the subtree rooted at $v_{k+1}$, the second inequality follows from applying Claim \ref{added edges are from l to l or l+1 to l} to $(v_k,v_{k+1})\in \added$, and the last inequality follows from the induction hypothesis.
\end{proof}

Next we prove the statements after the while loop in Algorithm \ref{modifying bfs alg} works correctly.
\begin{lemma}\label{reachability from r}
After the while loop in Algorithm \ref{modifying bfs alg}, $\overline T\cup \added$ is a DAG.
\end{lemma}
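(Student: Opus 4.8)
\textbf{Proof plan for Lemma \ref{reachability from r}.}
The goal is to show $\overline T\cup\added$ contains no directed cycle. Since $\overline T$ is a tree (hence acyclic), any cycle $D$ in $\overline T\cup\added$ must use at least one edge of $\added$. I would first record the structural fact, already noted in the text, that edges of $\added$ go from an active vertex to an inactive vertex, and that $\added$ contains at most one edge with any given head; consequently no two $\added$-edges can be consecutive on a directed walk, so along $D$ the $\added$-edges alternate with nonempty $\overline T$-subpaths. Thus $D$ can be written in the form $v_1,\added,v_2,\overline T,v_3,\added,v_4,\overline T,\dots$ closing up to $v_1$ again, where the $v_i$ with $i$ odd are active and those with $i$ even are inactive, and where the last $\overline T$-subpath returns to $v_1$.

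The key idea is to track the quantity $\explevel_{\overline T}(\cdot)$ around the cycle and derive a contradiction from monotonicity. Along each $\added$-edge $(v_i,v_{i+1})$ with $i$ odd, Claim \ref{added edges are from l to l or l+1 to l} gives $\explevel_{\overline T}(v_{i+1})\ge \explevel_{\overline T}(v_i)-1$; more usefully, applying Claim \ref{explevel of the endpoin in a path} to the maximal sub-dipath of $D$ of the alternating form $v_1,\added,v_2,\overline T,\dots,\overline T,v_k$ (with $k$ odd, all the intermediate vertices having the required active/inactive parity) yields $\explevel_{\overline T}(v_k)\ge \explevel_{\overline T}(v_1)+\frac{k-1}{2}$. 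Now suppose for contradiction $D$ is a directed cycle. It must contain at least one $\added$-edge, hence at least one active vertex; pick an active vertex $v_1$ on $D$ and traverse $D$ once around, decomposing it into the alternating blocks separated by the $\added$-edges. Because $D$ is a cycle, going all the way around returns us to $v_1$, i.e. $v_k=v_1$ for the full traversal (after possibly also accounting for a final $\overline T$-subpath from the last active vertex back to $v_1$, which cannot decrease $\explevel_{\overline T}$). Since $D$ uses at least one $\added$-edge, we have $k\ge 3$, and Claim \ref{explevel of the endpoin in a path} then forces $\explevel_{\overline T}(v_1)=\explevel_{\overline T}(v_k)\ge \explevel_{\overline T}(v_1)+\frac{k-1}{2}>\explevel_{\overline T}(v_1)$, a contradiction.

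I expect the main obstacle to be the bookkeeping needed to guarantee that the decomposition of the cycle really has the alternating active/inactive pattern required to invoke Claim \ref{explevel of the endpoin in a path}, and to handle the ``wrap-around'' $\overline T$-subpath that closes the cycle back to $v_1$. The relevant points to nail down are: (i) after an $\added$-edge we land on an inactive vertex $v_{i+1}$, and the $\added$-edge was added precisely because some bad expansion edge has tail $v_{i+1}$, so the next $\added$-edge on $D$ has a tail which is an active vertex lying in the $\overline T$-subtree of $v_{i+1}$ — this is what makes the $v_i$ with odd $i$ active; (ii) the final $\overline T$-only subpath from the last active vertex on $D$ back to $v_1$ is along $\overline T$, so $\explevel_{\overline T}$ is nondecreasing along it, and it suffices to observe $\explevel_{\overline T}(v_1)\ge \explevel_{\overline T}(\text{last active vertex})$, which together with the accumulated strict increase gives the contradiction. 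Once these parity and monotonicity observations are in place, the $\explevel_{\overline T}$ potential argument closes the proof cleanly.
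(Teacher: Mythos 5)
Your overall strategy is the same as the paper's: decompose any dicycle of $\overline T\cup\added$ into $\added$-edges alternating with nonempty $\overline T$-subpaths (using that $\added$-edges go from active to inactive vertices) and run a potential argument on $\explevel_{\overline T}$ via Claims \ref{added edges are from l to l or l+1 to l} and \ref{explevel of the endpoin in a path}. The gap is exactly at the point you yourself flag as the obstacle: closing the cycle. Claim \ref{explevel of the endpoin in a path} is stated and proved for a \emph{dipath} $v_1,\added,v_2,\overline T,\dots,\overline T,v_k$, whereas your main step applies it to the full traversal with $v_k=v_1$, which is outside its hypotheses; and your fallback for the wrap-around is too weak. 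Your point (ii) proposes to use only that $\explevel_{\overline T}$ is nondecreasing along the final $\overline T$-subpath (which, incidentally, runs from the last \emph{inactive} vertex, the head of the last $\added$-edge, back to $v_1$, not from the last active vertex). Monotonicity alone gives no contradiction: for a cycle with two $\added$-edges, $v_1\,(\text{active})\to_{\added}v_2\to_{\overline T}v_3\,(\text{active})\to_{\added}v_4\to_{\overline T}v_1$, the legitimate application of Claim \ref{explevel of the endpoin in a path} up to $v_3$ gives $\explevel_{\overline T}(v_3)\geq\explevel_{\overline T}(v_1)+1$, Claim \ref{added edges are from l to l or l+1 to l} on $(v_3,v_4)\in\added$ gives $\explevel_{\overline T}(v_4)\geq\explevel_{\overline T}(v_3)-1$, and monotonicity along the closing $\overline T$-path only yields $\explevel_{\overline T}(v_1)\geq\explevel_{\overline T}(v_4)\geq\explevel_{\overline T}(v_1)$ --- no contradiction.

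The missing ingredient is the step the paper uses to close the cycle: the edge $(v_3,v_4)$ was put into $\added$ because some \emph{bad} expansion edge has tail $v_4$, and the closing $\overline T$-subpath witnesses that the active vertex $v_1$ lies in $\overline T_{v_4}$; badness then forces $\explevel_{\overline T}(v_1)\geq\explevel_{\overline T}(v_4)+2$, which contradicts $\explevel_{\overline T}(v_4)\geq\explevel_{\overline T}(v_1)$ from the chain above. (Equivalently, you could argue that the inductive step of Claim \ref{explevel of the endpoin in a path} never uses distinctness of the $v_i$ and hence extends to the closed traversal with $v_k=v_1$; but that extension must be stated and justified, and its content is precisely this badness argument, not monotonicity.) The same remark applies to a cycle using a single $\added$-edge, i.e.\ your case $k=3$ with $v_3=v_1$: it is again outside the literal scope of the dipath claim, and the paper handles it separately by the same badness-plus-Claim-\ref{added edges are from l to l or l+1 to l} contradiction. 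With the closing step repaired in this way (apply Claim \ref{explevel of the endpoin in a path} only to the dipath ending at the last active vertex, then use Claim \ref{added edges are from l to l or l+1 to l} for the last $\added$-edge, then invoke badness at its head), your argument coincides with the paper's proof.
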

\begin{proof}
Recall that edges in $\added$ will not form a dipath of length greater than $1$; hence, any dicycle in $\overline T\cup \added$ always alternate between a dipath in $\overline T$ and an edge in $\added$. By reordering the alternation, we denote a dicycle in $\overline T\cup \added$ by $C=v_1,\added,v_2,\overline T,v_3,\added,...,\overline T,v_{k-1},\added,v_k,\overline T,v_1$ where $v_i$ is active for odd $i$ and even otherwise. Note that $k$ is even. 

It is easy to see $k\neq 2$. Otherwise we have a dicycle $C=v_1,\added,v_2,\overline T,v_1$ which implies there is a bad expansion edge (with respect to $\overline T$) whose tail is $v_2$ together with the fact that $v_1$ is an active vertex in $\overline T_{v_2}$ we must have $\explevel_{\overline T}(v_1)\geq \explevel_{\overline T}(v_2)+2$ (otherwise all expansion edges whose tail is $v_2$ are good expansion edge). On the other hand, since $(v_1,v_2)\in\added$ by Claim \ref{added edges are from l to l or l+1 to l} we have $\explevel_{\overline T}(v_1)\leq\explevel_{\overline T}(v_2)+1$, a contradiction.

For the sake of contradiction, we assume there is a dicycle $C=v_1,\added,v_2,\overline T,...,\overline T,v_{k-1},\added,v_k,\overline T,v_1$, where $v_i$ is active for odd $i$ and inactive otherwise, furthermore we assume $k\geq 4$. By applying Claim \ref{explevel of the endpoin in a path} to $v_1,\added,v_2,\overline T,...,\overline T,v_{k-1}$, we get $\explevel_{\overline T}(v_1)+1\leq \explevel_{\overline T}(v_{k-1})$, and by applying Claim \ref{added edges are from l to l or l+1 to l} to $(v_{k-1},v_{k})\in\added$ we have $\explevel_{\overline T}(v_{k-1})-1\leq \explevel_{\overline T}(v_k)$. Together, we see $\explevel_{\overline T}(v_1)\leq \explevel_{\overline T}(v_k)$. On the other hand, since $(v_{k-1},v_k)\in\added$ there is a bad expansion edge whose tail is $v_k$, and the fact that $v_1$ is an active vertex in $\overline T_{v_k}$, it must be that $\explevel_{\overline T}(v_k)+2\leq \explevel_{\overline T}(v_1)$ which is a contradiction.
\end{proof}

Next, we show that $\overline T\cup \added$ can be turned into an arborescence by removing a unique subset of edges of $E(\overline T)$. To do so we use the following generic claim.

\begin{claim}\label{valid cut and paste}
Let $T=\big(V(T),E(T)\big)$ be an arborescence, and let $L=\{(u_1,v_1),...,(u_k,v_k)\}$ be a collection of edges such that $u_i,v_i\in V(T)$ and $(u_i,v_i)\notin E(T)$ for all $1\leq i\leq k$. Furthermore, $v_i\neq v_j$ for $i\neq j$. If $T\cup L$ is a DAG, then there is a unique set of edges $B\subseteq E(T)$ of size $k$ such that $(T\cup L)\setminus B$ is an arborescence.
\end{claim}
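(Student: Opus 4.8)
The plan is to exhibit the set $B$ explicitly, then verify both that it works and that nothing else of size $k$ works. First I would use the hypothesis that $T\cup L$ is a DAG to argue $v_i\neq r$ for every $i$: since $T$ is an arborescence rooted at $r$, there is a dipath from $r$ to $u_i$ in $T$, and appending the edge $(u_i,v_i)$ to that dipath would close a directed walk through $r$ (a self-loop at $r$ in the degenerate case $u_i=r$), contradicting acyclicity of $T\cup L$. Hence each $v_i$ is a non-root vertex of the arborescence $T$ and therefore has a unique parent edge $e_i\in E(T)$ with head $v_i$. Because the heads $v_1,\dots,v_k$ are pairwise distinct, the edges $e_1,\dots,e_k$ are pairwise distinct, so I set $B:=\{e_1,\dots,e_k\}$, a set of exactly $k$ edges of $E(T)$.

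Next I would check that $(T\cup L)\setminus B$ is an arborescence rooted at $r$ by a degree count. In $T\cup L$ the root $r$ still has in-degree $0$ (no edge of $L$ enters $r$, by the previous step); each $v_i$ has in-degree exactly $2$, namely $e_i$ together with $(u_i,v_i)$, using that no other edge of $L$ has head $v_i$; every remaining vertex $w$ has in-degree exactly $1$, its $T$-parent edge, which is not in $B$ since the heads of the edges in $B$ are exactly $v_1,\dots,v_k$. Deleting $B$ therefore leaves $r$ with in-degree $0$ and every other vertex with in-degree exactly $1$. Since $(T\cup L)\setminus B$ is a subgraph of the DAG $T\cup L$ it is acyclic, and tracing parent edges backward from an arbitrary vertex must, by finiteness and acyclicity, terminate at the unique in-degree-$0$ vertex $r$; hence every vertex is reachable from $r$, and the in-degree pattern makes $(T\cup L)\setminus B$ a tree oriented away from $r$, i.e.\ an arborescence.

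For uniqueness, suppose $B'\subseteq E(T)$ is any edge set with $(T\cup L)\setminus B'$ an arborescence rooted at $r$. Fixing $i$, the vertex $v_i$ has in-degree $2$ in $T\cup L$ coming from $e_i\in E(T)$ and $(u_i,v_i)\in L$; since $B'\subseteq E(T)$ cannot delete $(u_i,v_i)$ and an arborescence has every non-root vertex of in-degree at most $1$, we must have $e_i\in B'$. Thus $B\subseteq B'$. Finally, an arborescence on the vertex set $V(T)$ has exactly $|V(T)|-1$ edges while $T\cup L$ has $|E(T)|+k=|V(T)|-1+k$ edges, so any such $B'$ has size exactly $k=|B|$, whence $B'=B$; in particular $B$ is the unique size-$k$ choice.

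The argument is elementary; the only place needing genuine care is making real use of the DAG hypothesis rather than pure in-degree bookkeeping. It is used twice: to rule out $v_i=r$ (so the parent edges $e_i$ exist at all) and to conclude that $(T\cup L)\setminus B$, which inherits acyclicity from $T\cup L$, is actually spanning and reachable from $r$ rather than merely having the correct in-degree sequence. I expect that to be the only subtle point; the rest is routine counting.
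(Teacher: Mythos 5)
Your proof is correct, but it takes a genuinely different route from the paper's. You construct $B$ explicitly as the set of $T$-parent edges of the heads $v_1,\dots,v_k$ (after using the DAG hypothesis to rule out $v_i=r$), verify the result is an arborescence by an in-degree count plus acyclicity, and get uniqueness by observing that any valid $B'\subseteq E(T)$ must contain each parent edge $e_i$ and, by counting edges against $|V(T)|-1$, has size exactly $k$. The paper instead proceeds by induction on $|L|$, peeling off one edge of $L$ at a time, invoking the induction hypothesis twice and arguing that the newly removed edge $e'$ must share its head with the newly added edge $e$ and cannot lie in $L$; the explicit characterization of $B$ that you take as the starting point appears there only afterwards, as Remark \ref{remark for cut and paste}. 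Your argument buys a self-contained uniqueness proof and delivers the content of that remark for free, at the cost of a slightly longer verification; the paper's induction is more compact but leaves its base case and some uniqueness bookkeeping implicit. One cosmetic point worth a sentence in your write-up: the claim says only ``is an arborescence,'' so you should note that since $r$ has in-degree $0$ in $T\cup L$ (no $T$-edge enters the root of $T$, and you showed no $L$-edge does either), any arborescence of the form $(T\cup L)\setminus B'$ with $B'\subseteq E(T)$ is necessarily rooted at $r$, which is the case your uniqueness argument treats.
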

\begin{proof}
We prove this by induction on the size of $L$. The base case (i.e., when we have one edge in $L$) is easy to see. Suppose it is true when $|L|\leq k$ now we prove it for $|L|=k+1$. Let $L'\subsetneq L$ be a subset of size $k$. Since $T\cup L$ is a DAG so is $T\cup L'$ and hence by induction hypothesis there is a unique $B'\subseteq E(T)$ such that $T':=(T\cup L')\setminus B'$ is an arborescence rooted at $r$. Let $\{e\}=L\setminus L'$, since $T'\cup\{e\}$ is a subgraph of $T\cup L$, we know $T'\cup \{e\}$ is a DAG too and again by induction hypothesis there is an edge $e'\in E(T')$ such that $(T'\cup \{e\})\setminus \{e'\}$ is an arborescence. Since $(T'\cup \{e\})\setminus \{e'\}$ is an arborescence, $e'$ and $e$ must have the same heads (otherwise the head of $e$ has indegree $2$ in $(T'\cup \{e\})\setminus \{e'\}$). The inductive step follows by noticing that $e'$ cannot be in $L$ because otherwise it contradicts the fact that the heads of edges in $L$ are disjoint; hence, $e'\in E(T)$. Let $B:=B'\cup\{e'\}\subseteq E(T)$. Note $|B|=k+1$. Then, we have $(\overline T\cup L)\setminus B=(T'\cup\{e\})\setminus\{e'\}$ which is an arborescence.
\end{proof}

Note that $\overline T\cup \added$ satisfies all the conditions of Claim \ref{valid cut and paste} so the line after the while loop in the algorithm works correctly.

\begin{remark}\label{remark for cut and paste}
The edges in $B$ in Lemma \ref{valid cut and paste} are the edges of $E(T)$ whose head is one of vertices $v_1,...,v_k$. Otherwise some of $v_i$'s have indegree $2$ in $(T\cup L)\setminus B$ which contradicts the fact that $(T\cup L)\setminus B$ is an arborescence.
\end{remark}

Finally, we show that every expansion edge is a good expansion edge (recall Definition \ref{def, good/bad exp}) with respect to $T^*$ to finish the correctness of Algorithm \ref{modifying bfs alg}.
\begin{lemma}\label{good exp edge remains good}
Every expansion edge is a good expansion edge with respect to $T^*$.
\end{lemma}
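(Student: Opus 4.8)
The goal is to show that after Algorithm~\ref{modifying bfs alg} terminates, \emph{every} expansion edge is good with respect to $T^*$. There are two things to track: the edges that were already good with respect to $\overline T$ should remain good, and the bad edges we explicitly handled (by adding $(w,u)$ to $\added$) should now be good. The plan is to first understand exactly how $T^*$ differs from $\overline T$: by Remark~\ref{remark for cut and paste}, $T^*$ is obtained from $\overline T$ by, for each $(w,u)\in\added$, deleting the (unique) edge of $\overline T$ into $u$ and inserting $(w,u)$ instead. So the only vertices whose parent changed are the heads $u$ of edges in $\added$, and each such $u$ now has an active parent $w$ in $T^*$.

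The first step is to control how expansion levels change. I would argue that for every vertex $x$, $\explevel_{T^*}(x) \le \explevel_{\overline T}(x)$: rerouting $u$ to hang off an active vertex $w$ with $\explevel_{\overline T}(w) \le \explevel_{\overline T}(u)+1$ (Claim~\ref{added edges are from l to l or l+1 to l}), via an auxiliary (cost-$0$) edge $(w,u)$, means $u$'s new expansion level is $\explevel_{T^*}(w) \le \explevel_{\overline T}(w) \le \explevel_{\overline T}(u)+1$; but actually I want the cleaner bound, and here the key point is that $(w,u)$ is \emph{not} an expansion edge (it is auxiliary), so passing through it costs nothing, and one shows inductively on the $T^*$-depth that no vertex's expansion level increases — in fact each reroute can only decrease it since we attach $u$ to an ancestor-level-or-below vertex via a free edge. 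I would make this precise by induction along $T^*$, handling at each rerouted vertex $u$ the inequality $\explevel_{T^*}(u) = \explevel_{T^*}(w) \le \explevel_{\overline T}(w) \le \explevel_{\overline T}(u)$, using that active $w$ lies at $\overline T$-level $\le \explevel_{\overline T}(u)$ whenever the bad edge at $u$ forced a type-failure (since if $w$ were strictly higher the subtree-active-at-level-$\le\explevel_{\overline T}(u)+1$ condition would have been met — so $w$ witnesses $\explevel_{\overline T}(w)\le\explevel_{\overline T}(u)$, making the originally-bad edge at $u$ now type~1).

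Next, with monotonicity of expansion levels in hand, I would check the two families of edges. For an edge $e=(u,v)$ that was already good with respect to $\overline T$: if it was type~1 via an active ancestor $w$ with equal $\overline T$-level, I need $w$ to still be an ancestor of $u$ in $T^*$ with equal $T^*$-level; ancestry can change because of reroutes, so the cleaner route is to observe that \emph{all} expansion edges now satisfy type~1 directly — for every expansion edge $(u,v)$, the construction guarantees $u$ has an active vertex available at the right level, and after the reroutes that active vertex is an ancestor at the same expansion level. Concretely, if $(u,v)$ was bad with respect to $\overline T$ we added $(w,u)$, so in $T^*$ the parent of $u$ is the active vertex $w$ with (by the argument above) $\explevel_{T^*}(w)=\explevel_{T^*}(u)$, giving type~1. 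If $(u,v)$ was good with respect to $\overline T$, then either it was type~1 (and I show the witnessing ancestor, or some active ancestor, survives at the correct level using $\explevel_{T^*}\le\explevel_{\overline T}$ plus the fact that expansion levels only drop along the path), or it was type~2, witnessed by an active $w'$ in $\overline T_u$ with $\explevel_{\overline T}(w')\le \explevel_{\overline T}(u)+1$, and I check this witness (or an active vertex it dominates) is still in $T^*_u$ at expansion level $\le \explevel_{T^*}(u)+1$.

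\textbf{Main obstacle.} The delicate part is the bookkeeping of \emph{ancestry} and \emph{subtree membership} under the simultaneous reroutes: removing $\overline T$-edges into the heads of $\added$ and splicing in the auxiliary edges can move whole subtrees around, so a vertex that was an ancestor of $u$ in $\overline T$ need not be one in $T^*$, and an active vertex that witnessed type~2 inside $\overline T_u$ might get pulled out of $T^*_u$. I expect the clean way through is \emph{not} to track individual witnesses but to re-derive goodness from scratch in $T^*$: use the DAG/arborescence structure (Lemma~\ref{reachability from r}, Claim~\ref{valid cut and paste}, Remark~\ref{remark for cut and paste}) to pin down that every head-of-$\added$ vertex has an active $T^*$-parent at equal expansion level (type~1), and for the remaining expansion edges use an analogue of Lemma~\ref{expansion edges present in arborescence}(b) applied to $T^*$ — every expansion edge's head-subtree in $T^*$ contains an active vertex — combined with the expansion-level monotonicity to place that active vertex at expansion level $\le \explevel_{T^*}(u)+1$, which is exactly type~2. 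Making the "$\le \explevel_{T^*}(u)+1$'' bound hold for that subtree-active vertex is where I expect to spend the most care, since it relies on the fact that $\overline T$ was a \emph{shortest-path} (hence expansion-level-minimal) tree and that the reroutes never create a vertex whose expansion level exceeds its $\overline T$-value.
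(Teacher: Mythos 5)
Your treatment of the originally-bad edges is fine and matches the paper: once $(w,u)\in\added$ makes the active vertex $w$ the $T^*$-parent of $u$ via a non-expansion (auxiliary) edge, every expansion edge with tail $u$ is type~1 in $T^*$. The gap is in your treatment of the originally-good edges, where the tool you lean on is false. You claim the monotonicity $\explevel_{T^*}(x)\le\explevel_{\overline T}(x)$, but since $\overline T$ is a \emph{shortest-path} tree in $\auxGraph$ with expansion edges of cost $1$ and all other edges of cost $0$, $\explevel_{\overline T}(x)$ equals the distance from $r$ to $x$, and any other arborescence in $\auxGraph$ can only have $\explevel_{T^*}(x)\ge\explevel_{\overline T}(x)$ — the inequality goes the other way. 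Your sub-argument for it also misreads type~2: you assert that if the reroute target $w$ had $\explevel_{\overline T}(w)=\explevel_{\overline T}(u)+1$ then the bad edge at $u$ would have been type~2, but type~2 requires the active witness to lie \emph{in the subtree} $\overline T_u$, and the $w$ supplied by Lemma~\ref{while loop works} need not be there; indeed Claim~\ref{added edges are from l to l or l+1 to l} explicitly allows $\explevel_{\overline T}(w)=\explevel_{\overline T}(u)+1$, and the proof of Lemma~\ref{reachability from r} handles exactly that case. With monotonicity gone, your fallback — apply Lemma~\ref{expansion edges present in arborescence}(b) to $T^*$ — only yields \emph{some} active vertex in the head's $T^*$-subtree with no bound on its expansion level, and the level bound $\le\explevel_{T^*}(u)+1$ is precisely what has to be proved; you flag this as the "main obstacle" but do not close it.

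The paper closes it by a local path-survival argument rather than any global level comparison: by Remark~\ref{remark for cut and paste}, the only $\overline T$-edges deleted are those entering heads of $\added$-edges, i.e.\ entering tails of \emph{bad} expansion edges. For a type~1 edge $(u,v)$, the witness path $P_{w,u}$ from the active ancestor $w$ contains no expansion edge, so every expansion edge with tail on it is itself type~1 (the type depends only on the tail), hence not bad, hence no vertex of $P_{w,u}$ is rerouted and the path survives intact into $T^*$, preserving type~1. For a type~2 edge one takes the \emph{closest} active descendant witness, so the path $P_{u,w}$ has at most one expansion edge and every expansion edge with tail on it is type~2, again not bad, and the path survives, preserving type~2. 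Without this (or an equivalent) argument that the witness paths are untouched by the reroutes, your proof does not go through.
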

\begin{proof}
Note that for a bad expansion edge $e=(u,v)$ in $\overline T$ since there is an edge $(w,u)\in \added$ in $T^*$ where $w$ is active, $e$ is a good expansion edge of type 1 with respect to $T^*$.

Next we show that when we are removing edges from $\overline T$ to make $T^*=\overline T\cup\added$ a DAG, we do not make a good expansion edge become bad in $T^*$. By Remark \ref{remark for cut and paste}, we remove $(x,y)\in\overline T$ if and only if there exists an edge in $\added$ whose head is $y$.
\begin{itemize}
    \item[\textbf{case 1.}] If $e=(u,v)$ is a good expansion edge of type 1 in $\overline T$. So there is an active vertex $w$ in $\overline T$ such that the dipath $P_{w,u}$ in $\overline T$ from $w$ to $u$ does not have any expansion edge. Furthermore, if there is an expansion edge whose tail is on $P_{w,u}$, then that expansion edge is of type 1. Hence, there is no edge in $\added$ whose head is in $P_{w,u}$ and so $P_{w,u}$ is in $T^*$ as well and $e$ is a good expansion edge of type 1 in $T^*$.
    \item[\textbf{case 2.}] If $e=(u,v)$ is a good expansion edge of type 2 in $\overline T$. So there is an active vertex $w$ in the subtree of $\overline T$ rooted at $u$ such that the dipath $P_{u,w}$ in $\overline T$ from $u$ to $w$ has at most one expansion edge (it could be that $w=v$). Pick the one that is closest (in terms of edge hops) to $u$. Then all the expansion edges whose tail is on $P_{u,w}$ is of type 2. Therefore, there is no edge in $\added$ whose head is in $P_{u,w}$ and so $P_{u,w}$ is in $T^*$ as well and $e$ is a good expansion edge of type 2 in $T^*$. 
\end{itemize}
\end{proof}

Finally, we can state the proof of the main lemma of this section.

\begin{proof}(of Lemma \ref{the bound on the number of expansion edges})
Consider $T^*$ and assign \textbf{two} tokens to every active vertex. We show that the number of expansion edges is at most the number of tokens to prove the lemma. We do this via the following charging scheme.

~

\noindent
\textbf{Charging scheme:}
At the beginning we label every token \textbf{unused}. We process all the vertices with height $l$. For each expansion edge whose tail has height $l$ we assign an unused token to it and change the label of the assigned token to \textbf{used}. Then we move to height $l-1$ and repeat the process. Fix height $l$. We do the following for every vertex $u$ with this height: if there is no expansion edge whose tail is $u$ then mark $u$ as processed. Otherwise let $(u,v_1),...,(u,v_k)$ be all the expansion edges whose tail is $u$. Note that by definition of type 1 and 2, either (i) all $(u,v_i)$'s are type 1 or (ii) all are type 2. Base on these two cases we do the following:
\begin{itemize}
    \item[\textbf{(i)}] Let $(u,v_1),...,(u,v_k)$ be the expansion edges of type 1. For each $1\leq i\leq k$ there is at least one unused token in $T^*_{v_i}$. Pick one such unused token and assign it to $(u,v_i)$ and change its label to used. Mark $u$ as processed.
    \item[\textbf{(ii)}] Let $(u,v_1),...,(u,v_k)$ be the expansion edges of type 2. For each $1\leq i \leq k$ there is at least one unused token in $T^*_{v_{i}}$. Pick one such unused token and assign it to $(u,v_i)$ and change its label to used. Furthermore, after this there is at least one more unused token in $T^*_{u}$. Mark $u$ as processed.
\end{itemize}
Here we prove by induction on the height $l$, that case (i) and case (ii) works correctly.

Consider the following base case: let $u$ be a vertex and let $(u,v_1),...,(u,v_k)$ be the only expansion edges in $T^*_u$. Then, by Lemma \ref{expansion edges present in arborescence}(b), for every $1\leq i\leq k$ there is an active vertex in $T^*_{v_i}$ and so it has two unused tokens. Therefore, both cases (i) and (ii) work in the base case.

Now consider a vertex $u$ and assume case (i) and case (ii) are correct for all vertices (except $u$) in $T^*_u$ that are the tail of some expansion edges. We show it is correct for $u$ as well.

\textbf{Proof for case (i):} Suppose $u$ falls into case (i). So each $(u,v_i)$ for $1\leq i \leq k$ is of type 1. If there is no expansion edge in $T^*_{v_i}$ then by Lemma \ref{expansion edges present in arborescence}(b) there is an active vertex in $T^*_{v_i}$ and has two unused tokens. So now assume there is an expansion edge in $T^*_{v_i}$ and pick the one $f_i=(x_i,y_i)$ whose tail is closest to $v_i$ (break the ties arbitrarily). If $f_i$ is of type 2, then by induction hypothesis $T^*_{x_i}$ has one unused token (when we processed $x_i$) and since by the choice of $f_i$ there is no expansion edge on the dipath $P_{v_i,x_i}$ in $T^*$; hence this token is unused at this iteration as well. If $f_i$ is of type 1, then there is an active vertex $z$ on $P_{v_i,x_i}$ and has two tokens. Again we note that the tokens of $z$ are unused since there is no expansion edge on $P_{v_i,z}$.

So we proved for each $(u,v_i)$ where $1\leq i\leq k$ there is at least one unused token in $T^*_{v_i}$, as desired.

\textbf{Proof for case (ii):} Suppose $u$ falls into case (ii). So each $(u,v_i)$ for $1\leq i \leq k$ is of type 2. With the exact same argument as in case (i), we can show that for each $1\leq i\leq k$ there is (at least) one unused token in $T^*_{v_i}$. So we just need to show an extra unused token in $T^*_u$.

Since $(u,v_i)$'s are of type 2, there must be an active vertex $w$ such that $\explevel_{T^*}(u)\leq \explevel_{T^*}(w)\leq \explevel_{T^*}(u)+1$. Pick such $w$ with smallest $\explevel$. If $\explevel_{T^*}(w)=\explevel_{T^*}(u)$ then $w$ has two tokens and these tokens are different than the ones in $T^*_{v_i}$ because $w$ is not in $T^*_{v_i}$'s. Furthermore, the tokens of $w$ are unused because there is no expansion edge on the dipath $P_{u,w}$ in $T^*$.

So let us assume $\explevel_{T^*}(w)=\explevel_{T^*}(u)+1$. There are two cases to consider:
\begin{itemize}
    \item $w$ is in $T^*_{v_j}$ for some $1\leq j\leq k$. Note that there is no expansion edge on $P_{v_j,w}$. Therefore, among the two tokens of $w$, one could be assigned to $(u,v_j)$ as before and the other one will be unused when we are processing $u$ so this would be the extra unused token we wanted.
    \item $w$ is not in $T^*_{v_j}$ for any $1\leq j\leq k$. So there is one expansion edge $(x,y)$ on $P_{u,w}$. By the choice of $w$ (with smallest $\explevel$) together with the fact that all $(u,v_i)$'s are of type 2, implies $(x,y)$ must be of type 2. Therefore, $x$ has one unused token when $x$ was processed. Since there is no expansion edge on $P_{u,x}$, this token is unused at this iteration as well. Finally, since $x$ is not in $T^*_{v_i}$ for $1\leq i\leq k$ this unused token is the extra token, as desired.
\end{itemize}
\end{proof}

\subsection{Putting everything together}\label{putting everything together}
Fix an iteration $l$. We use Lemmas \ref{bound for the number of killer edges} \& \ref{the bound on the number of expansion edges} and the properties of graph $G$ to bound $\sum\limits_{A\in\mathcal{A}_l}|\Delta^{l}_{\killer}(A)\cup\Delta^{l}_{\expansion}(A)|$. Consider an active moat $A$ and its SCC $C_A$. We show there is at most one killer/expansion edge that enters $C_A$. So the remaining killer/expansion edges must enter some Steiner node in $A\setminus C_A$. We use this fact later.

\begin{claim}\label{claim needed for contraction}
Fix an iteration $l$ and an active moat $A\in\mathcal{A}_l$. There is at most one edge in $\Delta^{l}_{\killer}(A)\cup\Delta^{l}_{\expansion}(A)$ whose head is in $C_A$.
\end{claim}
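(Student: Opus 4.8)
The plan is to argue by contradiction, exploiting the deletion (pruning) phase of Algorithm~\ref{our alg} together with the structure of active moats in quasi-bipartite graphs (Lemma~\ref{remark on active moats}). Suppose two distinct edges $e=(u,v)$ and $e'=(u',v')$ in $\Delta^{l}_{\killer}(A)\cup\Delta^{l}_{\expansion}(A)$ both have their heads in $C_A$, i.e.\ $v,v'\in C_A$. Both $e$ and $e'$ are non-antenna edges, so $u,u'\in X\cup\{r\}$; moreover, since some active moat $A$ pays towards one of their buckets at iteration $l$, neither edge is in $\fs$, so both were added to $F$ strictly after every edge of $\fs$ (in particular after all the edges that form the SCC $C_A$).

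First I would observe that $C_A$ is a strongly connected subgraph all of whose internal edges are already in $\fs$. Hence for the purposes of reachability, entering $C_A$ at \emph{any} of its vertices is equivalent to entering it at \emph{all} of its vertices via $\fs$-paths. So if both $e$ and $e'$ survive into $\overline F$, then at the moment in the reverse-delete phase when we consider the later of the two (say $e$ was added after $e'$, so $e$ is considered first in reverse-delete), the edge $e'$ together with the $\fs$-edges inside $C_A$ already provide a route into $C_A$ reaching $v$ (indeed reaching all of $C_A$). The key point is that any terminal or SCC that was reachable using $e$ is still reachable without $e$: any dipath in $\overline F$ that used $e=(u,v)$ to first enter $C_A$ can instead be rerouted to use $e'=(u',v')$ to enter $C_A$ (reaching $v'\in C_A$) and then travel through the $\fs$-edges of $C_A$ to $v$, continuing along the original path. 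This shows $\overline F\setminus\{e\}$ is still feasible, so $e$ would have been pruned --- contradicting $e\in\overline F$.

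The main obstacle to make this rigorous is checking that the rerouting argument is valid, i.e.\ that $e'$ (and the rest of $\overline F\setminus\{e\}$) really is available at the time $e$ is examined in reverse-delete, and that we never create a situation where removing $e$ disconnects something. For this I would use exactly the structure of the analysis in Section~\ref{section bound on expansion edges}: both $u,u'$ lie in $X\cup\{r\}$ and are reachable from $r$ in $\overline F\setminus\{e\}$ (since they are terminals or the root, and $\overline F$ is feasible, and $e$'s head is in $C_A$ so $e$ is not on the $r$-to-$u$ or $r$-to-$u'$ dipath --- if it were, that dipath would enter and leave $C_A$, but then $C_A$ plus a shortcut gives a shorter dipath avoiding $e$). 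Then $v'\in C_A$ is reachable via $e'$, and every vertex of $C_A$ is reachable from $v'$ inside $\fs\subseteq\overline F\setminus\{e\}$; so every SCC/terminal formerly reached through $e$ is still reached. Since $e$ was purchased after all of $\fs$ and after $e'$ (if $e'$ came earlier) --- and if $e'$ came later, just swap the roles of $e$ and $e'$ in the argument --- the reverse-delete phase would have removed $e$, the desired contradiction. Thus $|\{f\in\Delta^{l}_{\killer}(A)\cup\Delta^{l}_{\expansion}(A):\ \mathrm{head}(f)\in C_A\}|\le 1$.
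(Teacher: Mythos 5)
Your overall strategy is the same as the paper's: both edges have their heads in $C_A$, all edges of $C_A$ lie in $\fs$ and were bought earlier, so the reverse-delete should have removed one of the two. The paper's one-line proof is careful to say only that \emph{one of} $e$ or $e'$ would be pruned; your attempt to make this rigorous commits to pruning the \emph{later-bought} edge $e$, and the step where this breaks is your assertion that $u'$ (the tail of $e'$) is reachable from $r$ in $\overline F\setminus\{e\}$. The justification you give (``if the $r$--$u'$ dipath used $e$ it would enter and leave $C_A$, but then $C_A$ plus a shortcut gives a shorter dipath avoiding $e$'') does not go through: to re-enter $C_A$ without $e$ you need some other entering edge whose tail is already reachable, and the only candidates are $e'$ itself --- whose tail is exactly the vertex $u'$ in question --- or $\fs$-edges from Steiner nodes of $A\setminus C_A$, whose heads' reachability is equally unestablished; ``shortest path'' reasoning has no force here since $\overline F$ is just an arborescence, not a shortest-path tree. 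In fact the assertion is exactly what can fail: if $e$ genuinely survives its reverse-delete step, then at that moment there is a violated set $S$ whose only incoming surviving edge is $e$, and one checks $C_A\subseteq S$ and $u'\in S$, so \emph{every} $r$--$u'$ dipath in the surviving edge set crosses $e$. Concretely, with $C_A=\{a,b\}$, $e=(u,a)$, $e'=(u',b)$, and $u'$ only enterable from $a$, the later-bought edge $e$ is not prunable at its turn even though $e'$ and all of $\fs$ are present; it is the earlier-bought $e'$ that gets deleted later. So choosing which edge to prune by purchase order is the wrong dichotomy.

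The missing idea is to split on the structure of the arborescence $\overline F$ rather than on purchase order. Since $\overline F$ is an arborescence, $v\neq v'$ (two edges cannot enter the same vertex), and the unique $r$--$u$ path avoids $e$ while the unique $r$--$u'$ path either avoids $e$ or passes through $v$. In the first case your rerouting does work and $e$ would have been pruned (when $e$ is examined, $e'$ and all of $\fs$ are still present, regardless of which was bought first, because if $e'$ had already been deleted you are done). In the second case $v$ is a strict ancestor of $v'$ in $\overline F$, hence the $r$--$v$ path cannot use $e'$, and now the symmetric rerouting (go to $v$, then through $C_A$ via $\fs$ to $v'$) shows $e'$ would have been pruned when it is examined, since all of $\fs$ and $\overline F\setminus\{e'\}$ are present at that time. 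Either way one of the two edges is deleted, contradicting that both lie in $\overline F$; this (or an equivalent two-witness-set argument) is what is needed to close the gap, and it recovers exactly the paper's claim that ``one of $e$ or $f$'' is pruned.
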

\begin{proof}
Suppose there are two edges $e$ and $f$ in $\Delta^{l}_{\killer}(A)\cup\Delta^{l}_{\expansion}(A)$ that enter $C_A$. Since $e$ and $f$ are bought later than all the edges in $C_A$, we should have pruned one of $e$ or $f$ in the deletion phase.
\end{proof}

Consider the graph $\fs\cup \overline F$. Remove all vertices that are not in an active moat at this iteration. For each active moat $A$, remove all Steiner nodes in $A\setminus C_A$ that are not the head of any edge in $\overline F^l_{\killer}\cup\overline F^l_{\expansion}$. Then, for each $A\in\mathcal{A}_l$ contract $C_A$ to a single vertex and call the contracted vertex by $C_A$. Finally, if there are parallel edges, arbitrarily keep one of them and remove the rest\footnote{Note that all the parallel edges are antenna edges and so removing them does not affect the quantity $\sum\limits_{A\in\mathcal{A}_l}|\Delta^{l}_{\killer}(A)\cup\Delta^{l}_{\expansion}(A)|$ we are trying to bound.}. Call the resulting graph $G'$.

Now we relate the sum we are interested in to bound with the sum of the indegree of vertices in $G'$.

\begin{claim}\label{counting total indegree of active moats}
For each active moat $A\in\mathcal{A}_l$, we have
\begin{equation}\label{relation between indegrees}
|\Delta^l_{\killer}(A)\cup\Delta^l_{\expansion}(A)|\leq|\delta^{in}_{G'}(C_A)|+1.
\end{equation}
\end{claim}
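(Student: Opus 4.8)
The plan is to count the edges of $D:=\Delta^{l}_{\killer}(A)\cup\Delta^{l}_{\expansion}(A)$ according to where their heads lie, and then charge each relevant head to a distinct edge of $G'$ entering the contracted vertex $C_A$. The whole argument rests on three facts that I would state first. (1) Since $\overline F$ is an arborescence rooted at $r$ (any inclusion-wise minimal feasible DST solution is, as noted in the Introduction; equivalently, two edges entering a common vertex would let the deletion phase prune one), every vertex has in-degree at most one in $\overline F$, and as $D\subseteq\overline F$ the edges of $D$ have pairwise distinct heads. (2) Every edge of $D$ is a non-antenna edge that is killer or expansion with respect to $A$, so by Definition \ref{relation between edges and active moats} its head lies in $A$. (3) By Claim \ref{claim needed for contraction}, at most one edge of $D$ has its head inside $C_A$.

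Combining (1)--(3), at least $|D|-1$ of the edges of $D$ have heads at \emph{distinct} vertices of $A\setminus C_A$, and by Lemma \ref{remark on active moats} each such vertex is a Steiner node; write them $s_1,\dots,s_m$ with $m\ge |D|-1$. Each $s_i$ is the head of an edge of $\Delta^{l}_{\killer}(A)\cup\Delta^{l}_{\expansion}(A)\subseteq\overline F^{l}_{\killer}\cup\overline F^{l}_{\expansion}$, and $s_i\in A\in\mathcal A_l$, so $s_i$ survives both vertex-deletion steps in the construction of $G'$ (it is in an active moat, and it is the head of an edge of $\overline F^{l}_{\killer}\cup\overline F^{l}_{\expansion}$, so it is not discarded even as a Steiner node in $A\setminus C_A$).

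The last step is to produce, for each $i$, a distinct edge of $\delta^{in}_{G'}(C_A)$. Applying Lemma \ref{remark on active moats} once more to the active set $A$ with respect to $\fs$: since $s_i\in A\setminus C_A$ there is an edge of $\fs$ directed from $s_i$ into $C_A$. This edge belongs to $\fs\cup\overline F$, both of its endpoints survive the vertex deletions, and after contracting $C_A$ it is an edge from $s_i$ to the vertex $C_A$ (not a loop, since $s_i\notin C_A$); if it happens to be discarded when removing parallel copies, a parallel edge with the same tail $s_i$ remains. Hence $\delta^{in}_{G'}(C_A)$ contains, for each $i$, an edge with tail $s_i$, and since the $s_i$ are distinct these are distinct edges. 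Therefore $|\delta^{in}_{G'}(C_A)|\ge m\ge |D|-1$, i.e.\ $|\Delta^{l}_{\killer}(A)\cup\Delta^{l}_{\expansion}(A)|\le |\delta^{in}_{G'}(C_A)|+1$, as desired.

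I do not expect a genuine obstacle here: the only non-routine inputs are that $\overline F$ is an arborescence (so the heads of $D$ are distinct) and Claim \ref{claim needed for contraction}; everything else is careful bookkeeping through the vertex deletions, the contraction of each $C_A$, and the parallel-edge cleanup that define $G'$, checking that the ``back edges'' $s_i\to C_A$ guaranteed by Lemma \ref{remark on active moats} genuinely survive into $G'$.
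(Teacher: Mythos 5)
Your proof is correct and follows essentially the same route as the paper's: use the indegree-one property of $\overline F$ to get distinct heads, invoke Claim \ref{claim needed for contraction} for the single possible edge into $C_A$ (the ``$+1$''), and charge each remaining head, a Steiner node of $A\setminus C_A$, to the edge from that node into $C_A$ guaranteed by Lemma \ref{remark on active moats} and retained in $G'$. The extra care you take with the parallel-edge cleanup and the survival of the Steiner nodes in $G'$ is just a more explicit rendering of the same bookkeeping.
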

\begin{proof}
Consider an active moat $A$ and let $v$ be a Steiner node in $A\setminus C_A$. First note that the indegree of vertices in $\overline F$ is at most $1$ therefore there is at most one edge $e\in\overline F^l_{\killer}\cup\overline F^l_{\expansion}$ that enters $v$. Secondly, we note that by Lemma \ref{remark on active moats} there is at least one edge in $\fs$ from $v$ to $C_A$ and we kept one such edge in $G'$; so the contribution of $e$ to the LHS of (\ref{relation between indegrees}) is accounted for in the RHS. Finally, by Claim \ref{claim needed for contraction} at most one killer/expansion edge enters $C_A$ and the contribution of this edge is accounted for by the plus one in the RHS.
\end{proof}

Next, using Lemmas \ref{bound for the number of killer edges} \& \ref{the bound on the number of expansion edges} we bound the number of vertices in $G'$.
\begin{claim}\label{number of vertices in G'}
Fix an iteration $l$. Then, $|V(G')|\leq 4\cdot|\mathcal{A}_l|$.
\end{claim}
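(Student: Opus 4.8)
The plan is to count the vertices of $G'$ by classifying them into four groups, each of size at most $|\mathcal{A}_l|$. First, every active moat $A \in \mathcal{A}_l$ contributes exactly one contracted vertex $C_A$, so there are precisely $|\mathcal{A}_l|$ such vertices. Every other vertex of $G'$ is a Steiner node lying in some $A \setminus C_A$ that survived the pruning of $G'$, and by construction each surviving Steiner node is the head of at least one edge of $\overline F^l_{\killer} \cup \overline F^l_{\expansion}$. Since the indegree of every vertex in $\overline F$ is at most $1$, distinct surviving Steiner nodes are heads of distinct edges of $\overline F^l_{\killer} \cup \overline F^l_{\expansion}$. Hence the number of surviving Steiner nodes is at most $|\overline F^l_{\killer}| + |\overline F^l_{\expansion}|$ (here I am using that an edge purchased as a killer edge cannot also have been purchased as an expansion edge, so these two sets are disjoint; and each contributes its unique head).

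Next I would invoke the two lemmas from the previous subsections: Lemma \ref{bound for the number of killer edges} gives $|\overline F^l_{\killer}| \le |\mathcal{A}_l|$ and Lemma \ref{the bound on the number of expansion edges} gives $|\overline F^l_{\expansion}| \le 2 \cdot |\mathcal{A}_l|$. Combining, the number of surviving Steiner nodes is at most $3 \cdot |\mathcal{A}_l|$. Adding the $|\mathcal{A}_l|$ contracted SCC-vertices yields $|V(G')| \le 4 \cdot |\mathcal{A}_l|$, as claimed. A small technical point to address: the edges $\overline F^l_{\killer}$ and $\overline F^l_{\expansion}$ are defined as subsets of $\overline F \setminus \fs$ whose buckets are being paid by some active moat at iteration $l$, which is exactly the set of edges that can be incident to surviving Steiner nodes in $A \setminus C_A$ — so the correspondence between surviving Steiner nodes and these edges is the right one to use, and I should state this explicitly.

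The only real subtlety, and the step I would treat most carefully, is verifying that every surviving Steiner node really is the head of an edge counted in $|\overline F^l_{\killer}| + |\overline F^l_{\expansion}|$ rather than, say, the head of an antenna edge or an edge that no active moat is currently paying for. By the definition of $G'$ we only keep Steiner nodes in $A \setminus C_A$ that are heads of edges in $\overline F^l_{\killer} \cup \overline F^l_{\expansion}$, so this holds by fiat; the content is just that the map ``surviving Steiner node $\mapsto$ its unique in-edge in $\overline F$'' is injective, which follows from indegree at most one in $\overline F$, and lands in $\overline F^l_{\killer} \cup \overline F^l_{\expansion}$ by the pruning rule. With that observation in place the count is immediate. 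I would therefore write the proof as: (1) $|\mathcal{A}_l|$ contracted vertices; (2) the injection from surviving Steiner nodes into $\overline F^l_{\killer} \cup \overline F^l_{\expansion}$; (3) apply Lemmas \ref{bound for the number of killer edges} and \ref{the bound on the number of expansion edges}; (4) sum to get $4 \cdot |\mathcal{A}_l|$.
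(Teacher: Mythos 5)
Your proof is correct and follows essentially the same route as the paper: count the $|\mathcal{A}_l|$ contracted vertices $C_A$, observe that every remaining Steiner node of $G'$ is the head of a (distinct) edge of $\overline F^l_{\killer}\cup\overline F^l_{\expansion}$, and apply Lemmas \ref{bound for the number of killer edges} and \ref{the bound on the number of expansion edges} to get $|V(G')|\leq |\mathcal{A}_l|+|\mathcal{A}_l|+2\cdot|\mathcal{A}_l|=4\cdot|\mathcal{A}_l|$. Your explicit injectivity remark (each edge has a unique head, so distinct surviving Steiner nodes use distinct edges) is the same observation the paper leaves implicit.
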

\begin{proof}
The set $V(G')$ is consist of $C_A$'s for some active moat $A$ and bunch of Steiner nodes. Note that we kept a Steiner node $s$ if there is (exactly) one edge in $\overline F^l_{\killer}\cup\overline F^l_{\expansion}$ that enters $s$. Therefore, $|V(G')|$ is at most $|\mathcal{A}_l|+|\overline F^l_{\killer}|+|\overline F^l_{\expansion}|$. The bound follows from Lemmas \ref{bound for the number of killer edges} \& \ref{the bound on the number of expansion edges}.
\end{proof}

Finally, we prove Theorems \ref{main thm minor free graphs} \& \ref{main thm planar graphs}.

\begin{proof}(of Theorem \ref{main thm minor free graphs})
Since $G$ is $K_r$-minor free so does $G'$. So we can write
\begin{equation}\label{number of killer and exp finally}
    \begin{aligned}
        \sum\limits_{A\in\mathcal{A}_l}\big|\Delta^{l}_{\killer}(A)\cup\Delta^{l}_{\expansion}(A)\big|&\leq \sum\limits_{A\in\mathcal{A}_l}\big(|\delta^{in}_{G'}(C_A)|+1\big)\\
        &=|E(G')|+|\mathcal{A}_l|\\
        &\leq O(r\cdot\sqrt{\log r})\cdot 4\cdot|\mathcal{A}_l|+|\mathcal{A}_l|\\
        &=O(r\cdot\sqrt{\log r})|\mathcal{A}_l|,
    \end{aligned}
\end{equation}
where the inequality follows from Claim \ref{counting total indegree of active moats} and the second inequality follows from Claim \ref{number of vertices in G'} together with Theorem \ref{size of minor free graphs}.

Next we show (\ref{ineqaulity primal-dual analysis}) holds for $\alpha=O(r\cdot\sqrt{\log r})$.
\begin{equation*}
    \begin{aligned}
        \sum\limits_{A\in\mathcal{A}_l}|\Delta^{l}(A)|&=\sum\limits_{A\in\mathcal{A}_l}|\Delta^{l}_{\killer}(A)\cup\Delta^{l}_{\expansion}(A)|+\sum\limits_{A\in\mathcal{A}_l}|\Delta^{l}_{\antenna}(A)|\\
        &\leq O(r\cdot\sqrt{\log r})|\mathcal{A}_l|+|\mathcal{A}_l|\\
        &=O(r\cdot\sqrt{\log r})|\mathcal{A}_l|,
    \end{aligned}
\end{equation*}
where inequality follows from inequality (\ref{number of killer and exp finally}) and Lemma \ref{number of antenna edges}.

As we discussed at the beginning of Section \ref{the analysis} that if (\ref{ineqaulity primal-dual analysis}) holds for $\alpha$ then we have a $(2\cdot\alpha)$-approximation algorithm. Hence, Algorithm \ref{our alg} is an $O(r\cdot\sqrt{\log r})$-approximation for DST on quasi-bipartite, $K_r$-minor free graphs.
\end{proof}

\begin{proof}(of Theorem \ref{main thm planar graphs})
The proof of Theorem \ref{main thm planar graphs} is exactly the same as proof of Theorem \ref{main thm minor free graphs} except instead of $O(r\cdot \sqrt{\log r})$ in (\ref{number of killer and exp finally}) we have $2$ because $G'$ is a bipartite planar graph, see Lemma \ref{size of bipartite planar}. Now we can write
\begin{equation*}
        \sum\limits_{A\in\mathcal{A}_l}\big|\Delta^{l}_{\killer}(A)\cup\Delta^{l}_{\expansion}(A)\big|\leq 9\cdot|\mathcal{A}_l|,
\end{equation*}
and
\begin{equation*}
        \sum\limits_{A\in\mathcal{A}_l}|\Delta^{l}(A)|\leq 10\cdot|\mathcal{A}_l|.
\end{equation*}
Therefore, (\ref{ineqaulity primal-dual analysis}) holds for $\alpha=10$ and hence we have a $20$-approximation algorithm, as desired.
\end{proof}

\section{NP-hardness}\label{np hardness}
In this section we prove Theorem \ref{hardness}. We reduce from the $\np$-complete problem \CNC (CVC) on planar graphs. Here, we are given a planar graph $G=(V,E)$ and a positive integer $k$. The goal is to decide if there is a vertex cover $S\subseteq V$ such that $|S|\leq k$ and $G[S]$ (the induced subgraph on $S$) is connected. This problem is shown to be $\np$-complete, see Lemma 2 in \cite{garey1977rectilinear}.

Our reduction from CVC on planar graphs to \ST on quasi-bipartite planar graphs is similar to the reduction showing \ST problem is $\np$-hard on general graphs from \cite{karp1972reducibility}. Let $(G=(V,E),~k)$ be an instance of CVC where $G$ is planar. Subdivide every edge $e\in E$ by a terminal vertex $x_e$ and call the resulting graph $G'$, which is also planar. Let $X:=\{x_e:~\forall e\in E\}$ be the set of terminals and $V(G)$ is the set of Steiner nodes in $G'$.

\begin{lemma}
$G'$ has a Steiner tree of size $k+|E(G)|-1$ if and only if $G$ has a connected vertex cover of size $k$.
\end{lemma}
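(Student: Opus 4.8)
The plan is to establish the two directions of the equivalence by relating connected vertex covers of $G$ to Steiner trees of $G'$ that connect all the subdivision terminals $X = \{x_e : e \in E(G)\}$.

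First I would handle the forward direction. Suppose $S \subseteq V(G)$ is a connected vertex cover with $|S| = k$. In $G'$, take the tree on the Steiner nodes $S$ induced by $G[S]$ (which is connected, so it has a spanning tree with $|S|-1 = k-1$ edges), and then for every edge $e = uv \in E(G)$ attach the terminal $x_e$ by a single edge to one of $u, v$ that lies in $S$ (at least one does, since $S$ is a vertex cover). This adds $|E(G)|$ edges and includes every terminal. The result is connected (every $x_e$ hangs off the connected core $G'[S]$) and acyclic by construction, hence a Steiner tree, and its size is $(k-1) + |E(G)| = k + |E(G)| - 1$, as claimed.

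For the reverse direction, suppose $G'$ has a Steiner tree $T$ with $|E(T)| = k + |E(G)| - 1$. Every terminal $x_e$ has degree $2$ in $G'$ (its only neighbors are the two endpoints of $e$), so in the tree $T$ each $x_e$ is either a leaf (degree $1$) or has degree $2$. Let $S := V(T) \cap V(G)$ be the set of Steiner nodes used by $T$. I would argue that $S$ is a vertex cover of $G$: for each edge $e = uv$, the terminal $x_e$ lies in $T$ and must be adjacent in $T$ to at least one of $u, v$, so at least one endpoint of $e$ is in $S$. Next, $G[S]$ is connected: contracting each terminal $x_e$ that has degree $2$ in $T$ (replacing the path $u - x_e - v$ by the edge $uv \in E(G)$) and deleting each terminal that is a leaf turns $T$ into a connected subgraph of $G$ on vertex set $S$, so $G[S]$ is connected. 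Finally I would bound $|S|$. Since $T$ is a tree, $|V(T)| = |E(T)| + 1 = k + |E(G)|$. The terminals contribute: we may assume $T$ includes every terminal (it must, by feasibility), so $|V(T) \cap X| = |E(G)|$, giving $|S| = |V(T)| - |E(G)| = k$. (If one prefers to allow $|S| < k$, one can pad $S$ by adding vertices along a path in $G$ connecting to existing ones; but the clean count above gives exactly $k$.)

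The main subtlety I anticipate is the degree analysis of terminals in $T$ and the resulting contraction argument — one must be careful that contracting the degree-$2$ terminals yields a genuine subgraph of $G$ (this is exactly why we subdivided: $u - x_e - v$ maps back to the original edge $uv$) and that no terminal is needed as an internal "Steiner" connector between two components of $G[S]$ in a way that would break connectivity. Everything else (planarity preservation under subdivision, the quasi-bipartite / bipartite structure with terminals on one side and $V(G)$ on the other, and the edge count) is routine. Combined with the $\np$-completeness of connected vertex cover on planar graphs, this lemma immediately yields Theorem \ref{hardness}.
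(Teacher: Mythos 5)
Your reverse direction is sound and is essentially the paper's own argument: setting $S = V(T)\cap V(G)$, the vertex-cover property follows because each terminal $x_e$ must have a $T$-neighbour among the two endpoints of $e$; connectivity of $G[S]$ follows by mapping the subpaths $u, x_e, v$ of $T$ back to the edges $uv$ of $G$ (the paper phrases this as an explicit alternating path between two vertices of $S$ rather than as a contraction, but it is the same idea); and $|S| = k$ follows from $|V(T)| = |E(T)|+1$ together with $X \subseteq V(T)$.

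The forward direction, however, is flawed as written. After subdividing every edge, $G'$ contains no edge between two Steiner nodes: $G'[S]$ is an independent set, so the ``connected core $G'[S]$'' from which you hang the terminals does not exist, and a spanning tree of $G[S]$ is not a subgraph of $G'$ (its edges were destroyed by the subdivision). If instead you keep only the $|E(G)|$ attachment edges of the form $x_e u$ with $u\in S$, you get a forest of stars with several components, not a tree. The construction must route the core through the terminals: for each of the $k-1$ spanning-tree edges $e=uv$ of $G[S]$ include both edges $u x_e$ and $x_e v$ of $G'$, and attach each of the remaining $|E(G)|-(k-1)$ terminals by a single edge to an endpoint lying in $S$; this yields a connected, acyclic subgraph of $G'$ with $2(k-1)+\bigl(|E(G)|-(k-1)\bigr)=k+|E(G)|-1$ edges, so your final count happens to be correct even though the accounting behind it is not. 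The paper sidesteps this bookkeeping entirely by observing that $G'[S\cup X]$ is connected and has $k+|E(G)|$ vertices, and then taking any spanning tree of it.
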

\begin{proof}
Suppose $G$ has a connected vertex cover $S$ of size $k$. Then, $G'[S\cup X]$ is connected and therefore it has a spanning tree $T$ where $|E(T)|=|S\cup X|-1=|E(G)|+k-1$.

Now let $T'$ be a tree that spans $X$ in $G'$ and $|E(T')|=|E(G)|+k-1$. Since $|X|=|E(G)|$, we have $|V(T')\setminus X|=k$. Define $S:=V(T')\setminus X$; we show that $S$ is a connected vertex cover for $G$. The fact that it is a vertex cover is clear because for every edge $e\in E(G)$ at least one of its endpoint is in $V(T')\setminus X$. Consider $u,v\in S$. Since $u,v\in V(T')$, there is a path $P=u,x_{e_1},w_1,x_{e_2},w_2,...,w_{l-1},x_{e_l},v$ in $T'$. Note that the path $(u,w_1),(w_1,w_2),...,(w_{l-1},v)$ is in $G[S]$. So we showed $G[S]$ is a connected subgraph of $G$ and $|S|=k$, as desired.
\end{proof}
This completes the proof of Theorem \ref{hardness}.

\bibliographystyle{alpha}
\bibliography{references}

\end{document}